\documentclass[journal]{IEEEtran}

\usepackage{array}
\usepackage[caption=false,font=normalsize,labelfont=sf,textfont=sf]{subfig}
\usepackage{textcomp}
\usepackage{stfloats}
\usepackage{url}
\usepackage{verbatim}
\usepackage[utf8]{inputenc}
\usepackage[T1]{fontenc}
\usepackage{url}
\usepackage{cite}
\usepackage[cmex10]{amsmath}
\usepackage{amssymb,amsfonts,mathtools,bm,bbm}
\usepackage{graphicx}
\usepackage{tikz}
\usepackage{pgfplots}
\pgfplotsset{compat=newest}
\pgfplotsset{plot coordinates/math parser=false}
\usepackage{xcolor}
\usepackage{algorithm}
\usepackage{algorithmic}

\newtheorem{lemma}{Lemma}
\newtheorem{prop}{Proposition}

\newtheorem{rem}{Remark}
\newtheorem{assumption}{Assumption}

\newcommand{\rmd}{\mathrm{d}}

\newcommand{\bbP}{\mathbb{P}}\newcommand{\rmp}{\mathrm{p}}

\newcommand{\bfA}{\mathbf{A}}\newcommand{\bfa}{\mathbf{a}}\newcommand{\sfA}{\mathsf{A}}
\newcommand{\bfB}{\mathbf{B}}
\newcommand{\bfC}{\mathbf{C}}

\newcommand{\bfF}{\mathbf{F}}
\newcommand{\bfG}{\mathbf{G}}
\newcommand{\bfH}{\mathbf{H}}\newcommand{\bfh}{\mathbf{h}}
\newcommand{\bfI}{\mathbf{I}}
\newcommand{\bfJ}{\mathbf{J}}
\newcommand{\bfK}{\mathbf{K}}

\newcommand{\bfM}{\mathbf{M}}

\newcommand{\bfP}{\mathbf{P}}
\newcommand{\bfQ}{\mathbf{Q}}
\newcommand{\bfR}{\mathbf{R}}
\newcommand{\bfS}{\mathbf{S}}\newcommand{\bfs}{\mathbf{s}}
\newcommand{\bfT}{\mathbf{T}}
\newcommand{\bfu}{\mathbf{u}}

\newcommand{\bfW}{\mathbf{W}}\newcommand{\bfw}{\mathbf{w}}
\newcommand{\bfX}{\mathbf{X}}\newcommand{\bfx}{\mathbf{x}}
\newcommand{\bfY}{\mathbf{Y}}\newcommand{\bfy}{\mathbf{y}}
\newcommand{\bfZ}{\mathbf{Z}}\newcommand{\bfz}{\mathbf{z}}\newcommand{\sfZ}{\mathsf{Z}}

\newcommand{\cD}{\mathcal{D}}

\newcommand{\cF}{\mathcal{F}}

\newcommand{\cK}{\mathcal{K}}

\newcommand{\cS}{\mathcal{S}}

\newcommand{\cU}{\mathcal{U}}

\newcommand{\C}{\mathbb{C}}
\newcommand{\E}{\mathbb{E}}
\newcommand{\herm}{\dagger}
\newcommand{\tr}[1]{{\rm tr}\!\left(#1\right)}

\newcommand{\statementend}{\hfill$\square$}
\newcommand{\definitionend}{\hfill$\lozenge$}

\newlength\figW
\newlength\figH

\newlength\figurewidth
\newlength\figureheight

\begin{document}

\title{Fundamental Limits of Dynamic Tracking and Communication: \\
A Stationary Policy Formulation}

\author{%
Mohammadreza~Bakhshizadeh~Mohajer,
Luca~Barletta,
Alex~Dytso,
and~Giuseppe~Caire%
\thanks{A preliminary version of this work has been accepted for presentation
at the 2026 IEEE Information Theory Workshop (ITW), Tempe, AZ, USA,
Nov.~10--13, 2026.}%
\thanks{Mohammadreza Bakhshizadeh Mohajer and Luca Barletta are with the
Dipartimento di Elettronica, Informazione e Bioingegneria,
Politecnico di Milano, Milano, Italy
(e-mail: \{mohammadreza.bakhshizadeh,luca.barletta\}@polimi.it).}%
\thanks{Alex Dytso is with Qualcomm  Inc.,
Bridgewater, NJ 08807 USA (e-mail: odytso2@gmail.com).}%
\thanks{Giuseppe Caire is with the Department of Electrical Engineering
and Computer Science, Technical University of Berlin, Berlin, Germany
(e-mail: caire@tu-berlin.de).}%
}

\maketitle

\begin{abstract}
We study a dynamic integrated sensing-and-communication problem in which a
multi-antenna transmitter communicates with a receiver while simultaneously
tracking a moving target governed by a stable Gauss--Markov model. We formulate
the sensing--communication trade-off as a rate--distortion problem, where the
communication rate is defined through the Verd\'u--Han information spectrum and
the sensing distortion is measured by the long-run minimum mean square tracking
error in angle and distance. We construct a tractable outer region by
lower bounding this distortion through the posterior Cram\'er--Rao bound (PCRB). The control problem is partially observed because
the target state is not directly available to the transmitter. We therefore
introduce a belief-augmented information state and show, under the stated
regularity assumptions, that stationary randomized Markov policies suffice for
the covariance-based scalarized problem used in the outer bound. 
We then derive, under the stated large-block conditions, a Gaussian
covariance-control outer bound 
by showing that, in the large-block regime, the sensing belief
transition becomes asymptotically covariance sufficient; consequently,
Gaussian signaling with the same stationary covariance policy preserves the
stationary PCRB asymptotically while maximizing the communication reward for
each transmit covariance. Numerical experiments compare an extended Kalman filter-based 
achievable frontier with a PCRB-based Gaussian covariance-control outer frontier.
\end{abstract}

\begin{IEEEkeywords}
Integrated sensing and communication (ISAC), dynamic tracking, posterior
Cram\'er--Rao bound, information spectrum, Markov decision processes.
\end{IEEEkeywords}

\section{Introduction}

\IEEEPARstart{T}{he} integration of sensing capabilities into communication
networks is expected to unlock a broad range of applications, from autonomous
coordination to environment-aware wireless systems. A central challenge in this
setting is that the waveform design that is desirable for communication is, in
general, not the one that is desirable for sensing, and vice versa. This tension
becomes even more pronounced in dynamic scenarios, where the sensing task is not
just to estimate a static parameter but to track a moving target over time.

In this paper, we study this dynamic problem from an information-theoretic
viewpoint. We consider a MIMO transmitter that communicates with a dedicated
receiver while simultaneously tracking a mobile target whose state evolves
according to a stable Gauss--Markov model. The communication performance is
measured through an information rate, whereas the sensing performance is measured
through a long-run minimum mean square tracking distortion in angle and distance. 
We use the posterior Cram\'er--Rao
bound (PCRB) to construct a tractable outer characterization of this distortion.

The main point of this paper is that the dynamic problem admits a stationary
information-state outer-bound formulation. Since the target state is not directly observed
at the transmitter, the physical state alone is not a valid Markov state for
control. We therefore introduce a belief-augmented state that summarizes the
sensing history relevant for prediction and tracking. Under the stated regularity 
assumptions, we show that the covariance-based
scalarized problem used in the outer bound admits an optimal stationary
randomized Markov policy on this information state.

Prior work on integrated sensing and communication (ISAC) has studied the reuse
of spectrum, hardware, and spatial degrees of freedom for simultaneous
communication and sensing. Surveys and overview papers cover architectures,
waveform and beamforming design, resource allocation, and fundamental-limit
perspectives for ISAC and related dual-functional radar--communication systems
\cite{eldar_survey,jrc_survey,jrc_sp_survey,IT_survey}. From an
information-theoretic viewpoint, recent ISAC formulations have characterized
capacity--distortion or rate--distortion trade-offs for mostly static or
memoryless sensing models
\cite{jssc_kobayashi,itjsc_ahmadi,fund_isac_caire}. These works provide the
basis for treating sensing performance as a distortion constraint, but they do
not address long-run causal tracking of a dynamically evolving target.

For multi-antenna Gaussian ISAC, a closely related line of work studies
CRB--rate trade-offs through transmit covariance optimization. These works
optimize the covariance matrix to maximize the MIMO Gaussian rate subject to CRB
constraints, thereby tracing Pareto boundaries for point and extended target
estimation problems~\cite{mimo_crb_rate,mimo_crb_rate_extended}. More recently,
for the non-tracking setting in which the target parameter remains fixed over a
single large block, \cite{yilmaz2026joint} established the
optimality of Gaussian signaling and reduced the asymptotic
communication--estimation Pareto-boundary characterization from an optimization
over input distributions to an optimization over transmit covariance matrices.
These formulations are closely related to ours because both the Gaussian communication rate and the
Fisher-information sensing metric depend on the transmit covariance matrix.
However, they are primarily one-shot or blockwise formulations, whereas the
problem considered here involves causal tracking of a dynamically evolving
target.

Dynamic tracking brings in Bayesian filtering and long-run control. The PCRB
provides a recursive lower bound for discrete-time nonlinear filtering
\cite{Tichavsky1998}, and recent ISAC tracking works have used extended Kalman filter (EKF) or PCRB-type criteria
for trajectory and resource-allocation design~\cite{uav_no_rate,open_loop}. In
particular,~\cite{open_loop} studies a Markov-state ISAC model, identifies
Bayesian filtering as the optimal sensing strategy, and characterizes the
communication--sensing trade-off for open-loop encoding. Our work differs in the
control structure and objective: we optimize over causal transmission policies
that adapt to the accumulated sensing information. 
This leads to a belief-state controlled Markov formulation and a
stationary-policy characterization of the covariance-based problem entering
the outer bound on the dynamic information-rate--tracking-distortion
trade-off.
Since the optimization in this work is over transmission policies, which may in
general be non-stationary, the resulting communication channel need not be
stationary or information stable a priori. For this reason, the usual
single-letter mutual information is not the appropriate performance metric. We
instead use the information-spectrum framework and measure communication through
the Verd\'u--Han inf-information rate. The general capacity formula for
single-user channels without feedback was established in~\cite{verdu_general},
and extensions to channels with feedback were developed in~\cite{GenCapFeedback}.
A general introduction to information spectrum can be found
in~\cite{han_information_spectrum}.

The paper is organized around the following contributions.
\begin{itemize}

    \item In Sections~\ref{sec:system_model}--\ref{sec:problem_formulation}, we
    formulate a dynamic ISAC rate--distortion problem in which a MIMO transmitter
    communicates with a receiver while tracking a target governed by a stable
    Gauss--Markov model. Communication is measured through the Verd\'u--Han
    inf-information rate, while tracking distortion is defined by the long-run 
    minimum mean square error in angle and distance and lower bounded through the PCRB.

    \item In Section~\ref{sec:stationary_policies}, we introduce a
    belief-augmented information state for the partially observed tracking
    problem and prove, under the stated regularity assumptions, that the
    covariance-based scalarized problem used in the outer bound admits an optimal 
    stationary randomized Markov policy.
    
    \item In Section~\ref{sec:gaussian_outer_bound}, we establish a
    large-block Gaussian covariance reduction. We show that, as the empirical
    transmit covariance concentrates around the selected covariance, the sensing
    belief transition becomes asymptotically covariance sufficient. Consequently,
    for a fixed stationary covariance policy, conditionally Gaussian signaling
    preserves the stationary PCRB asymptotically while achieving asymptotically no
    smaller communication rate than an arbitrary admissible signaling law with the
    same covariance policy.

    \item In Section~\ref{sec:simulations}, we evaluate the resulting 
    trade-off by comparing an EKF-based achievable frontier with a 
    PCRB-based Gaussian covariance-control outer frontier.
\end{itemize}

Relative to the preliminary conference version~\cite{mohajer_itw2026},
the present paper provides the complete technical development and proofs
of the stationary-policy and large-block Gaussian reduction results,
including the finite-block Gram-sufficiency analysis.

Detailed proofs of the stationary-policy and large-block Gaussian results are
provided in the appendices.

\subsection*{Notation}
Deterministic scalars are denoted by italic lowercase letters. Vectors are
denoted by bold lowercase letters, and matrices by bold uppercase letters; calligraphic letters denote sets. For a matrix $\bfA$, the notation $\bfA^{T}$, $\bfA^{\dagger}$, $\bfA^{-1}$, $\det(\bfA)$, $\tr{\bfA}$, and $\|\bfA \|_F$ denote its transpose, Hermitian transpose, inverse, determinant, trace, and Frobenius norm respectively. The identity matrix of appropriate dimension is denoted by $\bfI$, and $\bf0$ denotes a zero vector or matrix. The semidefinite ordering is denoted by $\succeq$. The expectation and probability operators are denoted by $\E [\cdot]$ and $\bbP(\cdot)$, respectively. For random variables $\bfX$ and $\bfY$, $P_{\bfX}$ denotes the distribution of $\bfX$, and $I(\bfX;\bfY|\cF)$ denotes conditional mutual information given $\cF$. The sigma-algebra generated by random variables is denoted by $\sigma(\cdot)$.

\section{System Model}
\label{sec:system_model}

Modern wireless communication systems operate over bandwidths $W$ of tens of
MHz, such that a time domain communication symbol has duration of order $1/W$,
i.e., a fraction of a microsecond. In contrast, dynamic targets moving at a few
km/h change their position and velocity on a time scale of order seconds, which
is at least seven orders of magnitude larger. Consequently, one channel use at
the target dynamics time scale corresponds to $L\gg 1$ channel uses at the
signaling time scale.

This separation of time scales motivates a two-time scale, state dependent
channel model indexed by the tracking interval $k$. We consider an ISAC
transmitter equipped with an $N_t$-antenna array, whose input letters are
$N_t\times L$ space-time blocks
\begin{equation}
    \bfX_k
    =
    [\bfx_{k,1},\ldots,\bfx_{k,L}]
    \in \C^{N_t\times L},
    \label{eq:block_signal}
\end{equation}
where $\bfx_{k,\ell}\in\C^{N_t}$ denotes the transmitted vector at signaling
channel use $\ell$ within tracking interval $k$ and is used jointly for
communication and sensing. The target state is assumed to remain constant
within each tracking interval and evolves from one tracking interval to the
next.

\subsection{Target State Evolution Model}

The target's kinematic state at time $k$ is described by
\begin{equation}
    \bfs_k=[\theta_k,\dot\theta_k,d_k,\dot d_k]^T,
    \label{eq:def_sk}
\end{equation}
where $\theta_k$ is the target angle, $\dot\theta_k$ is angular velocity, $d_k$
is distance, and $\dot d_k$ is radial velocity. We assume the linear stochastic
state-space model
\begin{equation}
    \bfs_k=\bfF\bfs_{k-1}+\bfw_k,
    \label{eq:state_evolution}
\end{equation}
where $\bfw_k\sim \mathcal N(\mathbf 0,\bfQ_w)$. The matrix $\bfF$ is assumed
Schur stable:
\begin{equation}
    \rho(\bfF)<1,
\end{equation}
where $\rho(\bfF)$ denotes the spectral radius of $\bfF$. Hence, the target
state process admits a unique stationary distribution. We initialize
$\bfs_0$ according to the corresponding stationary Gaussian law of
\eqref{eq:state_evolution}.

\subsection{Communication Channel}

At channel use $\ell$ within tracking block $k$, the signal received by the
communication user is
\begin{equation}
    \bfy_{c,k,\ell} = \bfH_{c,k}\bfx_{k,\ell} + \bfz_{c,k,\ell}, \qquad \ell=1,\ldots,L,
    \label{eq:comm_channel}
\end{equation}
where $\bfH_{c,k}\in \C^{N_{r,c}\times N_t}$ is the communication channel
matrix and $\bfz_{c,k,\ell}\sim\mathcal{CN}(\mathbf 0,\sigma_c^2\bfI)$ is the
communication noise. We assume that
$\{\bfH_{c,k}\}_{k\ge 1}$ is i.i.d. across tracking blocks, independent of the
target and noise processes, and that $\bfH_{c,k}$ is available at both the
transmitter and the communication receiver during block $k$.
We denote the communication-output block by
\begin{equation}
    \bfY_{c,k}
    =
    [\bfy_{c,k,1},\ldots,\bfy_{c,k,L}].
\end{equation}

\subsection{Sensing Channel}

At channel use $\ell$ within tracking block $k$, the sensing observation is
\begin{align}
    \bfy_{s,k,\ell}
    &=
    \bfh(\bfs_k,\bfx_{k,\ell})+\bfz_{s,k,\ell} \\
    &=
    \alpha(d_k)\,
    \bfa_{\mathrm{rx}}(\theta_k)
    \bfa_{\mathrm{tx}}^{\herm}(\theta_k)\bfx_{k,\ell}
    +
    \bfz_{s,k,\ell},
    \qquad
    \ell=1,\ldots,L,
    \label{eq:sensing_model_conf}
\end{align}
where
\begin{equation}
\bfa_{\mathrm{tx}}(\theta_k)\in\mathbb{C}^{N_t},
\qquad
\bfa_{\mathrm{rx}}(\theta_k)\in\mathbb{C}^{N_{r,s}}
\end{equation}
denote the transmit and receive steering vectors, respectively. Here, $N_{r,s}$ is the number of sensing receive
antennas. Moreover, $\bfz_{s,k,\ell}\sim \mathcal{CN}(\mathbf 0,\sigma_s^2\bfI)$, where $\sigma_s^2$ is the per channel use sensing-noise variance (independent
of the block length $L$), and
\begin{equation}
    \alpha(d_k)=\beta e^{-j2\pi d_k/\lambda_c}/d_k^2.
\end{equation}
Here $\beta$ is the radar cross-section of the moving target and $\lambda_c$ is the carrier
wavelength. We denote the sensing-output block by
\begin{equation}
    \bfY_{s,k}
    =
    [\bfy_{s,k,1},\ldots,\bfy_{s,k,L}].
\end{equation}
Thus, $\bfX_k$, $\bfY_{c,k}$, and $\bfY_{s,k}$ denote block-level quantities,
whereas $\bfx_{k,\ell}$, $\bfy_{c,k,\ell}$, and $\bfy_{s,k,\ell}$ denote the
corresponding channel-use-level vectors. Communication rate is normalized per
channel use, while sensing and target evolution are indexed at the tracking-block
time scale.

\section{Problem Formulation and Performance Metrics}
\label{sec:problem_formulation}

\subsection{Transmission Policy}

For each $k\geq 1$, let
\begin{equation}
    \cF_{k-1}
    =
    \sigma\!\left(
    \bfY_{s,1},\ldots,\bfY_{s,k-1},
    \bfX_1,\ldots,\bfX_{k-1},
    \bfH_{c,k}
    \right)
\end{equation}
denote the information available at the transmitter before selecting the
transmitted block $\bfX_k$. A causal transmission policy is a sequence of
conditional block-input laws
\begin{equation}
    \pi
    =
    \left\{
    P_{\bfX_k|\cF_{k-1}}
    \right\}_{k\geq 1}.
\end{equation}
Every admissible policy satisfies the per-channel-use power constraint
\begin{equation}
    \frac{1}{L}
    \E\!\left[
    \|\bfX_k\|_F^2
    \,\middle|\,
    \cF_{k-1}
    \right]
    \leq P
    \qquad \text{a.s.}
    \label{eq:power_constraint_conf}
\end{equation}
Let $\Pi_{\rm causal}$ denote the class of admissible causal transmission
policies satisfying~\eqref{eq:power_constraint_conf}.
The corresponding conditional per-channel-use covariance is
\begin{equation}
    \bfQ_{\bfx,k}
    \triangleq
    \frac{1}{L}
    \E\!\left[
    \bfX_k\bfX_k^{\herm}
    \,\middle|\,
    \cF_{k-1}
    \right],
    \label{eq:conditional_covariance}
\end{equation}
which satisfies $\bfQ_{\bfx,k}\succeq\mathbf 0$ and
$\tr{\bfQ_{\bfx,k}}\leq P$ almost surely. We therefore define the covariance
action set
\begin{equation}
    \cU
    \triangleq
    \left\{
    \bfQ\succeq\mathbf 0:
    \tr{\bfQ}\leq P
    \right\}.
    \label{eq:action_set_conf}
\end{equation}
Under full power transmission, the corresponding per channel use transmit
SNRs are $P/\sigma_c^2$ and $P/\sigma_s^2$ for the communication and sensing
channels, respectively.
Since $\bfH_{c,k}$ is contained in $\cF_{k-1}$, the selected block-input law,
and hence $\bfQ_{\bfx,k}$, may depend on the current communication channel.

For a realized transmitted block, define the empirical covariance
\begin{equation}
    \widehat{\bfQ}_{k,L}
    \triangleq
    \frac{1}{L}\bfX_k\bfX_k^{\herm}.
    \label{eq:empirical_block_covariance}
\end{equation}
At finite $L$, the realized empirical covariance
$\widehat{\bfQ}_{k,L}$ need not coincide with the conditional covariance
$\bfQ_{\bfx,k}$ in~\eqref{eq:conditional_covariance}. Their relation in the
large-block regime is introduced in Section~\ref{sec:gaussian_outer_bound}.

\subsection{Communication Rate: Verd\'u--Han Inf-Information Rate}

Due to the adaptive nature of the transmission policy, the input--output
process induced by $\pi$ need not be stationary or information stable. 
Therefore, rather than using a single-letter mutual information
expression, we measure communication using the Verd\'u--Han inf-information rate
\cite{verdu_general,han_information_spectrum}.

For an admissible policy $\pi$, let
$\bfX_{1:N}=(\bfX_1,\ldots,\bfX_N)$,
$\bfY_{c,1:N}=(\bfY_{c,1},\ldots,\bfY_{c,N})$, and
$\bfH_{c,1:N}=(\bfH_{c,1},\ldots,\bfH_{c,N})$.
The corresponding conditional information density is
\begin{align}
    i(\bfX_{1:N};\bfY_{c,1:N}&\mid\bfH_{c,1:N})
    \triangleq \nonumber\\
    &\log_2
    \frac{
    \rmd P_{\bfX_{1:N}\bfY_{c,1:N}\mid\bfH_{c,1:N}}
    }{
    \rmd\left(
    P_{\bfX_{1:N}\mid\bfH_{c,1:N}}
    \times
    P_{\bfY_{c,1:N}\mid\bfH_{c,1:N}}
    \right)
    } .
    \label{eq:conditional_information_density_conf}
\end{align}

The inf-information rate under $\pi$ is\footnote{
For a sequence of real-valued random variables $\{Z_N\}$, the limit inferior in
probability is defined as
$
\rmp\text{-}\liminf_{N\to\infty} Z_N
=
\sup\left\{a\in\mathbb{R}:\lim_{N\to\infty}\bbP(Z_N<a)=0\right\}.
$
}
\begin{equation}
    \underline{R}(\pi)
    =
    \rmp\text{-}\liminf_{N\to\infty}
    \frac{1}{NL}
    i(\bfX_{1:N};\bfY_{c,1:N}\mid\bfH_{c,1:N}).
    \label{eq:inf_information_rate_conf}
\end{equation}

\subsection{Belief State}
The transmitter does not directly observe the target state. For each $k\geq1$,
define the sensing history through block $k$ as
\begin{equation}
    \cS_k
    \triangleq
    \sigma\!\left(
        \bfY_{s,1},\ldots,\bfY_{s,k},
        \bfX_1,\ldots,\bfX_k
    \right),
\end{equation}
with $\cS_0$ denoting the trivial initial sigma-algebra. The predictive belief
before block $k$ is
\begin{equation}
    \Pi_k
    \triangleq
    P_{\bfs_k|\cS_{k-1}},
    \label{eq:belief_state_def}
\end{equation}
whereas the full information $\cF_{k-1}$ available for selecting $\bfX_k$
also contains the current communication channel $\bfH_{c,k}$.

By the independence assumptions of Section~\ref{sec:system_model},
$\bfH_{c,k}$ does not modify the target belief:
\begin{equation}
    P_{\bfs_k|\cS_{k-1},\bfH_{c,k}}
    =
    P_{\bfs_k|\cS_{k-1}}
    =
    \Pi_k .
    \label{eq:belief_channel_independence}
\end{equation}
Thus, $\Pi_k$ summarizes the sensing information relevant to the target,
while $\bfH_{c,k}$ remains separately available for transmission control.

\subsection{Tracking Distortion and PCRB Lower Bound}

We measure tracking performance through the long-run minimum mean-square error
in angle and distance. To obtain a tractable outer bound to the capacity region, we lower-bound
this distortion using the posterior Cram\'er--Rao bound (PCRB)
of~\cite{Tichavsky1998}. We first make explicit the dependence of the sensing
data information on the transmitted covariance.

For compactness, define the state dependent sensing matrix
\begin{equation}
    \bfH_s(\bfs)
    \triangleq
    \alpha(d)\,
    \bfa_{\mathrm{rx}}(\theta)
    \bfa_{\mathrm{tx}}^{\herm}(\theta),
\end{equation}
so that
$\bfh(\bfs,\bfx)=\bfH_s(\bfs)\bfx$
in~\eqref{eq:sensing_model_conf}.
For $i=1,\ldots,4$, let
\begin{equation}
    \bfG_i(\bfs)
    \triangleq
    \frac{\partial \bfH_s(\bfs)}{\partial s_i}.
\end{equation}
For fixed state $\bfs_k$ and transmitted block $\bfX_k$, the conditional
data Fisher information matrix is
\begin{equation}
    \bfJ_{D,k}^{\mathrm{cond}}(\bfs_k,\bfX_k)
    \triangleq
    \E\!\left[
    -\nabla_{\bfs_k}^{2}
    \log p(\bfY_{s,k}|\bfs_k,\bfX_k)
    \,\middle|\,
    \bfs_k,\bfX_k
    \right].
    \label{eq:conditional_data_fim}
\end{equation}
For the complex Gaussian sensing model, its $(i,j)$th entry is
\begin{align}
    \left[
    \bfJ_{D,k}^{\mathrm{cond}}(\bfs_k,\bfX_k)
    \right]_{i,j}
    &=
    \frac{2}{\sigma_s^2}
    \Re\!\left\{
    \sum_{\ell=1}^{L}
    \bfx_{k,\ell}^{\herm}
    \bfG_i^{\herm}(\bfs_k)
    \bfG_j(\bfs_k)
    \bfx_{k,\ell}
    \right\}
    \notag\\
    &=
    \frac{2L}{\sigma_s^2}
    \Re\!\left\{
    \tr{
    \bfG_i^{\herm}(\bfs_k)
    \bfG_j(\bfs_k)
    \widehat{\bfQ}_{k,L}
    }
    \right\}.
    \label{eq:conditional_data_fim_covariance}
\end{align}
Hence, for a fixed state, the block data Fisher information depends on the realized
waveform through its empirical covariance. To average this contribution over
the randomized block-input law, note that under an admissible causal policy,
$\bfX_k$ is conditionally independent of the unobserved state $\bfs_k$ given
$\cF_{k-1}$, i.e.,
\begin{equation}
    \bfX_k \perp \bfs_k | \cF_{k-1}.
    \label{eq:input_state_conditional_independence}
\end{equation}
Using $\widehat{\bfQ}_{k,L}=L^{-1}\bfX_k\bfX_k^{\herm}$,
and noting that $\bfG_i(\bfs_k)$ and $\bfG_j(\bfs_k)$ are fixed when
conditioning on $\bfs_k$, we obtain
\begin{align}
    &\E\!\left[
        \Re\!\left\{
        \tr{
        \bfG_i^{\herm}(\bfs_k)
        \bfG_j(\bfs_k)
        \widehat{\bfQ}_{k,L}
        }
        \right\}
        \,\middle|\,
        \cF_{k-1},\bfs_k
    \right]
    \nonumber\\
    &\qquad=
    \Re\!\left\{
        \tr{
        \bfG_i^{\herm}(\bfs_k)
        \bfG_j(\bfs_k)
        \E\!\left[
            \widehat{\bfQ}_{k,L}
            \,\middle|\,
            \cF_{k-1},\bfs_k
        \right]
        }
    \right\}
    \nonumber\\
    &\qquad=
    \Re\!\left\{
        \tr{
        \bfG_i^{\herm}(\bfs_k)
        \bfG_j(\bfs_k)
        \bfQ_{\bfx,k}
        }
    \right\},
    \label{eq:conditional_fim_waveform_average}
\end{align}
where the last equality follows from
\eqref{eq:input_state_conditional_independence},
\eqref{eq:conditional_covariance}, and
\eqref{eq:empirical_block_covariance}. Since
$P_{\bfs_k|\cF_{k-1}}=\Pi_k$ by
\eqref{eq:belief_channel_independence}, the remaining average over the
predictive state distribution yields the belief-averaged data-information
matrix $\bfJ_{D,k}(\Pi_k,\bfQ_{\bfx,k})$, whose entries are
\begin{align}
    &\left[
    \bfJ_{D,k}(\Pi_k,\bfQ_{\bfx,k})
    \right]_{i,j}
    =\nonumber\\
    &\qquad\frac{2L}{\sigma_s^2}
    \E_{\bfs_k\sim\Pi_k}
    \!\left[
    \Re\!\left\{
    \tr{
    \bfG_i^{\herm}(\bfs_k)
    \bfG_j(\bfs_k)
    \bfQ_{\bfx,k}
    }
    \right\}
    \right].
    \label{eq:data_fim_covariance_form}
\end{align}
Thus, conditional on the predictive belief, the expected data-information
contribution depends on the selected block-input law through its covariance.
This covariance dependence of the Fisher information should not be confused
with covariance sufficiency of the complete sensing experiment: at finite
$L$, the belief transition generally depends on the realized transmitted block,
a distinction used in Section~\ref{sec:gaussian_outer_bound}.

For an admissible causal policy $\pi$, let $\bbP^\pi$ denote the induced joint
law and define the policy-averaged data-information matrix
\begin{equation}
    \bfM_k^\pi
    \triangleq
    \E^\pi\!\left[
        \bfJ_{D,k}(\Pi_k,\bfQ_{\bfx,k})
    \right].
    \label{eq:policy_averaged_data_information}
\end{equation}
The expectation is joint over the policy-induced distribution of
$(\Pi_k,\bfQ_{\bfx,k})$; no independence between the predictive belief and
the adaptive covariance action is assumed.

Let $\bfJ_0$ denote the Fisher information matrix of the stationary Gaussian
initial-state distribution. For the linear Gaussian dynamics in
\eqref{eq:state_evolution}, the posterior-information recursion~\cite{Tichavsky1998}
reduces to
\begin{equation}
    \bfJ_k^\pi
    =
    \left(
    \bfQ_w
    +
    \bfF(\bfJ_{k-1}^\pi)^{-1}\bfF^T
    \right)^{-1}
    +
    \bfM_k^\pi .
    \label{eq:PCRB_recursion}
\end{equation}
For a fixed policy $\pi$, the sequence
$\{\bfJ_k^\pi\}_{k\geq0}$ is deterministic because all randomness due to the
target trajectory, sensing observations, transmitted waveforms, and policy
randomization has already been averaged under $\bbP^\pi$.

For a fixed policy $\pi$, define the Bayesian MMSE estimate of the target state after block $k$ as
\begin{equation}
    \widehat{\bfs}_k^{\rm MMSE}
    \triangleq
    \E^\pi\!\left[
        \bfs_k
        \,\middle|\,
        \cS_k
    \right].
    \label{eq:mmse_estimator_conf}
\end{equation}
The posterior Cram\'er--Rao bound implies
\begin{equation}
    \E^\pi\!\left[
        (\widehat{\bfs}_k^{\rm MMSE}-\bfs_k)
        (\widehat{\bfs}_k^{\rm MMSE}-\bfs_k)^T
    \right]
    \succeq
    (\bfJ_k^\pi)^{-1}.
    \label{eq:true_pcrb_mse_bound}
\end{equation}

Let
\begin{equation}
    \bfS
    =
    {\rm diag}([1,0,1,0]),
\end{equation}
which selects the angle and distance components of $\bfs_k$. The corresponding
long-run MMSE tracking distortion is
\begin{equation}
    \overline{D}_{\rm MMSE}(\pi)
    \triangleq
    \limsup_{N\to\infty}
    \frac{1}{N}
    \sum_{k=1}^{N}
    \E^\pi\!\left[
        \left\|
            \bfS
            (\widehat{\bfs}_k^{\rm MMSE}-\bfs_k)
        \right\|_2^2
    \right].
    \label{eq:long_run_mmse_distortion}
\end{equation}
The associated long-run PCRB-based distortion is
\begin{equation}
    \overline{D}(\pi)
    \triangleq
    \limsup_{N\to\infty}
    \frac{1}{N}
    \sum_{k=1}^{N}
    \tr{
        \bfS
        (\bfJ_k^\pi)^{-1}
        \bfS^T
    }.
    \label{eq:tracking_distortion_conf}
\end{equation}
By~\eqref{eq:true_pcrb_mse_bound},
\begin{equation}
    \overline{D}_{\rm MMSE}(\pi)
    \geq
    \overline{D}(\pi).
    \label{eq:long_run_pcrb_bound}
\end{equation}

\subsection{Rate--Distortion Region and Scalarized Objective}

A rate--distortion pair $(R,D)$ is achievable if there exists
$\pi\in\Pi_{\rm causal}$ such that
\begin{equation}
    \underline{R}(\pi)\geq R,
    \qquad
    \overline{D}_{\rm MMSE}(\pi)\leq D.
    \label{eq:achievable_rd_pair}
\end{equation}
The rate--distortion region $\mathcal C$ is the closure of all achievable pairs.
We consider a PCRB-based outer region
\begin{equation}
    \overline{\mathcal C}
    \triangleq
    \operatorname{cl}
    \left(
    \bigcup_{\pi\in\Pi_{\rm causal}}
    \left\{
        (R,D):
        \underline R(\pi)\geq R,\;
        \overline{D}(\pi)\leq D
    \right\}
    \right).
    \label{eq:pcrb_outer_region}
\end{equation}
By~\eqref{eq:long_run_pcrb_bound}, the policy specific PCRB relaxation
contains the corresponding achievable pairs. Taking the union over
$\Pi_{\rm causal}$ and then the closure therefore gives
\begin{equation}
    \mathcal C
    \subseteq
    \overline{\mathcal C}.
    \label{eq:rd_outer_inclusion}
\end{equation}

To characterize supporting boundary points of $\overline{\mathcal C}$, we consider the scalarized objective
\begin{equation}
    J_{\lambda}(\pi)
    \triangleq
    \underline{R}(\pi)-\lambda \overline{D}(\pi),
    \qquad \lambda\ge 0,
    \label{eq:J_lambda_conf}
\end{equation}
with optimal value
\begin{equation}
    J_{\lambda}^{\star}
    \triangleq
    \sup_{\pi\in\Pi_{\rm causal}} J_{\lambda}(\pi).
    \label{eq:J_star_conf}
\end{equation}
In the following sections, we derive a covariance-based outer relaxation of this problem and establish the sufficiency of stationary policies 
for that outer problem.

\section{Stationary Policy Formulation}
\label{sec:stationary_policies}

We next derive a stationary-policy formulation of the covariance-based
problem used in the outer relaxation of the rate--distortion problem in
Section~\ref{sec:problem_formulation}. 

\subsection{Information State and Communication Upper Reward}
\label{sec:augmented_state_actions}

Using the predictive belief $\Pi_k$ defined in~\eqref{eq:belief_state_def},
we define the pre-decision information state as
\begin{equation}
    \bfZ_k
    \triangleq
    \left(\Pi_k,\bfH_{c,k}\right).
    \label{eq:augmented_belief_state}
\end{equation}
The belief $\Pi_k$ summarizes the sensing history relevant for target
prediction, while $\bfH_{c,k}$ is the currently observed communication
channel. Both are available before selecting the block-input law
\begin{equation}
    a_k
    =
    P_{\bfX_k|\cF_{k-1}}(\cdot|\cF_{k-1}),
\end{equation}
which induces the covariance $\bfQ(a_k)\in\cU$ through
\eqref{eq:conditional_covariance}.

For a fixed causal policy $\pi$, the PCRB matrix
$\bfJ_k^\pi$ in~\eqref{eq:PCRB_recursion} is deterministic and therefore is
not a component of the controlled information state. The policy affects the
PCRB through the policy-averaged data-information sequence
$\{\bfM_k^\pi\}$.

Given $\bfZ_k$ and the selected block-input law $a_k$, the Bayesian filtering
recursion determines the distribution $\Pi_{k+1}$, while
$\bfH_{c,k+1}$ is generated independently according to the communication
channel law. Hence,
\begin{equation}
    P(d\bfz_{k+1}|
      \bfZ_{1:k},a_{1:k})
    =
    P(d\bfz_{k+1}|\bfZ_k,a_k),
    \label{eq:controlled_markov_property}
\end{equation}
and $\{\bfZ_k\}$ is a fully observed controlled Markov process.

For an admissible state--action pair $(\bfz,a)$, define the Gaussian
communication upper reward
\begin{equation}
    r_{\rm G}(\bfz,a)
    \triangleq
    \log_2\det\!\left(
        \bfI
        +
        \sigma_c^{-2}
        \bfH_c
        \bfQ(a)
        \bfH_c^{\herm}
    \right),
    \qquad
    \bfz=(\Pi,\bfH_c).
    \label{eq:one_step_reward}
\end{equation}
This reward results from revealing the adaptive covariance action to the
communication receiver as genie side information and then applying the
Gaussian maximum-entropy bound for the MIMO channel. Accordingly, define
\begin{equation}
    R_{\rm G}(\pi)
    \triangleq
    \liminf_{N\to\infty}
    \frac{1}{N}
    \sum_{k=1}^{N}
    \E^\pi
    \left[
        r_{\rm G}(\bfZ_k,a_k)
    \right].
    \label{eq:gaussian_average_reward}
\end{equation}
Let
$\bfQ_{1:N}\triangleq(\bfQ_{\bfx,1},\ldots,\bfQ_{\bfx,N})$ and define the
conditional spectral sup-mutual information rate
\begin{equation}
    \overline I_{Q,L}(\pi)
    \triangleq
    \rmp\text{-}\limsup_{N\to\infty}
    \frac{1}{NL}
    i\!\left(
        \bfQ_{1:N};\bfY_{c,1:N}
        \,\middle|\,
        \bfH_{c,1:N}
    \right).
    \label{eq:covariance_information_rate}
\end{equation}
Here $\rmp\text{-}\limsup$ denotes the limit superior in probability.
The conditional information-density chain rule and the Gaussian
maximum-entropy bound give
\begin{equation}
    \underline R(\pi)
    \leq
    R_{\rm G}(\pi)
    +
    \overline I_{Q,L}(\pi).
    \label{eq:covariance_genie_bound}
\end{equation}
A proof is given in Appendix~\ref{app:large_block_gaussian}.

We therefore consider the covariance-based scalarized problem
\begin{equation}
    J_{\lambda}^{\rm out}
    \triangleq
    \sup_{\pi\in\Pi_{\rm causal}}
    \left\{
        R_{\rm G}(\pi)
        -
        \lambda\overline D(\pi)
    \right\},
    \qquad
    \lambda\geq0,
    \label{eq:outer_scalarized_problem}
\end{equation}
and define
\begin{equation}
    \epsilon_{Q,L}
    \triangleq
    \max\!\left\{
        0,\,
        \sup_{\pi\in\Pi_{\rm causal}}
        \overline I_{Q,L}(\pi)
    \right\}.
    \label{eq:covariance_information_uniform_bound}
\end{equation}
By~\eqref{eq:covariance_genie_bound},
\begin{equation}
    J_{\lambda}^{\star}
    \leq
    J_{\lambda}^{\rm out}
    +
    \epsilon_{Q,L}.
    \label{eq:finite_block_outer_bound}
\end{equation}

\subsection{Sufficiency of Stationarity}

We now state the regularity conditions required for the stationary-policy
reduction of $J_{\lambda}^{\rm out}$. Let $\sfZ$ denote the information-state space and $\sfA$ the space
of admissible block-input laws.

\begin{assumption}[Controlled-Markov regularity]
\label{ass:mdp_regular}
The following conditions hold:
\begin{enumerate}
    \item The state and action spaces are Polish, the admissible
    state--action graph is Borel, and the controlled transition kernel
    $P(d\bfz'|\bfz,a)$ is weakly continuous.

    \item For every admissible causal policy, the expected empirical
    occupation measures are tight, and the set of invariant occupation
    measures is weakly compact.

    \item The functions $r_{\rm G}(\bfz,a)$ and
    $\bfJ_D(\Pi,\bfQ(a))$ are continuous on the admissible graph. Moreover,
    there exists $\bar g<\infty$ such that
    \begin{equation}
        {\bf0}
        \preceq
        \bfJ_D(\Pi,\bfQ(a))
        \preceq
        {\bar g} \bfI
    \end{equation}
    for every admissible state--action pair, while $r_{\rm G}$ satisfies the
    boundedness or uniform-integrability condition required for its average to be
    preserved under weak convergence of occupation measures.

    \item $\bfQ_w\succ\mathbf 0$, $\bfJ_0\succ\mathbf 0$, and $\bfF$ is
    Schur stable.

    \item For every admissible stationary randomized Markov policy, the
    long-run quantities $R_{\rm G}$ and $\overline D$ under the prescribed
    initial information-state distribution coincide with those under any
    invariant initialization of that policy.
\end{enumerate}
\definitionend
\end{assumption}

\begin{prop}[Sufficiency of Stationary Policies]
\label{prop:stationary_suffices_conf}
Under Assumption~\ref{ass:mdp_regular}, for every admissible causal policy
$\pi$ there exists a stationary randomized Markov policy $\varphi$ such that
\begin{align}
    R_{\rm G}(\varphi)
    &\geq
    R_{\rm G}(\pi),
    \label{eq:stationary_dominance_rate}\\
    \overline{D}(\varphi)
    &\leq
    \overline{D}(\pi).
    \label{eq:stationary_dominance_distortion}
\end{align}
Consequently, for every $\lambda\geq0$,
\begin{align}
    J_{\lambda}^{\rm out}
    &=
    \sup_{\varphi\in\Pi_{\rm stat}}
    \left\{
        R_{\rm G}(\varphi)
        -
        \lambda \overline{D}(\varphi)
    \right\},
    \label{eq:stationary_outer_problem}
\end{align}
where $\Pi_{\rm stat}$ denotes the class of stationary randomized Markov
policies on the information state $\bfZ_k$. Moreover, the supremum on the
right-hand side is attained.
\statementend
\end{prop}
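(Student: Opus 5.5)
The plan is the occupation-measure (linear programming) approach for average-cost controlled Markov processes, adapted to the fact that $\overline D$ is a nonlinear functional of the data-information sequence. Fix an admissible causal policy $\pi$. For each $N$, define the expected empirical state--action occupation measure
\begin{equation*}
\mu_N^\pi(B)\triangleq \frac1N\sum_{k=1}^N \bbP^\pi\big((\bfZ_k,a_k)\in B\big)
\end{equation*}
on the admissible graph. By the controlled Markov property \eqref{eq:controlled_markov_property} and Assumption~\ref{ass:mdp_regular}(2), $\{\mu_N^\pi\}$ is tight. Choose a subsequence $N_m$ along which the Ces\`aro averages in $R_{\rm G}(\pi)$ approach the $\liminf$, and pass to a further subsequence with $\mu_{N_m}^\pi\Rightarrow\mu$. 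A standard telescoping argument, using weak continuity of $P(d\bfz'|\bfz,a)$, shows that $\mu$ is invariant: its $\bfz$-marginal $\nu$ satisfies $\nu(\cdot)=\int P(\cdot|\bfz,a)\,\mu(d\bfz,da)$. Disintegrating $\mu(d\bfz,da)=\nu(d\bfz)\varphi(da|\bfz)$, which is possible because the spaces are Polish, gives a stationary randomized Markov policy $\varphi$ with invariant initialization $\nu$.

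\emph{Rate.} Continuity of $r_{\rm G}$ and the uniform-integrability clause in Assumption~\ref{ass:mdp_regular}(3) give $\int r_{\rm G}\,d\mu=\lim_m \int r_{\rm G}\,d\mu_{N_m}^\pi=R_{\rm G}(\pi)$. Under $\varphi$ started at $\nu$, every block has state--action law $\mu$, so the Ces\`aro average equals $\int r_{\rm G}\,d\mu$. By Assumption~\ref{ass:mdp_regular}(5) this value is also $R_{\rm G}(\varphi)$ under the prescribed initialization, which yields \eqref{eq:stationary_dominance_rate}.

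\emph{Distortion (main obstacle).} Under $\varphi$ at stationarity, $\bfM_k^\varphi\equiv\bar\bfM\triangleq\int\bfJ_D(\Pi,\bfQ(a))\,\mu(d\bfz,da)$, and $\bar\bfM=\lim_m \frac1{N_m}\sum_{k\le N_m}\bfM_k^\pi$ by continuity and the bound $\bfJ_D\preceq\bar g\bfI$. Define the Riccati-type map $\Phi(\bfJ,\bfM)=(\bfQ_w+\bfF\bfJ^{-1}\bfF^T)^{-1}+\bfM$, so that \eqref{eq:PCRB_recursion} reads $\bfJ_k^\pi=\Phi(\bfJ_{k-1}^\pi,\bfM_k^\pi)$. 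I then need the Jensen-type inequality
\begin{equation*}
\liminf_m \frac1{N_m}\sum_{k\le N_m}\tr{\bfS(\bfJ_k^\pi)^{-1}\bfS^T}\;\ge\; \tr{\bfS\bar\bfJ^{-1}\bfS^T},
\end{equation*}
where $\bar\bfJ$ is the fixed point of $\bfJ=\Phi(\bfJ,\bar\bfM)$. The relevant structure is as follows. The map $\bfJ\mapsto(\bfQ_w+\bfF\bfJ^{-1}\bfF^T)^{-1}$ is operator monotone and operator concave, being a composition of inversion with affine maps. $\Phi$ is jointly concave in $(\bfJ,\bfM)$. The map $\bfJ\mapsto\tr{\bfS\bfJ^{-1}\bfS^T}$ is convex and monotone decreasing.

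Let $\bar\bfJ_N$ be the Ces\`aro average of $\bfJ_k^\pi$. I would first show $\bar\bfJ_N\preceq\Phi(\bar\bfJ_N,\bar\bfM_N)+O(1/N)$ by averaging the recursion and applying concavity. Iterating this inequality with monotonicity then gives $\bar\bfJ_N\preceq\bar\bfJ+o(1)$, using $\bfQ_w\succ\mathbf0$, $\bfJ_0\succ\mathbf0$, Schur stability of $\bfF$, and the uniform bound on $\bfJ_D$ to obtain uniform boundedness and a contraction of $\Phi(\cdot,\bar\bfM)$ in a Thompson-type metric. Convexity of the trace-inverse then gives $\frac1N\sum_k\tr{\bfS(\bfJ_k^\pi)^{-1}\bfS^T}\ge \tr{\bfS\bar\bfJ_N^{-1}\bfS^T}\ge \tr{\bfS\bar\bfJ^{-1}\bfS^T}-o(1)$. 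Since $\bfJ_k^\varphi\to\bar\bfJ$ under $\varphi$ by the same contraction, and Assumption~\ref{ass:mdp_regular}(5) removes the dependence on the initialization, we obtain $\overline D(\varphi)=\tr{\bfS\bar\bfJ^{-1}\bfS^T}\le\overline D(\pi)$.

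One subtlety is that the subsequence was chosen for the rate, while $\overline D(\pi)$ is a $\limsup$. This direction is favourable: a $\liminf$ along any subsequence is at most the $\limsup$, so the bound carries over to $\overline D(\pi)$. Establishing the contraction and the uniform-boundedness estimate is the step I expect to be hardest.

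\emph{Consequences.} The dominance result immediately yields \eqref{eq:stationary_outer_problem}, because $\Pi_{\rm stat}\subseteq\Pi_{\rm causal}$. For attainment, I would parametrize stationary policies by invariant occupation measures $\mu$, which form a weakly compact set by Assumption~\ref{ass:mdp_regular}(2). The objective $\mu\mapsto\int r_{\rm G}\,d\mu-\lambda\,\tr{\bfS\,\bar\bfJ(\bar\bfM(\mu))^{-1}\bfS^T}$ is upper semicontinuous in $\mu$: the first term is continuous, $\bar\bfM(\mu)$ depends continuously on $\mu$, and the fixed point of $\Phi(\cdot,\bfM)$ depends continuously on $\bfM$ by the implicit-function/contraction argument above. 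Hence the supremum over this compact set is attained, and disintegrating the maximizing $\mu$ gives an optimal $\varphi$.
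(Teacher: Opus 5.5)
Your proposal is correct and follows essentially the same route as the paper: a weak limit of expected occupation measures, disintegration into a stationary randomized Markov policy, Jensen for $\tr{\bfS\bfJ^{-1}\bfS^T}$ together with concavity and monotonicity of the prediction-information map to compare the Ces\`aro-averaged PCRB information with the constant-information fixed point, and weak compactness plus continuity of $\bfM\mapsto\bfJ(\bfM)$ for attainment. The differences are minor: you take the subsequence realizing the rate $\liminf$ rather than the distortion $\limsup$ (either works), and you compare with the fixed point through a finite-$N$ approximate subsolution and a Thompson-metric contraction, whereas the paper first passes to an exact limiting subsolution $\widehat{\bfJ}\preceq\mathcal A(\widehat{\bfJ})+\bfM_\mu$ and then uses monotone iteration and Riccati uniqueness.
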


\begin{IEEEproof}[Proof sketch]
Fix an admissible causal policy $\pi$ and select a subsequence realizing the
superior limit in $\overline{D}(\pi)$. Tightness of the corresponding expected empirical
occupation measures yields a further subsequence converging weakly to an
invariant occupation measure $\mu$, whose disintegration induces a stationary
randomized Markov policy $\varphi$. Let
$\bfM_\mu\triangleq
\int\bfJ_D(\Pi,\bfQ(a))\,\mu(d\bfz,da)$.

Averaging the deterministic PCRB recursion along the selected subsequence and
using the monotonicity and matrix concavity of the prediction-information map
shows that the limiting average information matrix is dominated by the unique
fixed point associated with $\bfM_\mu$. Convexity and order monotonicity of
$\tr{\bfS\bfJ^{-1}\bfS^T}$ then yield $\overline{D}(\varphi)\leq \overline{D}(\pi)$. Weak
convergence along the same subsequence gives
$R_{\rm G}(\varphi)\geq R_{\rm G}(\pi)$, while compactness of the invariant
occupation-measure set and continuity of the stationary objective establish
attainment. The supporting technical lemmas and the complete proof are 
given in Appendices~\ref{app:stationary_lemmas}
and~\ref{app:proof_stationary_suffices}, respectively.
\end{IEEEproof}

\subsection{Stationary Operation and Outer-Bound Metrics}
\label{subsec:stationary_operation}

Let $\pi\in\Pi_{\rm stat}$ be a stationary randomized Markov policy and let
$\mu_\pi$ denote an invariant state--action occupation measure induced by
$\pi$. Under initialization with the state marginal of $\mu_\pi$, the
communication upper reward is
\begin{equation}
    R_{\rm G}(\pi)
    =
    \E_{\mu_\pi}
    \left[
        r_{\rm G}(\bfZ,a)
    \right].
    \label{eq:ergodic_rate_conf}
\end{equation}
The corresponding average sensing data-information matrix is
\begin{equation}
    \bar{\bfM}_{\pi}
    \triangleq
    \E_{\mu_\pi}
    \left[
        \bfJ_D(\Pi,\bfQ(a))
    \right].
    \label{eq:stationary_data_information}
\end{equation}
The policy-averaged PCRB recursion then has a constant information contribution,
and its steady-state information matrix $\bfJ_{\pi}$ is the unique
positive-definite solution of
\begin{equation}
    \bfJ_{\pi}
    =
    \left(
        \bfQ_w
        +
        \bfF\bfJ_{\pi}^{-1}\bfF^T
    \right)^{-1}
    +
    \bar{\bfM}_{\pi}.
    \label{eq:stationary_pcrb_fixed_point}
\end{equation}
The corresponding stationary PCRB distortion is
\begin{equation}
    \overline{D}(\pi)
    =
    \tr{
        \bfS
        \bfJ_{\pi}^{-1}
        \bfS^T
    }.
    \label{eq:steady_state_distortion_conf}
\end{equation}
By Assumption~\ref{ass:mdp_regular}, these stationary values also equal
the corresponding long-run quantities under the prescribed initialization.

\section{Gaussian Covariance-Based Outer Bound}
\label{sec:gaussian_outer_bound}

For a fixed target parameter over a single large block, Gaussian signaling and
a covariance-based characterization of the communication--estimation trade-off
were established in~\cite{yilmaz2026joint}. In the dynamic
tracking setting considered here, however, 
Proposition~\ref{prop:stationary_suffices_conf} reduces the covariance-based
scalarized problem to stationary randomized Markov policies. It remains to justify
whether the within-block signaling law can also be restricted to Gaussian
signaling. Such a restriction is not exact at finite block length: two
signaling laws with the same nominal covariance can generate different
realizations of the empirical covariance and, consequently, different
posterior-belief transitions. We therefore establish the Gaussian reduction in
the large-block regime.

\subsection{Finite-Block Sufficient Statistics and Large-Block Limit}

Stacking the $L$ sensing observations in
\eqref{eq:sensing_model_conf} gives
\begin{equation}
    \bfY_{s,k}
    =
    \bfH_s(\bfs_k)\bfX_k+\bfZ_{s,k}.
    \label{eq:block_sensing_gaussian_section}
\end{equation}
For the large-block comparison, we consider admissible block-signaling
families satisfying
\begin{equation}
    \widehat{\bfQ}_{k,L}
    \xrightarrow[L\to\infty]{\mathbb P}
    \bfQ_k,
    \label{eq:large_block_empirical_covariance}
\end{equation}
uniformly on compact subsets of the admissible state--covariance graph, where
$\bfQ_k\in\cU$ is the covariance selected by the stationary covariance policy.
All optimization level asymptotic statements below are restricted to this
same admissible large-block signaling class.
The per channel use sensing-noise variance in the asymptotic family is allowed
to depend on $L$ and is assumed to satisfy
\begin{equation}
    \frac{\sigma_{s,L}^{2}}{L}
    \to
    \bar{\sigma}_s^{2},
    \qquad
    \bar{\sigma}_s^{2}\in(0,\infty),
    \label{eq:large_block_sensing_normalization}
\end{equation}
as $L\to\infty$. This scaling yields a nondegenerate limiting sensing
experiment. It is imposed only for the large-block asymptotic analysis; the
finite block model in Section~\ref{sec:system_model} retains the fixed
per channel use variance $\sigma_s^2$. In that model, the sensing data
information in~\eqref{eq:data_fim_covariance_form} scales as
$L/\sigma_s^2$.

\begin{lemma}[Finite-block Gram sufficiency]
\label{lem:block_gram_sufficiency}
For a fixed transmitted block $\bfX_k$, the posterior distribution of $\bfs_k$
generated by~\eqref{eq:block_sensing_gaussian_section} depends on
$(\bfY_{s,k},\bfX_k)$ only through
\begin{equation}
    \bfT_{k,L}
    \triangleq
    \bfY_{s,k}\bfX_k^{\herm},
    \qquad
    \bfG_{k,L}
    \triangleq
    \bfX_k\bfX_k^{\herm}.
    \label{eq:block_sufficient_statistics}
\end{equation}
Moreover, conditional on the true target state and on $\bfX_k$, the
distribution of the random posterior depends on $\bfX_k$ only through
$\bfG_{k,L}$.
\statementend
\end{lemma}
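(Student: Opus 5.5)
The plan is to expand the complex Gaussian likelihood of the stacked model \eqref{eq:block_sensing_gaussian_section} and read off a factorization. Conditional on $(\bfs_k,\bfX_k)$, the noise $\bfZ_{s,k}$ has i.i.d.\ $\mathcal{CN}(0,\sigma_s^2)$ entries. Hence
\begin{equation*}
p(\bfY_{s,k}|\bfs_k,\bfX_k)\propto\exp\!\Big(-\sigma_s^{-2}\big\|\bfY_{s,k}-\bfH_s(\bfs_k)\bfX_k\big\|_F^2\Big).
\end{equation*}
Expanding the Frobenius norm gives three terms:
\begin{equation*}
\|\bfY_{s,k}\|_F^2-2\Re\,\tr{\bfH_s^{\herm}(\bfs_k)\bfY_{s,k}\bfX_k^{\herm}}+\tr{\bfH_s^{\herm}(\bfs_k)\bfH_s(\bfs_k)\bfX_k\bfX_k^{\herm}}.
\end{equation*}
The first term does not depend on $\bfs_k$. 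The second depends on the data only through $\bfT_{k,L}$, and the third only through $\bfG_{k,L}$.

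For the first claim, note that by \eqref{eq:input_state_conditional_independence}, given $\cS_{k-1}$ (and $\bfH_{c,k}$), the prior on $\bfs_k$ is $\Pi_k$ and does not depend on $\bfX_k$. Bayes' rule then gives
\begin{equation*}
\Pi_k^{+}(d\bfs)\propto\exp\!\Big(2\sigma_s^{-2}\Re\,\tr{\bfH_s^{\herm}(\bfs)\bfT_{k,L}}-\sigma_s^{-2}\tr{\bfH_s^{\herm}(\bfs)\bfH_s(\bfs)\bfG_{k,L}}\Big)\Pi_k(d\bfs).
\end{equation*}
The $\bfs$-independent factor $\exp(-\sigma_s^{-2}\|\bfY_{s,k}\|_F^2)$ cancels in the normalization. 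To make this rigorous I will argue measurably: define the posterior as a Borel map $\Psi(\Pi_k,\bfT,\bfG)$ and check that the normalizing integral is finite and positive. Since $\bfH_s(\bfs)$ is bounded for $d$ bounded away from zero, the integrand is positive and the integral is finite whenever $\int e^{c\|\bfH_s\|}\,d\Pi_k<\infty$. This holds because $|\alpha(d)|=\beta/d^2$ and the steering vectors have bounded entries; on sets where $d$ is small I will assume, as the model implicitly does, that $\Pi_k$ puts no mass at $d=0$.

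For the second claim, I fix $\bfs_k$ and $\bfX_k$ and compute the conditional law of $\bfT_{k,L}$:
\begin{equation*}
\bfT_{k,L}=\bfH_s(\bfs_k)\bfG_{k,L}+\bfZ_{s,k}\bfX_k^{\herm}.
\end{equation*}
The noise term is a linear image of circular Gaussian noise, so it is zero-mean proper complex Gaussian. Its rows are independent, each with covariance
\begin{equation*}
\E\big[(\bfX_k\bfz)(\bfX_k\bfz)^{\herm}\big]=\sigma_s^2\bfX_k\bfX_k^{\herm}=\sigma_s^2\bfG_{k,L},
\end{equation*}
where $\bfz^{\herm}$ denotes a noise row. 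The pseudo-covariance vanishes by circularity. A proper Gaussian law, possibly degenerate when $\bfG_{k,L}$ is singular, is determined by its mean and covariance, so the law of $\bfT_{k,L}$ given $(\bfs_k,\bfX_k)$ depends on $\bfX_k$ only through $\bfG_{k,L}$. The posterior is the map $\Psi(\Pi_k,\bfT_{k,L},\bfG_{k,L})$, with $\bfG_{k,L}$ deterministic given $\bfX_k$ and $\Pi_k$ fixed by the conditioning. Its conditional distribution is therefore the pushforward of this law and depends on $\bfX_k$ only through $\bfG_{k,L}$.

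The main obstacle is the bookkeeping rather than the algebra. I must make sure the prior $\Pi_k$ used in Bayes' rule is not altered by conditioning on $\bfX_k$, which is where \eqref{eq:input_state_conditional_independence} and \eqref{eq:belief_channel_independence} enter. I must also handle the singular-$\bfG_{k,L}$ case, where $\bfT_{k,L}$ lies in a subspace and the Gaussian has no density; this is why I will characterize its law by its characteristic function rather than by a density.
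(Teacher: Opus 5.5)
Your proposal is correct and follows the paper's proof step for step. You expand $\|\bfY_{s,k}-\bfH_s(\bfs_k)\bfX_k\|_F^2$, cancel the $\bfs_k$-free term $\|\bfY_{s,k}\|_F^2$ in the Bayesian normalization, and observe that $\bfZ_{s,k}\bfX_k^{\herm}$ has independent circular Gaussian rows with covariance $\sigma_s^2\bfG_{k,L}$; your extra care that $\Pi_k$ is unaffected by conditioning on $\bfX_k$, and your handling of singular $\bfG_{k,L}$, are harmless refinements the paper leaves implicit.

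One side remark should be dropped. The condition $\int e^{c\|\bfH_s(\bfs)\|}\,\Pi_k(\rmd\bfs)<\infty$ actually fails for the model's Gaussian beliefs: near $d=0$ it behaves like $\int e^{c'/d^2}$, and excluding the single point $d=0$ does not help. It is also unnecessary, because the reduced integrand equals $e^{\sigma_s^{-2}\|\bfY_{s,k}\|_F^2}\,e^{-\sigma_s^{-2}\|\bfY_{s,k}-\bfH_s(\bfs)\bfX_k\|_F^2}\le e^{\sigma_s^{-2}\|\bfY_{s,k}\|_F^2}$, so the normalizing constant is finite for every prior.
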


\begin{IEEEproof}
For fixed $\bfX_k$, the sensing likelihood is proportional to
\begin{equation}
    \exp\!\left(
        -\frac{1}{\sigma_{s,L}^{2}}
        \left\|
            \bfY_{s,k}-\bfH_s(\bfs_k)\bfX_k
        \right\|_F^2
    \right).
\end{equation}
Expanding the squared norm gives
\begin{align}
    &
    \left\|
        \bfY_{s,k}-\bfH_s(\bfs_k)\bfX_k
    \right\|_F^2
    \notag\\
    &=
    \tr{\bfY_{s,k}\bfY_{s,k}^{\herm}}
    -2\Re\!
    \left\{
        \tr{
            \bfH_s^{\herm}(\bfs_k)
            \bfY_{s,k}\bfX_k^{\herm}
        }
    \right\}
    \notag\\
    &\quad
    +
    \tr{
        \bfH_s^{\herm}(\bfs_k)
        \bfH_s(\bfs_k)
        \bfX_k\bfX_k^{\herm}
    }.
    \label{eq:block_likelihood_expansion}
\end{align}
The first term is independent of $\bfs_k$ and cancels in the Bayesian
normalization. Hence the posterior depends on the data only through
$(\bfT_{k,L},\bfG_{k,L})$. Under the true state $\bfs_k^0$,
\begin{equation}
    \bfT_{k,L}
    =
    \bfH_s(\bfs_k^0)\bfG_{k,L}
    +
    \bfZ_{s,k}\bfX_k^{\herm},
\end{equation}
and, conditional on $\bfX_k$, each row of
$\bfZ_{s,k}\bfX_k^{\herm}$ is circular Gaussian with covariance proportional
to $\bfG_{k,L}$. The conditional distribution of the random posterior
therefore depends on $\bfX_k$ only through $\bfG_{k,L}$.
\end{IEEEproof}

To obtain the covariance-only large-block limit, normalize the sufficient
statistics as
\begin{equation}
    \widehat{\bfQ}_{k,L}
    =
    \frac{1}{L}\bfG_{k,L},
    \qquad
    \overline{\bfT}_{k,L}
    =
    \frac{1}{L}\bfT_{k,L}.
    \label{eq:normalized_sufficient_statistics}
\end{equation}
Under the true state,
\begin{equation}
    \overline{\bfT}_{k,L}
    =
    \bfH_s(\bfs_k^0)\widehat{\bfQ}_{k,L}
    +
    \bfW_{k,L},
    \qquad
    \bfW_{k,L}
    \triangleq
    \frac{1}{L}\bfZ_{s,k}\bfX_k^{\herm}.
    \label{eq:normalized_cross_statistic}
\end{equation}
Using~\eqref{eq:large_block_empirical_covariance} and
\eqref{eq:large_block_sensing_normalization}, the conditional distribution of
these normalized sufficient statistics converges to a limit determined only by
the selected covariance $\bfQ_k$. Under continuity of the Bayesian update, all
admissible within-block signaling laws satisfying
\eqref{eq:large_block_empirical_covariance} and implementing the same
covariance therefore induce the same limiting belief-transition kernel.

We formalize the stationary consequence under the following regularity
condition.

\begin{assumption}[Stationary large-block limit]
\label{ass:large_block_stationary}
The Bayesian posterior update is continuous, on the admissible set, as a
function of the predictive belief and the normalized sufficient statistics.
For the admissible stationary large-block signaling families, the invariant
state--covariance occupation measures are uniformly tight in $L$, and the
convergence of the induced controlled information-state kernels to the
covariance-only large-block limit established above is uniform on compact
subsets of the admissible state--covariance graph. For every stationary
randomized covariance policy arising either directly or as a weak limit of
such occupation measures, the limiting controlled process has a unique
invariant occupation measure.
\definitionend
\end{assumption}

\subsection{Asymptotic Gaussian Reduction}

For a selected covariance $\bfQ_k$, the corresponding conditionally i.i.d.
Gaussian implementation is
\begin{equation}
    \bfx_{k,\ell}
    \sim
    \mathcal{CN}(\mathbf 0,\bfQ_k),
    \qquad
    \ell=1,\ldots,L.
\end{equation}

For an implementation $i\in\{A,G\}$, let
\begin{equation}
    R_{{\rm com},L}^{i}
    \triangleq
    \frac{1}{L}
    I_i\!\left(
        \bfX_k;\bfY_{c,k}
        \,\middle|\,
        \bfH_{c,k},\bfQ_k
    \right)
    \label{eq:genie_conditioned_stationary_rate}
\end{equation}
denote the corresponding block mutual information under its stationary law,
where the mutual information is evaluated under implementation $i$.

\begin{prop}[Large-block Gaussian signaling optimality]
\label{prop:large_block_gaussian}
Suppose Assumption~\ref{ass:mdp_regular},
Assumption~\ref{ass:large_block_stationary}, and
\eqref{eq:large_block_empirical_covariance} hold. Fix a stationary randomized
covariance policy and compare an arbitrary admissible within-block signaling
implementation, denoted by $A$, with its conditionally i.i.d. Gaussian
implementation, denoted by $G$, using the same covariance policy. Then
\begin{equation}
    D_{{\rm PCRB},L}^{G}
    -
    D_{{\rm PCRB},L}^{A}
    \longrightarrow 0,
    \qquad
    L\to\infty,
    \label{eq:large_block_pcrb_equality}
\end{equation}
and the corresponding genie-conditioned stationary communication rates satisfy
\begin{equation}
    \liminf_{L\to\infty}
    \left(
        R_{{\rm com},L}^{G}
        -
        R_{{\rm com},L}^{A}
    \right)
    \geq 0.
    \label{eq:large_block_rate_dominance}
\end{equation}
Hence, for any fixed stationary covariance policy, conditionally Gaussian
signaling is asymptotically no worse for the corresponding covariance-based 
scalarized objective.
\statementend
\end{prop}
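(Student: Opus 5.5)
The two claims are handled separately. The PCRB claim~\eqref{eq:large_block_pcrb_equality} follows from convergence of kernels and invariant measures. The rate claim~\eqref{eq:large_block_rate_dominance} follows from a per-block maximum-entropy comparison.

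\emph{Distortion part.} Fix the stationary randomized covariance policy, i.e., a kernel $\varphi(d\bfQ\mid\bfz)$ on $\bfZ=(\Pi,\bfH_c)$, together with its implementations $A$ and $G$. Denote the induced controlled information-state kernels by $K_L^A$ and $K_L^G$.

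By Lemma~\ref{lem:block_gram_sufficiency}, the belief update depends on $(\bfY_{s,k},\bfX_k)$ only through $(\overline{\bfT}_{k,L},\widehat{\bfQ}_{k,L})$. By~\eqref{eq:normalized_cross_statistic}, conditional on $\bfX_k$, the noise term $\bfW_{k,L}$ is circular Gaussian with row covariance $(\sigma_{s,L}^2/L)\,\widehat{\bfQ}_{k,L}$. This covariance converges to $\bar\sigma_s^2\bfQ_k$ by~\eqref{eq:large_block_empirical_covariance} and~\eqref{eq:large_block_sensing_normalization}, for both implementations; for $G$ the convergence follows from the law of large numbers. Hence the law of $(\overline{\bfT}_{k,L},\widehat{\bfQ}_{k,L})$ given $(\bfs_k,\bfQ_k)$ converges weakly to the same Gaussian limit $\mathcal L_\infty(\bfs_k,\bfQ_k)$ under $A$ and $G$.

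Continuity of the Bayesian update (Assumption~\ref{ass:large_block_stationary}) and the continuous mapping theorem then give $K_L^A\to K_\infty$ and $K_L^G\to K_\infty$ weakly, uniformly on compacts, where $K_\infty$ is the common covariance-only limit kernel.

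Next, let $\mu_L^A$ and $\mu_L^G$ be the invariant state--covariance occupation measures. By uniform tightness, every subsequence has a further weakly convergent subsequence. Passing to the limit in the invariance equation $\mu_L K_L=\mu_L$, using the uniform-on-compacts kernel convergence and weak continuity (Assumption~\ref{ass:mdp_regular}), shows that any limit point is invariant for $K_\infty$ with policy $\varphi$. By uniqueness of that invariant measure, $\mu_L^A$ and $\mu_L^G$ both converge to the same $\mu_\infty$.

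The distortion is then a continuous function of the occupation measure:
\begin{itemize}
\item $\bfJ_D$ is continuous and bounded by $\bar g\bfI$, so $\bar{\bfM}^i_L=\int\bfJ_D\,d\mu_L^i\to\int\bfJ_D\,d\mu_\infty$ for $i\in\{A,G\}$. Here $\bfJ_D$ is understood in the normalized form with $\sigma_{s,L}^2/L$, which remains bounded.
\item The unique positive-definite fixed point of~\eqref{eq:stationary_pcrb_fixed_point} depends continuously on $\bar{\bfM}$. This follows from the implicit function theorem, or from the contraction of the Riccati map, which holds because $\bfQ_w\succ\mathbf 0$ and $\bfF$ is stable.
\item $\tr{\bfS\bfJ^{-1}\bfS^T}$ is continuous in $\bfJ$.
\end{itemize}
Therefore both $D^A_{{\rm PCRB},L}$ and $D^G_{{\rm PCRB},L}$ converge to the same limit, and their difference tends to zero.

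\emph{Rate part.} Condition on $(\bfH_{c,k},\bfQ_k)=(\bfH_c,\bfQ)$. Under $A$, the maximum-entropy bound, concavity of $\log\det$, and the constraint $\E[\widehat{\bfQ}_{k,L}\mid\cdot]=\bfQ$ give
\[
\tfrac1L I_A(\bfX_k;\bfY_{c,k}\mid\bfH_c,\bfQ)\le\tfrac1L\sum_\ell\log_2\det\!\left(\bfI+\sigma_c^{-2}\bfH_c\bfQ_{\ell}\bfH_c^\herm\right)\le r_{\rm G}(\bfH_c,\bfQ).
\]
Here $\bfQ_\ell$ is the per-use covariance, whose average over $\ell$ equals $\bfQ$. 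Equality holds under $G$, so $R^G_{{\rm com},L}=\int r_{\rm G}\,d\mu^G_L$ while $R^A_{{\rm com},L}\le\int r_{\rm G}\,d\mu^A_L$.

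The two integrals converge to the same limit $\int r_{\rm G}\,d\mu_\infty$. This uses the common limit $\mu_\infty$ from the distortion part, continuity of $r_{\rm G}$, and its uniform integrability, which holds because the trace is at most $P$. Hence
\[
\liminf_{L\to\infty}\left(R^G_{{\rm com},L}-R^A_{{\rm com},L}\right)\ge\lim_{L\to\infty}\left(\int r_{\rm G}\,d\mu^G_L-\int r_{\rm G}\,d\mu^A_L\right)=0.
\]
Combining the two parts, the scalarized objective under $G$ is asymptotically at least as large as under $A$.

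\emph{Main obstacle.} The delicate step is upgrading kernel convergence to convergence of invariant measures. The kernels change with $L$ and the belief space is infinite-dimensional, so the argument relies on the uniform-on-compacts convergence and tightness in Assumption~\ref{ass:large_block_stationary}. I would treat this by the standard argument: for bounded continuous $f$, $\int K_L f\,d\mu_L-\int K_\infty f\,d\mu_\infty\to0$, proved by splitting the integral over a compact set of high $\mu_L$-mass and its complement. A secondary subtlety is interpreting $\bfJ_D$ under the noise rescaling $\sigma_{s,L}^2/L\to\bar\sigma_s^2$, so that the Fisher information has a nondegenerate limit.
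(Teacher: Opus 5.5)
Your proposal is correct and follows the paper's proof step for step: Gram sufficiency and the normalized-statistic limit give a common covariance-only kernel; uniform tightness plus uniqueness of the limiting invariant measure send $\mu_L^A$ and $\mu_L^G$ to the same $\mu_\infty$; continuity of $\bfM\mapsto\bfJ(\bfM)$ and of $d$ yields the PCRB claim; and the per-use maximum-entropy/concavity bound, tight for conditionally i.i.d.\ Gaussian inputs, yields the rate claim. The differences are minor (you justify continuity of the fixed-point map via the implicit function theorem or Riccati contraction instead of the paper's compactness-plus-uniqueness argument, and you make explicit the $L$-dependent factor $2L/\sigma_{s,L}^2\to 2/\bar\sigma_s^2$ in $\bfJ_D$ that the paper leaves implicit), but note that uniform integrability of $r_{\rm G}$ does not follow from $\tr{\bfQ}\le P$ alone: it also needs the fixed channel law, which Assumption~\ref{ass:mdp_regular} provides.
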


\begin{IEEEproof}[Proof sketch]
For a fixed stationary covariance policy, the arbitrary and Gaussian
implementations induce the same limiting invariant state--covariance
occupation measure under Assumption~\ref{ass:large_block_stationary}.
Since $\bfJ_D(\Pi,\bfQ)$ depends on the signaling action only through its
covariance, continuity of the stationary PCRB fixed point yields
\eqref{eq:large_block_pcrb_equality}. For communication, the Gaussian
maximum-entropy property together with concavity of $\log\det$ upper-bounds
the per-channel-use mutual information of any admissible length-$L$ block law
by the Gaussian reward associated with its average covariance, with equality
for conditionally i.i.d. Gaussian signaling. This gives
\eqref{eq:large_block_rate_dominance}. The complete proof is given in
Appendix~\ref{app:large_block_gaussian}.
\end{IEEEproof}

Proposition~\ref{prop:large_block_gaussian} concerns a fixed stationary
covariance policy. To justify the Gaussian restriction at the optimization
level, let $V_{\lambda,L}^{A}$ denote the optimum of the stationary
covariance-control outer problem over all admissible block-signaling
implementations satisfying~\eqref{eq:large_block_empirical_covariance}, and
let $V_{\lambda,L}^{G}$ denote the corresponding optimum restricted to
conditionally i.i.d. Gaussian signaling.

To connect this covariance reduction to the original Verd\'u--Han objective,
we impose the following large-block condition.
\begin{assumption}[Vanishing covariance information rate]
\label{ass:vanishing_covariance_information}
Let $\epsilon_{Q,L}^{\rm LB}$ denote the quantity in
\eqref{eq:covariance_information_uniform_bound} with the supremum restricted
to the admissible large-block signaling families satisfying
\eqref{eq:large_block_empirical_covariance}. Then
\begin{equation}
    \epsilon_{Q,L}^{\rm LB}
    =
    o(1),
    \qquad
    L\to\infty.
    \label{eq:vanishing_covariance_information}
\end{equation}
\end{assumption}
For regular parametric families of fixed finite dimension, standard
parametric-information bounds give $O(\log L)$ information growth over a
block and hence $O(\log L/L)$ per channel use
\cite{ClarkeBarron1990,HausslerOpper1997}.

\begin{prop}[Optimization-level Gaussian reduction]
\label{prop:gaussian_outer_bound_conf}
Under the assumptions of
Proposition~\ref{prop:large_block_gaussian}, for every $\lambda\geq0$,
\begin{equation}
    \limsup_{L\to\infty}
    \left(
        V_{\lambda,L}^{A}
        -
        V_{\lambda,L}^{G}
    \right)
    \leq 0.
    \label{eq:optimized_gaussian_limsup}
\end{equation}
Since
$V_{\lambda,L}^{G}\leq V_{\lambda,L}^{A}$
for every $L$, it follows that
\begin{equation}
    V_{\lambda,L}^{A}
    =
    V_{\lambda,L}^{G}
    +
    o(1),
    \qquad
    L\to\infty.
    \label{eq:optimized_gaussian_equality}
\end{equation}
If, in addition, Assumption~\ref{ass:vanishing_covariance_information} holds and
$J_{\lambda,L}^{\star,\rm LB}$ denotes the scalarized
communication--PCRB problem in~\eqref{eq:J_star_conf} restricted to the same
admissible large-block signaling class, then
\begin{equation}
    J_{\lambda,L}^{\star,\rm LB}
    \leq
    V_{\lambda,L}^{G}
    +
    o(1),
    \qquad
    L\to\infty.
    \label{eq:gaussian_outer_bound_conf}
\end{equation}
\statementend
\end{prop}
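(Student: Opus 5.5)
The plan is to first prove the optimization-level bound \eqref{eq:optimized_gaussian_limsup} by a near-optimizer argument combined with Proposition~\ref{prop:large_block_gaussian}, and then chain it with \eqref{eq:finite_block_outer_bound} to obtain \eqref{eq:gaussian_outer_bound_conf}.

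For each $L$, I pick an admissible stationary implementation $A_L$ whose covariance-control objective is within $1/L$ of $V_{\lambda,L}^{A}$. Proposition~\ref{prop:stationary_suffices_conf} makes this possible, since it lets me restrict attention to stationary randomized Markov policies. Let $\varphi_L$ denote the induced stationary covariance policy, and let $G_L$ be the conditionally i.i.d.\ Gaussian implementation of the same $\varphi_L$. Then
\[
V_{\lambda,L}^{G}\ge R^{G}_{{\rm com},L}-\lambda D^{G}_{{\rm PCRB},L},
\]
so
\[
V_{\lambda,L}^{A}-V_{\lambda,L}^{G}\le \tfrac1L+\bigl(R^{A_L}_{{\rm com},L}-R^{G_L}_{{\rm com},L}\bigr)+\lambda\bigl(D^{G_L}_{{\rm PCRB},L}-D^{A_L}_{{\rm PCRB},L}\bigr).
\]
The communication difference is nonpositive for every $L$, and this holds uniformly. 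The reason is the per-block bound in the proof sketch of Proposition~\ref{prop:large_block_gaussian}: by Gaussian maximum entropy and concavity of $\log\det$, $\frac1L I(\bfX_k;\bfY_{c,k}|\bfH_{c,k},\bfQ_k)\le r_{\rm G}$, with equality under $G$. This needs no limit in $L$.

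The main obstacle is the distortion term. Proposition~\ref{prop:large_block_gaussian} is stated for a \emph{fixed} covariance policy, whereas here $\varphi_L$ varies with $L$, so I need a uniform version. I argue by contradiction. Suppose that along a subsequence the distortion gap stays above some $\delta>0$. By the uniform tightness in Assumption~\ref{ass:large_block_stationary}, I pass to a further subsequence along which the invariant state--covariance occupation measures of both $A_L$ and $G_L$ converge weakly. Because the controlled kernels converge to the covariance-only limit uniformly on compacts, and both implementations share $\varphi_L$, every limit point is invariant for the same limiting stationary covariance policy. The uniqueness clause in Assumption~\ref{ass:large_block_stationary} then forces both sequences to have the same limit $\mu$.

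Continuity and boundedness of $\bfJ_D$ from Assumption~\ref{ass:mdp_regular}(3) give $\bar\bfM^{A_L},\bar\bfM^{G_L}\to\bfM_\mu$. The fixed point of \eqref{eq:stationary_pcrb_fixed_point} depends continuously on $\bar\bfM$; this follows from the implicit function theorem, or from monotone convergence of the Riccati-type map, using $\bfQ_w\succ\mathbf0$ and Schur stability of $\bfF$. Hence both distortions converge to the same value, which contradicts the assumed gap $\delta$. Taking $\limsup$ then yields \eqref{eq:optimized_gaussian_limsup}. The inequality $V^{G}\le V^{A}$ is immediate because Gaussian signaling is one admissible implementation, so \eqref{eq:optimized_gaussian_equality} follows.

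For \eqref{eq:gaussian_outer_bound_conf}, I restrict \eqref{eq:covariance_genie_bound} and \eqref{eq:finite_block_outer_bound} to the large-block class. This gives $J^{\star,\rm LB}_{\lambda,L}\le J^{\rm out,LB}_{\lambda,L}+\epsilon^{\rm LB}_{Q,L}$. Proposition~\ref{prop:stationary_suffices_conf} shows that $J^{\rm out,LB}_{\lambda,L}$ equals the stationary covariance-control value, and this value is bounded by $V^{A}_{\lambda,L}$. One point must be checked here: that $R_{\rm G}$ is the genie-conditioned per-block rate used in $V^A$, and that the stationary reduction stays within the admissible large-block class. Finally, Assumption~\ref{ass:vanishing_covariance_information} together with \eqref{eq:optimized_gaussian_equality} gives $J^{\star,\rm LB}_{\lambda,L}\le V^{G}_{\lambda,L}+o(1)$.
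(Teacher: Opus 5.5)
Your overall structure (near-optimizers, a weak-limit argument, then chaining with $\epsilon_{Q,L}^{\rm LB}$) is sound, and your last step matches the paper. The central comparison, however, has a genuine gap.

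You compare $A_L$ with the Gaussian implementation $G_L$ of the \emph{same} $L$-dependent policy $\varphi_L$. You then assert that the weak limits of $\mu^{A_L}=\nu^{A_L}\otimes\varphi_L$ and $\mu^{G_L}=\nu^{G_L}\otimes\varphi_L$ are invariant for the same limiting covariance policy, so that uniqueness identifies them. This does not follow, for three reasons:
\begin{itemize}
\item Weak convergence of the joint measures gives no control of the conditional kernels, and $\varphi_L$ varies with $L$.
\item At finite $L$ the two implementations generate different belief transitions, since these depend on the realized Gram matrix (Lemma~\ref{lem:block_gram_sufficiency}, Remark~\ref{rem:gaussian_scope}). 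So $\nu^{A_L}$ and $\nu^{G_L}$ may charge different, even mutually singular, sets of beliefs.
\item Near-optimality pins down $\varphi_L$ only $\nu^{A_L}$-a.e. You could redefine it arbitrarily on a $\nu^{A_L}$-null set without changing the objective of $A_L$, and such a set may carry most of the beliefs visited under $G_L$. Then $G_L$ can perform arbitrarily poorly, and the two weak limits can disintegrate into different policies.
\end{itemize}
The uniqueness clause of Assumption~\ref{ass:large_block_stationary} holds per policy, so it cannot bridge this. Proposition~\ref{prop:large_block_gaussian} and Lemma~\ref{lem:app_common_invariant_large_L} are fixed-policy statements, and no uniform-in-policy version is available from the assumptions.

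The paper avoids this by never using $G_L$. It proceeds as follows:
\begin{itemize}
\item Take a weak limit $\mu_\infty$ of the near-optimal $\mu_n^{A}$ and show it is invariant for the limiting covariance-controlled kernel.
\item Disintegrate $\mu_\infty=\nu_\infty\kappa_\infty$.
\item Compare, at every $L$, against the Gaussian implementation of the \emph{fixed} policy $\kappa_\infty$.
\end{itemize}
Lemma~\ref{lem:app_common_invariant_large_L} then gives $\mu_{\infty,L}^{G}\Rightarrow\mu_\infty$. Hence $V_{\lambda,L_n}^{G}\geq\mathcal J_\lambda(\mu_{\infty,L_n}^{G})\to\mathcal J_\lambda(\mu_\infty)=\lim_n\mathcal J_\lambda(\mu_n^{A})$.

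Separately, your claim that the communication difference is nonpositive for every $L$ is incorrect. The pointwise bound $\frac1L I\le r_{\rm G}$ is integrated against two different invariant measures. All you get is $R^{A_L}_{{\rm com},L}-R^{G_L}_{{\rm com},L}\le\int r_{\rm G}\,d\mu^{A_L}-\int r_{\rm G}\,d\mu^{G_L}$, whose sign is unknown. In the paper's objective $\mathcal J_\lambda$, both rates are already $\int r_{\rm G}\,d\mu$. This term vanishes only asymptotically, through the same occupation-measure convergence that handles the distortion; the same is true in the paper's proof of Proposition~\ref{prop:large_block_gaussian}.
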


\begin{IEEEproof}[Proof sketch]
Select nearly optimal admissible stationary policies along a sequence
realizing the superior limit in~\eqref{eq:optimized_gaussian_limsup}.
Uniform tightness yields a limiting invariant occupation measure and hence a
limiting stationary covariance policy. Its conditionally Gaussian
implementation converges to the same covariance-controlled invariant limit,
so continuity of the communication reward and stationary PCRB yields
\eqref{eq:optimized_gaussian_limsup}. Finally, applying the argument of~\eqref{eq:finite_block_outer_bound} to
the same admissible large-block signaling class and using
Proposition~\ref{prop:stationary_suffices_conf} gives
\begin{equation}
    J_{\lambda,L}^{\star,\rm LB}
    \leq
    V_{\lambda,L}^{A}
    +
    \epsilon_{Q,L}^{\rm LB}.
\end{equation}
Together with~\eqref{eq:optimized_gaussian_equality} and
Assumption~\ref{ass:vanishing_covariance_information}, this gives
\eqref{eq:gaussian_outer_bound_conf}. The complete proof is provided in
Appendix~\ref{app:large_block_gaussian}.
\end{IEEEproof}

\begin{rem}[Scope of the Gaussian reduction]
\label{rem:gaussian_scope}
The Gaussian reduction is asymptotic in the tracking-block length. At finite
$L$, Lemma~\ref{lem:block_gram_sufficiency} establishes dependence on the
realized Gram matrix $\bfX_k\bfX_k^{\herm}$; it does not assert that two
same-covariance signaling laws induce identical posterior beliefs. The
covariance-only belief transition, and hence the asymptotic preservation of the
stationary PCRB under Gaussian replacement, follows only after the large-block
limit.
\end{rem}

Under the stated assumptions, 
Proposition~\ref{prop:gaussian_outer_bound_conf} reduces the
large-block outer problem asymptotically to covariance control, but the resulting
optimization is generally not a static convex program. For a fixed
communication channel, $r_{\rm G}(\bfz,\bfQ)$ is concave in $\bfQ$ and the
covariance action set $\cU$ is convex. The difficulty is dynamic: each
covariance action affects the sensing experiment, the future predictive
belief, and hence the invariant occupation measure. Moreover, the stationary
PCRB depends nonlinearly on the policy-induced average data information
through~\eqref{eq:stationary_pcrb_fixed_point}. The problem therefore remains
an average-reward control problem on the information state
$(\Pi_k,\bfH_{c,k})$.

\section{Numerical Examples}
\label{sec:simulations}

We numerically evaluate the covariance-control formulation developed above by
comparing an EKF-based achievable frontier with a PCRB-based Gaussian
outer frontier. Rates are reported in bits per channel use. Throughout this
section, $\bfQ_k$ denotes the covariance action $\bfQ_{\bfx,k}$ selected at
tracking time $k$.

The target follows the Gauss--Markov model in~\eqref{eq:state_evolution}, and
the sensing model is given by~\eqref{eq:sensing_model_conf} with
$\lambda_c=10$ cm and $\beta=200$. We use uniform linear arrays with steering
vector
\begin{equation}
    [\bfa(\theta)]_i
    =
    e^{j2\pi \frac{\Delta}{\lambda_c}(i-1)\sin(\theta)},
    \qquad
    i=1,\ldots,N_t,
\end{equation}
where $\Delta=\lambda_c/2$.

Let $\bfu_{\rm tx}(\theta)$ and $\bfu_{\rm rx}(\theta)$ denote the unit-norm transmit and receive vectors obtained by normalizing $\bfa(\theta)$. For simplicity, the numerical example considers a deterministic far-field line of sight communication channel, with the BS and communication user arrays seeing each other at $\theta_c=45^\circ$ relative to their respective boresight directions
\begin{equation}
    \bfH_c
    =
    \sqrt{N_{r,c}N_t}\,g\,
    \bfu_{\mathrm{rx}}(\theta_c)
    \bfu_{\mathrm{tx}}^{\herm}(\theta_c),
    \qquad
    |g|^2=\sigma_h^2 .
\end{equation}
This is a degenerate i.i.d. channel. Since $\bfH_c$ is fixed in the numerical
experiments, it is not a variable component of the policy state. For
convenience, let $r(\bfH,\bfQ)$ denote the Gaussian communication reward in
\eqref{eq:one_step_reward} evaluated at channel $\bfH$ and covariance $\bfQ$.
For the rank-one channel above,
\begin{equation}
    r(\bfH_c,\bfQ)
    =
    \log_2\left(
        1+
        \frac{N_{r,c}N_t\sigma_h^2}{\sigma_c^2}
        \bfu_{\mathrm{tx}}^{\herm}(\theta_c)
        \bfQ
        \bfu_{\mathrm{tx}}(\theta_c)
    \right).
\end{equation}

Unless otherwise stated, the numerical parameters are listed in
Table~\ref{tab:sim_params}. The communication reward uses noise variance
$\sigma_c^2$. For sensing, define the block normalized noise variance
\begin{equation}
    \sigma_{s,\mathrm{blk}}^2
    \triangleq
    \frac{\sigma_s^2}{L}.
    \label{eq:block_noise_numerical}
\end{equation}
In the large-block regime,
$\widehat{\bfQ}_{k,L}\xrightarrow{\mathbb P}\bfQ_k$, and an
information-equivalent normalized sensing statistic may be written as
\begin{equation}
    \widetilde{\bfY}_{s,k}
    =
    \bfH_s(\bfs_k)\bfQ_k^{1/2}
    +
    \widetilde{\bfZ}_{s,k},
    \qquad
    [\widetilde{\bfZ}_{s,k}]_{ij}
    \sim
    \mathcal{CN}(0,\sigma_{s,\mathrm{blk}}^2).
    \label{eq:block_sensing_numerical}
\end{equation}
Here $\bfQ_k^{1/2}$ denotes the Hermitian positive semidefinite square root
of $\bfQ_k$. Its data-information coefficient satisfies
$2/\sigma_{s,\mathrm{blk}}^2=2L/\sigma_s^2$, consistently with
\eqref{eq:data_fim_covariance_form}. Accordingly, the numerical sensing experiment is parameterized by $P/\sigma_{s,\mathrm{blk}}^2=LP/\sigma_s^2$, so $L$ and $\sigma_s^2$
need not be specified separately. The target is initialized with radial
speed $\dot d_0=14$ m/s, corresponding to $50$ km/h.

\begin{table}[t]
    \centering
    \caption{Simulation Parameters}
    \label{tab:sim_params}
    \begin{tabular}{|l|c|}
        \hline
        Parameter & Value \\
        \hline
        Time step duration ($\Delta t$) & $0.05$ s \\
        Burn-in steps ($T_{\rm burn-in}$) & $20$ \\
        Learning steps ($T_{\rm learn}$) & $40$ \\
        Evaluation steps ($T_{\rm eval}$) & $80$ \\
        Communication transmit SNR ($P/\sigma_c^2$) & $20$ dB \\
        Communication noise variance ($\sigma_c^2$) & $1$ \\
        Block sensing SNR ($P/\sigma_{s,\mathrm{blk}}^2$) & $20$ dB \\
        MPC horizon ($H$) & $6$ \\
        Rollouts per action ($M$) & $8$ \\
        Monte Carlo trials ($N_{\rm MC}$) & $20{,}000$ \\
        Transmit antennas ($N_t$) & $8$ \\
        Communication receive antennas ($N_{r,c}$) & $8$ \\
        Sensing receive antennas ($N_{r,s}$) & $8$ \\
        Process-noise covariance matrix ($\bfQ_w$) & $10^{-5}\bfI_{4}$ \\
        \hline
    \end{tabular}
\end{table}

\subsection{Covariance Policy Classes}

The EKF-based achievable frontier uses a structured covariance family to
obtain a concrete implementable signaling and tracking scheme, whereas the
PCRB-based Gaussian outer frontier optimizes over the full covariance action
set $\cU$.
For the EKF-based achievable frontier, we use the following structured
covariance family. Given the
predicted target angle $\hat\theta_{k|k-1}$ and a sensing-beam offset
$\Delta\theta$, define
\begin{equation}
    \bfu_s=\bfu_{\mathrm{tx}}(\hat\theta_{k|k-1}+\Delta\theta),
    \qquad
    \bfu_c=\bfu_{\mathrm{tx}}(\theta_c).
\end{equation}
For an action $a=(\rho_{\rm comm},\Delta\theta,\gamma)$, the transmit covariance
is
\begin{align}
    \bfQ(a;\hat{\bfs}_{k|k-1})
    &=
    P_s\bfu_s\bfu_s^{\herm}
    +
    P_c\bfu_c\bfu_c^{\herm}
    +
    \frac{P_{\rm iso}}{N_t}\bfI, \\
    P_{\rm iso} &= \gamma P, \\
    P_c &= \rho_{\rm comm}(1-\gamma)P, \\
    P_s &= (1-\rho_{\rm comm})(1-\gamma)P .
\end{align}
Here $\rho_{\rm comm}$ controls the power split between communication and
sensing, $\Delta\theta$ controls the sensing beam direction, and $\gamma$
controls the isotropic component. In the reported rank-one experiments,
$\gamma=0$.

\subsection{Achievable EKF-Based Scheme}

For each covariance action $\bfQ_k$, the transmitter uses conditionally
i.i.d. Gaussian signaling within the block,
\begin{equation}
    \bfx_{k,\ell}
    \sim
    \mathcal{CN}(\mathbf 0,\bfQ_k),
    \qquad
    \ell=1,\ldots,L.
\end{equation}
The EKF processes the block observation in
\eqref{eq:block_sensing_numerical}; its implementation is given in
Appendix~\ref{app:EKF}, following~\cite{simon2006optimal}. The distortion is the actual Monte Carlo squared
tracking error in angle and distance.

The deployed policies use the predictive-belief feature
\begin{equation}
    \scalebox{0.96}{$\displaystyle
    \phi_k
    =
    \left[
        \hat{\bfs}_{k|k-1}^{T},
        \log\operatorname{diag}(\bfP_{k|k-1})^{T},
        \operatorname{offdiag}\!\left(
            \operatorname{Corr}(\bfP_{k|k-1})
        \right)^{T}
    \right]^{T},
    $}
\end{equation}
where $\hat{\bfs}_{k|k-1}$ and $\bfP_{k|k-1}$ are the predicted state estimate
and prediction-error covariance, respectively. The operators
$\operatorname{Corr}(\cdot)$ and $\operatorname{offdiag}(\cdot)$ denote the
associated correlation matrix and its stacked off-diagonal entries.

The achievable distortion is
\begin{align}
    \widehat D_{\rm ach}
    &=
    \frac{1}{N_{\rm MC}T_{\rm eval}}
    \sum_{n=1}^{N_{\rm MC}}
    \sum_{k=1}^{T_{\rm eval}}
    \Big[
    {\rm wrap}(\hat\theta_{n,k}-\theta_{n,k})^2
    \notag\\
    &\qquad\qquad\qquad\qquad
    +
    (\hat d_{n,k}-d_{n,k})^2
    \Big],
    \label{eq:achievable_distortion_sim}
\end{align}
where ${\rm wrap}(x)$ maps angle errors to $[-\pi,\pi)$.

\subsection{PCRB-Based Gaussian Outer Frontier}

For a deployed stationary covariance policy $\pi$, the PCRB is evaluated at
the policy level rather than separately for individual Monte Carlo
realizations. Let $\Pi_{n,k}$ and $\bfQ_{n,k}$ denote the predictive belief
and covariance action in trial $n$ at evaluation time $k$. The
policy-averaged data-information matrix is estimated as
\begin{equation}
    \widehat{\bfM}_{\pi}
    =
    \frac{1}{N_{\rm MC}T_{\rm eval}}
    \sum_{n=1}^{N_{\rm MC}}
    \sum_{k=1}^{T_{\rm eval}}
    \bfJ_D(\Pi_{n,k},\bfQ_{n,k}),
    \label{eq:MC_policy_information}
\end{equation}
where $\bfJ_D(\Pi_{n,k},\bfQ_{n,k})$ denotes the conditional data-information
matrix averaged with respect to the predictive belief. The theoretical outer 
bound is defined in terms of the Bayesian predictive belief and is independent 
of any particular estimator. For numerical evaluation, we approximate this belief
by a Gaussian distribution with mean $\hat{\bfs}_{n,k|k-1}$ and covariance
$\bfP_{n,k|k-1}$ and evaluate the resulting angle--range expectation using
cubature. We also evaluated the
predictive belief using a particle filter and observed no appreciable change
in the resulting PCRB curves for the scenarios considered here. More generally,
for models that yield strongly non-Gaussian or multimodal predictive
distributions, particle-based belief representations may provide a more
accurate approximation.

Given $\widehat{\bfM}_{\pi}$, a single deterministic stationary PCRB
information matrix is obtained as the positive-definite solution of
\begin{equation}
    \widehat{\bfJ}_{\pi}
    =
    \left(
        \bfQ_w
        +
        \bfF
        \widehat{\bfJ}_{\pi}^{-1}
        \bfF^T
    \right)^{-1}
    +
    \widehat{\bfM}_{\pi}.
    \label{eq:MC_stationary_PCRB_fixed_point}
\end{equation}
The corresponding PCRB-based tracking distortion is
\begin{equation}
    \widehat D_{\rm PCRB}
    =
    \tr{
        \bfS
        \widehat{\bfJ}_{\pi}^{-1}
        \bfS^T
    }
    =
    [\widehat{\bfJ}_{\pi}^{-1}]_{1,1}
    +
    [\widehat{\bfJ}_{\pi}^{-1}]_{3,3}.
    \label{eq:pcrb_distortion_sim}
\end{equation}
Thus, the Monte Carlo realizations are used to evaluate the joint
policy-induced average in~\eqref{eq:MC_policy_information}, after which the
nonlinear PCRB fixed-point equation is solved once for the deployed policy.

The outer policy uses the same predictive-belief feature $\phi_k$ defined
above. The PCRB information matrix $\widehat{\bfJ}_{\pi}$ is a policy-level
performance quantity and is not included in the policy state.

The corresponding policy-average communication reward is
\begin{equation}
    \widehat R
    =
    \frac{1}{N_{\rm MC}T_{\rm eval}}
    \sum_{n=1}^{N_{\rm MC}}
    \sum_{k=1}^{T_{\rm eval}}
    r(\bfH_c,\bfQ_{n,k}).
\end{equation}
Each outer-frontier point is the pair
$(\widehat R,\widehat D_{\rm PCRB})$
associated with the deployed stationary covariance policy.

\subsection{MPC-Distilled Stationary Policies}

The stationary covariance policies used in the numerical evaluation are
obtained by an MPC-distillation procedure~\cite[Ch.~2]{MPC}. For each
scalarization parameter $\lambda$, a finite-horizon MPC teacher is run along
a training trajectory. The action class and sensing metric depend on the
frontier being computed. For the EKF-based achievable frontier, the teacher
uses the structured covariance family and the realized EKF tracking error. For
the PCRB-based Gaussian outer frontier, the teacher searches over the full
covariance action set $\cU$ and evaluates every candidate covariance sequence
using the policy-averaged PCRB construction described below.

For the PCRB-based outer frontier, each candidate covariance sequence is
evaluated over $M$ predictive-belief rollouts of horizon $H$. The teacher
averages the corresponding data-information matrices and communication
rewards, solves one stationary PCRB fixed-point equation for the resulting
average data information, and selects the sequence maximizing the scalarized
reward $\widehat R_i-\lambda\widehat D_i$. Only the first covariance action is
applied according to the receding-horizon principle. The complete teacher
calculation is given in Algorithm~\ref{alg:nn_policy}.

Teacher labels are collected over $T_{\rm learn}=40$ learning steps. At each
labeled state, the predictive-belief feature $\phi_i$ and the corresponding
optimal full-covariance action $\bfQ_i^\star\in\cU$ produced by the MPC teacher
are stored. After a $20$-step burn-in, the deployed policy standardizes the
current predictive-belief feature, selects the nearest stored feature, and
applies its associated covariance action. Since this feature-to-covariance
mapping is time invariant, the deployed policy is stationary. The policy is
then evaluated for $T_{\rm eval}=80$ steps. The complete training and
deployment procedure is summarized in Algorithm~\ref{alg:nn_policy}.

\begin{algorithm}
    \caption{MPC-Distilled Stationary Covariance Policy}
    \label{alg:nn_policy}
    \begin{algorithmic}[1]
        \REQUIRE Scalarization parameter $\lambda$, covariance action set
        $\cU$, rollout horizon $H$, number of rollout samples $M$, and
        learning length $T_{\rm learn}$.
        \ENSURE Stationary covariance policy $\pi_{\rm NN}$.

        \STATE Initialize the teacher dataset $\cD\leftarrow\emptyset$.

        \STATE \textit{Phase 1: MPC teacher labeling}
        \FOR{training step $i=1$ to $T_{\rm learn}$}
            \STATE Form the predictive-belief feature $\phi_i$.

            \STATE For each candidate covariance sequence
            $\{\bfQ_{i+\ell}\}_{\ell=0}^{H-1}$, compute
            \[
                \widehat{\bfM}_{i}
                =
                \frac{1}{MH}
                \sum_{m=1}^{M}
                \sum_{\ell=0}^{H-1}
                \bfJ_D
                \left(
                    \Pi_{i+\ell}^{(m)},
                    \bfQ_{i+\ell}
                \right)
            \]
            and
            \[
                \widehat R_i
                =
                \frac{1}{MH}
                \sum_{m=1}^{M}
                \sum_{\ell=0}^{H-1}
                r\!\left(
                    \bfH_{c,i+\ell}^{(m)},
                    \bfQ_{i+\ell}
                \right).
            \]

            \STATE Solve
            \[
                \widehat{\bfJ}_i
                =
                \left(
                    \bfQ_w
                    +
                    \bfF\widehat{\bfJ}_i^{-1}\bfF^T
                \right)^{-1}
                +
                \widehat{\bfM}_i
            \]
            and evaluate
            \[
                \widehat D_i
                =
                \tr{
                    \bfS
                    \widehat{\bfJ}_i^{-1}
                    \bfS^T
                }.
            \]

            \STATE Select the covariance sequence maximizing
            $\widehat R_i-\lambda\widehat D_i$ and denote its first action by
            $\bfQ_i^\star$.

            \STATE Store $(\phi_i,\bfQ_i^\star)$ in $\cD$.

            \STATE Apply $\bfQ_i^\star$ and advance the training trajectory.
        \ENDFOR

        \STATE Standardize the stored predictive-belief features.

        \STATE \textit{Phase 2: Stationary deployment}

        \STATE For the current feature $\phi$, select
        \[
            i^\star
            =
            \arg\min_i
            \left\|
                \widetilde{\phi}
                -
                \widetilde{\phi}_i
            \right\|_2.
        \]

        \STATE Apply
        \[
            \pi_{\rm NN}(\phi)
            =
            \bfQ_{i^\star}^{\star}.
        \]

        \RETURN $\pi_{\rm NN}$.
    \end{algorithmic}
\end{algorithm}

\subsection{Numerical Results and Policy Interpretation}
\label{subsec:numerical_results}

Figure~\ref{fig:rd_curves} shows the resulting trade-off. Moving toward
communication-optimal operation increases the communication rate at the
expense of tracking performance. The PCRB-based outer frontier achieves
smaller distortion than the EKF-based achievable frontier at comparable
rates, consistently with the PCRB being a lower bound on the mean-square
tracking error of any causal estimator.

\begin{figure}[t]
	\centering
\begin{tikzpicture}
\definecolor{PCRBOrange}{rgb}{0.8500,0.3250,0.0980}
\definecolor{EKFBlue}{rgb}{0,0.4470,0.7410}
\definecolor{UKFLightPurple}{rgb}{0.72,0.55,0.85}
\definecolor{TradePurple}{rgb}{0.4940,0.1840,0.5560}
\definecolor{CommMarkerOrange}{rgb}{0.8500,0.3250,0.0980}
\definecolor{TradeMarkerPurple}{rgb}{0.4940,0.1840,0.5560}
\definecolor{SenseMarkerLightPurple}{rgb}{0.72,0.55,0.85}
\begin{axis}[
  width=\columnwidth,
  height=0.76\columnwidth,
  xmode=log,
  xmin=2e-05, xmax=0.002,
  ymin=0,
  xlabel={Tracking distortion $D$},
  ylabel={Rate [bpcu]},
  grid=both,
  legend style={at={(0.97,0.03)},anchor=south east},
  legend cell align={left}
]
\addplot+[solid, color=PCRBOrange, line width=1.7pt, mark=none] coordinates {
  (2.914634910923807e-05,0)
  (2.914634910923807e-05,5.528476254343048)
  (2.921649243869362e-05,6.461376118655975)
  (2.925616324553692e-05,6.751711356999857)
  (2.962244905056718e-05,7.904893228333536)
  (3.001027975654091e-05,8.597525825228617)
  (3.044173300257489e-05,9.035617611411068)
  (3.179500338408258e-05,9.833242907833213)
  (3.223634660598023e-05,10.0027847484811)
  (3.346207381050148e-05,10.39060860650423)
  (3.554110206961132e-05,10.82722066318269)
  (3.798090117823132e-05,11.16186984804259)
  (4.089524678644614e-05,11.4332943705804)
  (4.445431727360625e-05,11.66163604356949)
  (4.892601505971811e-05,11.85871920366119)
  (5.476107956000613e-05,12.03208202083541)
  (5.824848061813392e-05,12.1110812396278)
  (6.44654639732911e-05,12.21474655242257)
  (7.247413318724827e-05,12.30932979459188)
  (8.685309802179315e-05,12.41272959505217)
  (0.0001047778352709261,12.50792383268438)
  (0.0001403844148795291,12.58855301891954)
  (0.0001783306830134371,12.6242933667889)
  (0.0002043307951303561,12.63530472947975)
  (0.0002270302216627503,12.640838500864)
  (0.0002461120976056938,12.64408159311742)
  (0.002212388626998192,12.64408159311742)
};
\addlegendentry{PCRB Outer Frontier}
\addplot+[solid, color=EKFBlue, line width=1.7pt, mark=none] coordinates {
  (6.781873003391879e-05,0)
  (6.781873003391879e-05,5.51557905849651)
  (6.96155325290873e-05,8.40104059120023)
  (7.154197982021259e-05,9.236971692021729)
  (7.361377375158499e-05,9.760917280692651)
  (7.58505682785232e-05,10.1439390298343)
  (7.826613826226591e-05,10.4461353684383)
  (8.08953693793762e-05,10.6957605141245)
  (8.37569607134999e-05,10.9085221335086)
  (8.69124161726493e-05,11.0938549642311)
  (9.03793844265479e-05,11.258111970752)
  (9.42175362588225e-05,11.4055990011828)
  (9.85316095831997e-05,11.5392952512909)
  (0.000103382607113266,11.6616277592464)
  (0.00010894488503504,11.7743757947138)
  (0.000115354056101658,11.8789939386044)
  (0.000122826900022709,11.9765608508487)
  (0.000131675573441084,12.0679282459686)
  (0.000142430250538815,12.1538370635089)
  (0.000155824755829199,12.2349034957142)
  (0.000173042905792263,12.3116733297284)
  (0.000196346167084156,12.3845675412321)
  (0.000230067540229621,12.4539514112725)
  (0.000285563582736034,12.5201485336583)
  (0.000403425740689459,12.5834387861963)
  (0.00209517596102531,12.6440815931129)
  (0.002212388626998192,12.6440815931129)
};
\addlegendentry{EKF Achievable Frontier}
\addplot[only marks, mark=triangle*, mark size=5.5pt, color=black, mark options={solid,draw=black,fill=CommMarkerOrange,line width=0.9pt}, forget plot] coordinates {(0.0002461120976056938,12.64408159311742)};
\addplot[only marks, mark=square*, mark size=4.6pt, color=black, mark options={solid,draw=black,fill=TradeMarkerPurple,line width=0.9pt}, forget plot] coordinates {(4.892601505971811e-05,11.85871920366119)};
\addplot[only marks, mark=diamond*, mark size=5.5pt, color=black, mark options={solid,draw=black,fill=SenseMarkerLightPurple,line width=0.9pt}, forget plot] coordinates {(2.914634910923807e-05,5.528476254343048)};
\end{axis}
\end{tikzpicture}
	\caption{EKF-based achievable frontier and PCRB-based Gaussian
    covariance-control outer frontier for the deterministic rank-one communication
    channel.}
	\label{fig:rd_curves}
\end{figure}

To illustrate the spatial behavior associated with the rate--distortion
operating points, Figs.~\ref{fig:PCRB_easy_rad_pattern} and
\ref{fig:rad_patterns_ekf_easy} show representative transmit beampatterns for
communication-optimal, trade-off, and tracking-optimal operation in the
baseline tracking scenario. As the operating point moves toward
tracking-optimal operation, progressively more energy is directed toward the
target trajectory, while the trade-off operating point maintains simultaneous
illumination of the sensing and communication directions. In this scenario,
the EKF remains well aligned with the target, and its qualitative spatial
behavior is close to that of the PCRB-based policy. Within each three-panel
comparison, the same underlying target trajectory is retained across the three
operating points to isolate the effect of the transmit policy.

\begin{figure}[t]
    \centering
    \input{Figures/PCRB_easy}
    \caption{Normalized transmit beampatterns over time for three
    PCRB-frontier operating points in Fig.~\ref{fig:rd_curves}, from top to
    bottom: communication-optimal, trade-off, and tracking-optimal operation
    in the baseline tracking scenario. Colors are clipped to $[-30,0]$ dB; black
    dashed: communication direction; magenta solid: true target angle.}
	\label{fig:PCRB_easy_rad_pattern}
    
    \vspace{\floatsep}
    \input{Figures/EKF_easy}
    \caption{Normalized transmit beampatterns over time for three
    EKF-achievable operating points in Fig.~\ref{fig:rd_curves}, from top to
    bottom: communication-optimal, trade-off, and tracking-optimal operation
    in the baseline tracking scenario. Colors are clipped to $[-30,0]$ dB; black
    dashed: communication direction; magenta solid: true target angle; white solid: EKF estimate.}
    \label{fig:rad_patterns_ekf_easy}
\end{figure}

We next repeat the same spatial comparison in a high-mobility tracking
scenario. The target starts at $\theta_0=-72^\circ$ and $d_0=20$ m, with
$\dot\theta_0=15^\circ/{\rm s}$ and $\dot d_0=14$ m/s. The block sensing SNR is
$16$ dB, and the angular components of the initial uncertainty and process
noise are increased relative to the baseline tracking scenario; all other
parameters follow Table~\ref{tab:sim_params}.

Figures~\ref{fig:rad_patterns_pcrb_difficult} and
\ref{fig:rad_patterns_loss} show the corresponding PCRB- and EKF-based
beampatterns. The same target realization is used for both policies and is
retained across the three operating points. The two policies exhibit nearly
identical spatial behavior, including the loss of target illumination and the
subsequent steering of sensing energy along the same incorrect angular path.
The common target realization does not imply that the predictive beliefs or
the covariance matrices selected by the two policies are identical. Rather,
the figures show that differences between the PCRB and EKF tracking measures
can arise without qualitatively different spatial illumination.

\begin{figure}[t]
    \centering
    \input{Figures/PCRB_difficult}
    \caption{Normalized transmit beampatterns over time for three PCRB-based
    operating points, from top to bottom: communication-optimal, trade-off,
    and tracking-optimal operation in the high-mobility tracking scenario.
    Colors are clipped to $[-30,0]$ dB; black dashed: communication direction;
    magenta solid: true target angle.}
    \label{fig:rad_patterns_pcrb_difficult}

    \vspace{\floatsep}
    \input{Figures/EKF_difficult}
    \caption{Normalized transmit beampatterns over time for three EKF-based
    operating points, from top to bottom: communication-optimal, trade-off,
    and tracking-optimal operation in the high-mobility tracking scenario.
    Colors are clipped to $[-30,0]$ dB; black dashed: communication direction;
    magenta solid: true target angle; white solid: EKF estimate.}
    \label{fig:rad_patterns_loss}
\end{figure}

\section{Discussion}

The beampattern comparison provides additional insight into the difference
between the PCRB characterization and the achievable EKF performance. The
achievable distortion in~\eqref{eq:achievable_distortion_sim} is the realized
tracking error with respect to the true target state, whereas the PCRB
distortion in~\eqref{eq:pcrb_distortion_sim} is obtained from the
policy-averaged local data information evaluated over the predictive belief,
followed by the stationary PCRB information equation. Thus, the two tracking
measures can differ even when the corresponding policies exhibit similar
spatial illumination.

More advanced tracking algorithms can improve the achievable frontier by
reducing the realized tracking error and by providing more accurate state
information for adaptive covariance selection. In contrast, the theoretical
PCRB outer characterization is defined by the Bayesian predictive law induced
by the system model and transmission policy and is independent of the
particular estimator used by an achievable scheme. Investigating nonlinear
estimators such as the unscented Kalman filter, particle filters, and
learning-based tracking methods is therefore a natural direction for tightening
the achievable frontier.

\section{Conclusion}

We studied a dynamic ISAC problem in which a multi-antenna transmitter
communicates while tracking a moving target. Sensing performance was
formulated through a long-run MMSE distortion in angle and distance, with the
PCRB used to construct a tractable outer relaxation, while
communication was measured using the Verd\'u--Han inf-information rate to
accommodate general causal and possibly non-stationary policies. By
introducing a belief-augmented information state, we showed that stationary
randomized Markov policies suffice for the covariance-based scalarized problem
used in the outer bound under the stated regularity assumptions. 
We then derived a Gaussian
covariance-control outer bound and showed that Gaussian signaling is
asymptotically without loss of optimality in the large-block regime.
Numerical results illustrated the resulting rate--tracking trade-off by
comparing an EKF-based achievable frontier with a PCRB-based Gaussian
covariance-control outer frontier.

\appendices

\section{Technical Lemmas for the Stationary-Policy Result}
\label{app:stationary_lemmas}

This appendix collects the technical lemmas used in the proof of
Proposition~\ref{prop:stationary_suffices_conf}. Define
\begin{equation}
    \mathcal A(\bfJ)
    \triangleq
    \left(
        \bfQ_w+\bfF\bfJ^{-1}\bfF^T
    \right)^{-1},
    \qquad
    \bfJ\succ\bf0,
    \label{eq:app_A_def}
\end{equation}
and define the scalar distortion measure
\begin{equation}
    d(\bfJ)
    \triangleq
    \tr{
        \bfS\bfJ^{-1}\bfS^T
    }.
    \label{eq:app_d_def}
\end{equation}
For a state--action pair
$\bfz=(\Pi,\bfH_c)$ and $a$, we also write
\begin{equation}
    \bfG(\bfz,a)
    \triangleq
    \bfJ_D(\Pi,\bfQ(a)).
    \label{eq:app_G_def}
\end{equation}
Under Assumption~\ref{ass:mdp_regular}, there exists a finite constant
$\bar g$ such that
\begin{equation}
    {\bf0}
    \preceq
    \bfG(\bfz,a)
    \preceq
    \bar g\bfI
    \label{eq:app_G_bounded}
\end{equation}
on the admissible state--action graph.

\begin{lemma}[Properties of the prediction-information map]
\label{lem:app_A_properties}
On the positive-definite cone, the map $\mathcal A$ in
\eqref{eq:app_A_def} satisfies:
\begin{enumerate}
    \item if $\bfJ_1\preceq\bfJ_2$, then
    \begin{equation}
        \mathcal A(\bfJ_1)
        \preceq
        \mathcal A(\bfJ_2);
    \end{equation}

    \item for every $\alpha\in[0,1]$,
    \begin{align}
        &\mathcal A
        \left(
            \alpha\bfJ_1+(1-\alpha)\bfJ_2
        \right)
        \notag\\
        &\qquad\succeq
        \alpha\mathcal A(\bfJ_1)
        +(1-\alpha)\mathcal A(\bfJ_2);
        \label{eq:app_A_concavity}
    \end{align}

    \item for every $\bfJ\succ\bf0$,
    \begin{equation}
        \bf0
        \prec
        \mathcal A(\bfJ)
        \preceq
        \bfQ_w^{-1}.
        \label{eq:app_A_bound}
    \end{equation}
\end{enumerate}
\end{lemma}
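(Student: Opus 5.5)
The plan is to reduce all three properties to elementary facts about the Löwner order. The two facts used throughout are:
\begin{itemize}
\item matrix inversion is order-reversing on the positive-definite cone;
\item the map $\bfJ\mapsto\bfJ^{-1}$ is operator convex there.
\end{itemize}
For fixed $\bfQ_w\succ\mathbf 0$ and $\bfF$, write
\begin{equation*}
\mathcal A(\bfJ)=\Phi(\bfJ^{-1}),\qquad \Phi(\bfP)\triangleq(\bfQ_w+\bfF\bfP\bfF^T)^{-1},
\end{equation*}
so $\mathcal A$ is the composition of inversion, the congruence-plus-shift $\bfP\mapsto\bfQ_w+\bfF\bfP\bfF^T$, and inversion again. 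Positive definiteness of $\bfQ_w$ guarantees that $\bfQ_w+\bfF\bfP\bfF^T\succeq\bfQ_w\succ\mathbf 0$ for every $\bfP\succeq\mathbf 0$, so $\mathcal A$ is well defined.

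\emph{Monotonicity (item 1).} Suppose $\bfJ_1\preceq\bfJ_2$. Then $\bfJ_1^{-1}\succeq\bfJ_2^{-1}$. Congruence by $\bfF$ preserves the order and adding $\bfQ_w$ preserves it, so $\bfQ_w+\bfF\bfJ_1^{-1}\bfF^T\succeq\bfQ_w+\bfF\bfJ_2^{-1}\bfF^T\succ\mathbf 0$. Inverting once more reverses the order and gives $\mathcal A(\bfJ_1)\preceq\mathcal A(\bfJ_2)$.

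\emph{Bounds (item 3).} These follow the same way. Since $\bfQ_w\preceq\bfQ_w+\bfF\bfJ^{-1}\bfF^T$, inversion yields $\mathcal A(\bfJ)\preceq\bfQ_w^{-1}$. Positivity of $\mathcal A(\bfJ)$ holds because it is the inverse of a positive-definite matrix.

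\emph{Concavity (item 2).} This is the only step needing an idea. Naive composition fails: $\Phi$ is order-reversing and matrix convex, and composing it with the convex map $\bfJ\mapsto\bfJ^{-1}$ does not directly give concavity. Instead I would apply the Woodbury identity:
\begin{equation*}
\mathcal A(\bfJ)=\bfQ_w^{-1}-\bfQ_w^{-1}\bfF\left(\bfJ+\bfF^T\bfQ_w^{-1}\bfF\right)^{-1}\bfF^T\bfQ_w^{-1}.
\end{equation*}
The identity is valid because both $\bfJ\succ\mathbf 0$ and $\bfQ_w\succ\mathbf 0$, so every matrix inverted in it is positive definite. Set $\bfB\triangleq\bfF^T\bfQ_w^{-1}\bfF\succeq\mathbf 0$. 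The map $\bfJ\mapsto\bfJ+\bfB$ is affine, and $\bfX\mapsto\bfX^{-1}$ is operator convex on the positive-definite cone. Hence $\bfJ\mapsto(\bfJ+\bfB)^{-1}$ is matrix convex. Congruence by $\bfQ_w^{-1}\bfF$ preserves matrix convexity, so the subtracted term is matrix convex, and therefore $\mathcal A$ is matrix concave, which is exactly \eqref{eq:app_A_concavity}. The Woodbury form also gives a second proof of item 1: $(\bfJ+\bfB)^{-1}$ is decreasing in $\bfJ$, so the subtracted term decreases and $\mathcal A$ increases.

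\emph{Remaining check.} If operator convexity of inversion is not taken as standard, I would prove it directly. The map $(\bfX,\bfY)\mapsto\bfY^{\herm}\bfX^{-1}\bfY$ is jointly convex, as a Schur complement of a block matrix that is affine in $(\bfX,\bfY)$; taking $\bfY=\bfI$ gives convexity of $\bfX\mapsto\bfX^{-1}$. Alternatively, one can use the variational formula
\begin{equation*}
\bfu^{\herm}\bfX^{-1}\bfu=\sup_{\bfv}\left(2\Re\{\bfu^{\herm}\bfv\}-\bfv^{\herm}\bfX\bfv\right),
\end{equation*}
which expresses $\bfu^{\herm}\bfX^{-1}\bfu$ as a supremum of functions affine in $\bfX$, hence convex in $\bfX$ for every $\bfu$.
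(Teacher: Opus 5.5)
Your proposal is correct and matches the paper's proof essentially step for step. Monotonicity and the upper bound use order reversal under inversion, and concavity uses the Woodbury form together with operator convexity of $\bfJ\mapsto\bfJ^{-1}$. Your extra justification of that operator convexity, via the Schur complement or the variational formula, only fills in a fact the paper takes as standard.
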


\begin{IEEEproof}
If $\bfJ_1\preceq\bfJ_2$, then
$\bfJ_1^{-1}\succeq\bfJ_2^{-1}$. Hence,
\begin{equation}
    \bfQ_w+\bfF\bfJ_1^{-1}\bfF^T
    \succeq
    \bfQ_w+\bfF\bfJ_2^{-1}\bfF^T,
\end{equation}
and inversion reverses the Loewner order, which proves monotonicity.

To establish concavity, the Woodbury identity gives
\begin{align}
    \mathcal A(\bfJ)
    &=
    \bfQ_w^{-1}
    -
    \bfQ_w^{-1}\bfF
    \left(
        \bfJ+\bfF^T\bfQ_w^{-1}\bfF
    \right)^{-1}
    \bfF^T\bfQ_w^{-1}.
    \label{eq:app_A_woodbury}
\end{align}
The inverse map is operator convex on the positive-definite cone. Therefore,
\begin{align}
    &
    \left(
        \alpha\bfJ_1
        +(1-\alpha)\bfJ_2
        +\bfF^T\bfQ_w^{-1}\bfF
    \right)^{-1}
    \notag\\
    &\quad\preceq
    \alpha
    \left(
        \bfJ_1+\bfF^T\bfQ_w^{-1}\bfF
    \right)^{-1}
    +(1-\alpha)
    \left(
        \bfJ_2+\bfF^T\bfQ_w^{-1}\bfF
    \right)^{-1}.
\end{align}
Substitution into~\eqref{eq:app_A_woodbury} yields
\eqref{eq:app_A_concavity}. Finally,
\begin{equation}
    \bfQ_w+\bfF\bfJ^{-1}\bfF^T
    \succeq
    \bfQ_w\succ\bf0,
\end{equation}
which gives~\eqref{eq:app_A_bound}.
\end{IEEEproof}

\begin{lemma}[Uniform bounds on the PCRB information sequence]
\label{lem:app_J_compact}
Under Assumption~\ref{ass:mdp_regular}, there exist constants
$0<\underline j\leq\overline j<\infty$, independent of the admissible policy
and of $k$, such that
\begin{equation}
    \underline j\bfI
    \preceq
    \bfJ_k^\pi
    \preceq
    \overline j\bfI,
    \qquad
    k\geq0.
    \label{eq:app_J_compact}
\end{equation}
\end{lemma}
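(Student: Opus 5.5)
The plan is to show both bounds by induction on $k$, using the recursion \eqref{eq:PCRB_recursion} and the properties of $\mathcal A$ from Lemma~\ref{lem:app_A_properties}. The recursion reads
\[
\bfJ_k^\pi=\mathcal A(\bfJ_{k-1}^\pi)+\bfM_k^\pi ,
\]
where $\bfM_k^\pi=\E^\pi[\bfG(\bfZ_k,a_k)]$ is an expectation of matrices satisfying \eqref{eq:app_G_bounded}. Taking expectations preserves the Loewner order, so ${\bf0}\preceq\bfM_k^\pi\preceq\bar g\bfI$ for every admissible policy and every $k$. This is the only place the policy enters, so both constants can be chosen independently of $\pi$.

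\emph{Upper bound.} Part~3 of Lemma~\ref{lem:app_A_properties} gives $\mathcal A(\bfJ)\preceq\bfQ_w^{-1}$ for every $\bfJ\succ\bf0$. Hence, for all $k\ge1$,
\[
\bfJ_k^\pi\preceq\bfQ_w^{-1}+\bar g\bfI\preceq\bigl(\lambda_{\min}(\bfQ_w)^{-1}+\bar g\bigr)\bfI .
\]
Here $\bfQ_w\succ\bf0$ holds by Assumption~\ref{ass:mdp_regular}, item~4. At $k=0$ we have $\bfJ_0^\pi=\bfJ_0$, which is a fixed finite matrix. We therefore take $\overline j=\max\{\lambda_{\max}(\bfJ_0),\,\lambda_{\min}(\bfQ_w)^{-1}+\bar g\}$. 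Positive definiteness of each iterate, needed for $\mathcal A$ to be defined, follows inductively from the lower bound below.

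\emph{Lower bound.} Since $\bfM_k^\pi\succeq\bf0$, we get $\bfJ_k^\pi\succeq\mathcal A(\bfJ_{k-1}^\pi)$. Suppose inductively that $\bfJ_{k-1}^\pi\succeq\underline j\bfI$. Monotonicity (part~1) gives
\[
\mathcal A(\bfJ_{k-1}^\pi)\succeq\mathcal A(\underline j\bfI)=\bigl(\bfQ_w+\underline j^{-1}\bfF\bfF^T\bigr)^{-1}\succeq\bigl(\lambda_{\max}(\bfQ_w)+\underline j^{-1}\|\bfF\|_2^2\bigr)^{-1}\bfI .
\]
To close the induction we need a constant $\underline j>0$ with $\underline j\le\lambda_{\min}(\bfJ_0)$ and $\bigl(\lambda_{\max}(\bfQ_w)+\underline j^{-1}\|\bfF\|_2^2\bigr)^{-1}\ge\underline j$. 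The second condition is equivalent to
\[
\underline j\,\lambda_{\max}(\bfQ_w)+\|\bfF\|_2^2\le 1 .
\]
This holds for small $\underline j$ only when $\|\bfF\|_2<1$, and Schur stability does not guarantee that. This is the main obstacle.

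The first way around it is to avoid the induction on the lower bound altogether and use the uniform floor that $\mathcal A$ provides after one step. Because $\bfJ_{k-1}^\pi\preceq\overline j\bfI$ from the upper bound, monotonicity gives, for every $k\ge1$,
\[
\bfJ_k^\pi\succeq\mathcal A(\bfJ_{k-1}^\pi)\succeq\mathcal A(\overline j\bfI)=\bigl(\bfQ_w+\overline j^{-1}\bfF\bfF^T\bigr)^{-1}\succeq\bigl(\lambda_{\max}(\bfQ_w)+\overline j^{-1}\|\bfF\|_2^2\bigr)^{-1}\bfI .
\]
This is a strictly positive constant that does not depend on $\pi$ or $k$. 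Combined with $\lambda_{\min}(\bfJ_0)>0$ from $\bfJ_0\succ\bf0$, this yields
\[
\underline j=\min\Bigl\{\lambda_{\min}(\bfJ_0),\,\bigl(\lambda_{\max}(\bfQ_w)+\overline j^{-1}\|\bfF\|_2^2\bigr)^{-1}\Bigr\}.
\]

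The ordering matters. Positive definiteness of every iterate is immediate, since $\bfJ_0\succ\bf0$ and $\mathcal A$ maps into $\succ\bf0$. That makes $\mathcal A(\bfJ_{k-1}^\pi)$ well defined, so the upper bound can be proved first for all $k$. The lower bound is then derived from it, which sidesteps any contraction requirement on $\bfF$.
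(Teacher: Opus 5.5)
Your upper bound is correct and coincides with the paper's argument; you even handle $k=0$ explicitly, which the paper leaves implicit. The lower bound, however, does not go through, because the step $\mathcal A(\bfJ_{k-1}^\pi)\succeq\mathcal A(\overline j\bfI)$ uses monotonicity backwards. Part~1 of Lemma~\ref{lem:app_A_properties} says $\mathcal A$ is order preserving, so $\bfJ_{k-1}^\pi\preceq\overline j\bfI$ gives $\mathcal A(\bfJ_{k-1}^\pi)\preceq\mathcal A(\overline j\bfI)$, which is an upper bound. An upper bound on the previous information can never produce a floor: when $\bfF$ is invertible, $\mathcal A(\epsilon\bfI)\to\mathbf{0}$ as $\epsilon\to0$.

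Your intermediate claim $\bfJ_k^\pi\succeq(\lambda_{\max}(\bfQ_w)+\overline j^{-1}\|\bfF\|_2^2)^{-1}\bfI$ is in fact false. Take $\bfF=f\bfI$ with $f\in(0,1)$, $\bfQ_w=\bfI$, and the zero-power policy, which is admissible since $\mathbf{0}\in\cU$; then $\bfM_k^\pi=\mathbf{0}$. With the stationary initialization $\bfJ_0=(1-f^2)\bfI$ you get $\bfJ_k^\pi=(1-f^2)\bfI$ for all $k$. Your floor, by contrast, is at least $(1+f^2)^{-1}\bfI\succeq\tfrac12\bfI$, because $\overline j\geq\lambda_{\min}(\bfQ_w)^{-1}=1$. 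This fails for $f^2>1/2$. Your final $\underline j$ survives this example only through the minimum with $\lambda_{\min}(\bfJ_0)$. With $\bfJ_0=\bfI$, which is still $\succ\mathbf{0}$, it fails too: then $\underline j\geq\tfrac12$ while $\bfJ_k^\pi\to(1-f^2)\bfI$.

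You correctly diagnosed that an isotropic induction $\bfJ_{k-1}^\pi\succeq\underline j\bfI\Rightarrow\bfJ_k^\pi\succeq\underline j\bfI$ needs $\|\bfF\|_2<1$. The missing idea is to let Schur stability enter through the inverse (covariance) recursion rather than through an isotropic floor. The paper compares with the no-measurement recursion $\bfK_0=\bfJ_0$, $\bfK_k=\mathcal A(\bfK_{k-1})$. Monotonicity and $\bfM_k^\pi\succeq\mathbf{0}$ give $\bfJ_k^\pi\succeq\bfK_k$, and $\bfK_k^{-1}$ obeys the Lyapunov recursion, so
\[
\bfK_k^{-1}=\bfF^k\bfJ_0^{-1}(\bfF^T)^k+\sum_{i=0}^{k-1}\bfF^i\bfQ_w(\bfF^T)^i\preceq c\,\bfI .
\]
Here $c<\infty$ because $\|\bfF^i\|_2\leq C r^i$ for some $r<1$, and hence $\underline j=c^{-1}$.

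If you prefer to keep an induction, make the invariant ellipsoidal rather than isotropic. Let $\bfW=\sum_{i\geq0}\bfF^i\bfQ_w(\bfF^T)^i\succ\mathbf{0}$ and pick $c\geq1$ with $\bfJ_0^{-1}\preceq c\bfW$. Then the bound $(\bfJ_{k-1}^\pi)^{-1}\preceq c\bfW$ propagates, because $(\bfJ_k^\pi)^{-1}\preceq\bfQ_w+c\bfF\bfW\bfF^T=c\bfW-(c-1)\bfQ_w\preceq c\bfW$.
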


\begin{IEEEproof}
From~\eqref{eq:PCRB_recursion},
Lemma~\ref{lem:app_A_properties}, and
\eqref{eq:app_G_bounded},
\begin{equation}
    \bfJ_k^\pi
    =
    \mathcal A(\bfJ_{k-1}^\pi)
    +
    \bfM_k^\pi
    \preceq
    \bfQ_w^{-1}
    +
    \bar g\bfI.
\end{equation}
Hence a policy-independent finite upper bound exists.

For the lower bound, define the no-measurement information recursion
\begin{equation}
    \bfK_0=\bfJ_0,
    \qquad
    \bfK_k=\mathcal A(\bfK_{k-1}).
    \label{eq:app_K_recursion}
\end{equation}
Since $\bfM_k^\pi\succeq\bf0$ and $\mathcal A$ is monotone,
induction gives
\begin{equation}
    \bfJ_k^\pi\succeq\bfK_k
\end{equation}
for every $k$. Let $\bfP_k\triangleq\bfK_k^{-1}$. Then
\begin{equation}
    \bfP_k
    =
    \bfQ_w
    +
    \bfF\bfP_{k-1}\bfF^T,
\end{equation}
and therefore
\begin{equation}
    \bfP_k
    =
    \bfF^k\bfP_0(\bfF^T)^k
    +
    \sum_{i=0}^{k-1}
    \bfF^i\bfQ_w(\bfF^T)^i.
    \label{eq:app_no_measurement_covariance}
\end{equation}
Here $\bfP_0=\bfJ_0^{-1}$. Because $\bfF$ is Schur stable, the first term vanishes and the matrix series
is bounded. Hence there exists $c<\infty$ such that
$\bfP_k\preceq c\bfI$ for all $k$. Thus
\begin{equation}
    \bfK_k
    =
    \bfP_k^{-1}
    \succeq
    c^{-1}\bfI,
\end{equation}
and consequently
$\bfJ_k^\pi\succeq c^{-1}\bfI$ uniformly in $k$ and $\pi$.
\end{IEEEproof}

\begin{lemma}[Convexity and order monotonicity of the PCRB distortion]
\label{lem:app_d_properties}
The function $d$ in~\eqref{eq:app_d_def} is convex on the
positive-definite cone and decreasing in the Loewner order. In particular,
for any positive-definite
$\bfJ_1,\ldots,\bfJ_N$,
\begin{equation}
    \frac{1}{N}
    \sum_{k=1}^{N}
    d(\bfJ_k)
    \geq
    d\!\left(
        \frac{1}{N}
        \sum_{k=1}^{N}
        \bfJ_k
    \right).
    \label{eq:app_d_Jensen}
\end{equation}
\end{lemma}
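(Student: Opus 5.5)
The plan is to write $d(\bfJ)=\sum_{i\in\{1,3\}}\bfe_i^T\bfJ^{-1}\bfe_i$, where $\bfe_i$ is the $i$th canonical basis vector. This expression follows from $\bfS=\bfS^T=\bfS^2$ together with cyclicity of the trace. Both claimed properties are stable under nonnegative sums, so they can be reduced to properties of the scalar maps $\bfJ\mapsto\bfe_i^T\bfJ^{-1}\bfe_i$. Equivalently, one can argue directly with the matrix $\bfS\bfJ^{-1}\bfS^T$ and use that $\mathrm{tr}$ and congruence by $\bfS$ preserve the Loewner order.

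For monotonicity, take $\bf0\prec\bfJ_1\preceq\bfJ_2$. Inversion reverses the Loewner order, so $\bfJ_1^{-1}\succeq\bfJ_2^{-1}$; this is the same fact already used in the proof of Lemma~\ref{lem:app_A_properties}. Congruence then gives $\bfS\bfJ_1^{-1}\bfS^T\succeq\bfS\bfJ_2^{-1}\bfS^T$, and taking traces yields $d(\bfJ_1)\geq d(\bfJ_2)$. Hence $d$ is decreasing.

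For convexity, I would invoke operator convexity of the inverse on the positive-definite cone, which was also used in Lemma~\ref{lem:app_A_properties}:
\begin{equation*}
(\alpha\bfJ_1+(1-\alpha)\bfJ_2)^{-1}\preceq\alpha\bfJ_1^{-1}+(1-\alpha)\bfJ_2^{-1},\qquad \alpha\in[0,1].
\end{equation*}
Applying the linear, order-preserving map $\bfX\mapsto\tr{\bfS\bfX\bfS^T}$ to both sides gives the convexity of $d$.

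The Jensen-type inequality \eqref{eq:app_d_Jensen} then follows by finite induction from two-point convexity with uniform weights $1/N$. The averaged matrix $\frac1N\sum_k\bfJ_k$ is again positive definite, so $d$ is well defined there. Alternatively, one can use the $N$-point form of operator convexity directly.

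The main obstacle is the operator convexity of $\bfJ\mapsto\bfJ^{-1}$. Here it can simply be cited as a standard result, as the paper does earlier. If a self-contained argument is needed, I would use the Schur-complement characterization: $\bfe^T\bfJ^{-1}\bfe=\sup_{\bfv}\{2\bfe^T\bfv-\bfv^T\bfJ\bfv\}$. This is a supremum of functions affine in $\bfJ$, so it is convex in $\bfJ$, and it already suffices for the scalar form of $d$. The same variational formula also gives monotonicity, since each affine term $2\bfe^T\bfv-\bfv^T\bfJ\bfv$ is nonincreasing in $\bfJ$ in the Loewner order.
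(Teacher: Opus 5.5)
Your proposal is correct and follows essentially the same route as the paper, which applies the $N$-point operator convexity of $\bfJ\mapsto\bfJ^{-1}$ and then congruence by $\bfS$ and monotonicity of the trace to obtain \eqref{eq:app_d_Jensen}, and uses order reversal under inversion for monotonicity. Your variational formula $\mathbf{e}^T\bfJ^{-1}\mathbf{e}=\sup_{\mathbf{v}}\{2\mathbf{e}^T\mathbf{v}-\mathbf{v}^T\bfJ\mathbf{v}\}$ is a valid self-contained replacement for the cited operator-convexity fact, though the paper does not need it.
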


\begin{IEEEproof}
Operator convexity of matrix inversion gives
\begin{equation}
    \left(
        \frac{1}{N}
        \sum_{k=1}^{N}
        \bfJ_k
    \right)^{-1}
    \preceq
    \frac{1}{N}
    \sum_{k=1}^{N}
    \bfJ_k^{-1}.
\end{equation}
Congruence by $\bfS$ and the monotonicity of the trace on the
positive-semidefinite cone yield~\eqref{eq:app_d_Jensen}.

Moreover, if $\bfJ_1\preceq\bfJ_2$, then
$\bfJ_1^{-1}\succeq\bfJ_2^{-1}$, and hence
\begin{equation}
    d(\bfJ_1)
    \geq
    d(\bfJ_2).
\end{equation}
\end{IEEEproof}

\begin{lemma}[Weak limits of occupation measures]
\label{lem:app_occupation}
Let
\begin{equation}
    \mu_{\pi,N}(B)
    =
    \frac{1}{N}
    \sum_{k=1}^{N}
    \bbP^\pi
    \left(
        (\bfZ_k,a_k)\in B
    \right)
    \label{eq:app_empirical_occupation_measure}
\end{equation}
be the expected empirical state--action occupation measure induced by an
admissible causal policy $\pi$. Under
Assumption~\ref{ass:mdp_regular}, if
\begin{equation}
    \mu_{\pi,N_j}
    \Rightarrow
    \mu,
\end{equation}
then $\mu$ is an invariant occupation measure. In particular, if $\nu$ is its
state marginal, there exists a stationary randomized Markov kernel
$\varphi(da|\bfz)$ such that
\begin{equation}
    \mu(d\bfz,da)
    =
    \nu(d\bfz)
    \varphi(da|\bfz)
    \label{eq:app_disintegration}
\end{equation}
and
\begin{equation}
    \nu(B)
    =
    \int
    P(B|\bfz,a)
    \,\mu(d\bfz,da)
    \label{eq:app_flow_constraint}
\end{equation}
for every Borel set $B$.
\end{lemma}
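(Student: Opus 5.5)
The argument is the standard invariance-of-weak-limits construction for occupation measures, followed by disintegration of the limit.

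\emph{Step 1 (flow identity for finite $N$).} For a bounded continuous test function $f$ on the state space, the controlled Markov property \eqref{eq:controlled_markov_property} gives, for every $k$,
\begin{equation*}
\E^\pi[f(\bfZ_{k+1})]=\E^\pi\!\left[\int f(\bfz')P(d\bfz'|\bfZ_k,a_k)\right].
\end{equation*}
Write $\nu_{\pi,N}$ for the state marginal of $\mu_{\pi,N}$ and $Pf(\bfz,a)\triangleq\int f(\bfz')P(d\bfz'|\bfz,a)$. Averaging over $k=1,\ldots,N$ and telescoping yields
\begin{equation*}
\int f\,d\nu_{\pi,N}-\int Pf\,d\mu_{\pi,N}
=\frac{1}{N}\left(\E^\pi[f(\bfZ_1)]-\E^\pi[f(\bfZ_{N+1})]\right).
\end{equation*}
The right-hand side is bounded in absolute value by $2\|f\|_\infty/N$.

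\emph{Step 2 (passage to the limit).} Weak convergence $\mu_{\pi,N_j}\Rightarrow\mu$ implies that the state marginals converge, $\nu_{\pi,N_j}\Rightarrow\nu$, since projection is continuous. By weak continuity of the kernel in Assumption~\ref{ass:mdp_regular}(1), $Pf$ is bounded and continuous on the admissible graph, so $\int Pf\,d\mu_{\pi,N_j}\to\int Pf\,d\mu$. Hence
\begin{equation*}
\int f\,d\nu=\int Pf\,d\mu
\end{equation*}
for all bounded continuous $f$.

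On a Polish space, finite Borel measures are determined by their integrals against bounded continuous functions. The right-hand side defines the measure $B\mapsto\int P(B|\bfz,a)\,\mu(d\bfz,da)$, which is a probability measure. The two measures therefore coincide, which gives \eqref{eq:app_flow_constraint}.

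The main obstacle is this step. We must check that the limit $\mu$ is supported on the admissible graph, so that weak continuity of $P$ applies there. Borel measurability of the graph alone does not ensure that it is closed. I would handle this by reading Assumption~\ref{ass:mdp_regular}(2), namely weak compactness of the set of invariant occupation measures, as including closedness of the graph. Alternatively, I would take the kernel to extend continuously to the closure and treat limit points inside that closure.

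\emph{Step 3 (disintegration).} Since the state and action spaces are Polish, the disintegration theorem applies. It gives a Borel stochastic kernel $\varphi(da|\bfz)$, unique $\nu$-a.e., with $\mu(d\bfz,da)=\nu(d\bfz)\varphi(da|\bfz)$; this is \eqref{eq:app_disintegration}. Because $\mu$ is concentrated on the admissible graph, $\varphi(\cdot|\bfz)$ charges only admissible actions for $\nu$-a.e.\ $\bfz$. On the remaining $\nu$-null set we redefine $\varphi$ as a measurable admissible selector, which exists by Borel measurability of the graph and a measurable selection theorem.

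Using $\varphi$ as a stationary randomized Markov policy started from $\nu$, relation \eqref{eq:app_flow_constraint} states exactly that $\nu$ is invariant. The state--action law at every step is then $\mu$, so $\mu$ is an invariant occupation measure.
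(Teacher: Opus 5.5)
Your proposal is correct and follows the paper's proof essentially step for step: the same telescoping identity for $\int f\,d\nu_{\pi,N}-\int Pf\,d\mu_{\pi,N}$, passage to the limit using weak continuity of the kernel, disintegration on Polish spaces, and the observation that the flow constraint makes $\mu$ the per-step state--action law of $\varphi$ started from $\nu$. Two of your remarks address points the paper passes over silently: that $\mu$ must be supported on the admissible graph for $Pf$ to be usable in the weak limit, and that $\varphi$ should be redefined on a $\nu$-null set so that it is admissible everywhere. These are worthwhile refinements rather than departures from the paper's argument.
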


\begin{IEEEproof}
Let $f$ be bounded and continuous on the information-state space, and define
\begin{equation}
    Pf(\bfz,a)
    \triangleq
    \int
    f(\bfz')
    P(d\bfz'|\bfz,a).
\end{equation}
Weak continuity of the controlled transition kernel implies that
$Pf$ is bounded and continuous on the admissible graph. From the controlled
Markov property,
\begin{align}
    &
    \int
    f(\bfz)
    \,\mu_{\pi,N}^{Z}(d\bfz)
    -
    \int
    Pf(\bfz,a)
    \,\mu_{\pi,N}(d\bfz,da)
    \notag\\
    &=
    \frac{
        \E^\pi[f(\bfZ_1)]
        -
        \E^\pi[f(\bfZ_{N+1})]
    }{N},
    \label{eq:app_flow_telescoping}
\end{align}
where $\mu_{\pi,N}^{Z}$ denotes the state marginal of
$\mu_{\pi,N}$. Since $f$ is bounded, the right-hand side converges to zero.
Passing to the weakly convergent subsequence gives
\eqref{eq:app_flow_constraint}.

Since the state and action spaces are Polish, the disintegration theorem
provides a stochastic kernel $\varphi(da|\bfz)$ satisfying
\eqref{eq:app_disintegration}. If $\bfZ_1\sim\nu$ and the actions are selected
according to $\varphi$, then~\eqref{eq:app_flow_constraint} implies
\begin{equation}
    (\bfZ_k,a_k)\sim\mu
\end{equation}
for every $k$.
\end{IEEEproof}

\begin{lemma}[Constant-information PCRB fixed point]
\label{lem:app_riccati_fixed}
Fix any $\bfM\succeq\bf0$. Under
Assumption~\ref{ass:mdp_regular}, the recursion
\begin{equation}
    \bfJ_{n+1}
    =
    \mathcal A(\bfJ_n)
    +
    \bfM,
    \qquad
    \bfJ_0\succ\bf0,
    \label{eq:app_constant_M_recursion}
\end{equation}
has a unique positive-definite fixed point
$\bfJ(\bfM)$ and
\begin{equation}
    \bfJ_n
    \longrightarrow
    \bfJ(\bfM)
    \label{eq:app_J_convergence}
\end{equation}
for every $\bfJ_0\succ\bf0$.
\end{lemma}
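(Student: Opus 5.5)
The plan is to squeeze every trajectory of the map $\mathcal T(\bfJ)\triangleq\mathcal A(\bfJ)+\bfM$ between two monotone sequences, and then show that the squeeze closes by proving the positive-definite fixed point is unique. Two facts from Lemma~\ref{lem:app_A_properties} drive everything: $\mathcal T$ is monotone in the Loewner order, and $\mathcal T(\bfJ)\preceq\bfU\triangleq\bfQ_w^{-1}+\bfM$ for every $\bfJ\succ\bf0$. For the lower anchor I use the no-measurement recursion of Lemma~\ref{lem:app_J_compact}. By Schur stability of $\bfF$, $\bfK_k\to\bfK_\infty\triangleq\bfP_\infty^{-1}$, where $\bfP_\infty=\bfQ_w+\bfF\bfP_\infty\bfF^T\succeq\bfQ_w\succ\bf0$ is the Lyapunov solution, so that $\mathcal A(\bfK_\infty)=\bfK_\infty$.

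\emph{Existence.} Iterate $\mathcal T$ from $\bfU$. Since $\mathcal T(\bfU)\preceq\bfU$, monotonicity makes the iterates nonincreasing. They are also bounded below by $\bfK_\infty$, because $\bfJ\succeq\bfK_\infty$ implies $\mathcal T(\bfJ)\succeq\mathcal A(\bfK_\infty)=\bfK_\infty$. A monotone, bounded sequence of Hermitian matrices converges, and by continuity of $\mathcal A$ its limit $\bfJ^\star$ is a fixed point with $\bfJ^\star\succeq\bfK_\infty\succ\bf0$.

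Next fix $\epsilon\in(0,1)$ and start from $(1-\epsilon)\bfK_\infty$. Concavity of $\mathcal A$ together with $\mathcal A(\bf0^+)\succeq\bf0$ gives $\mathcal A((1-\epsilon)\bfK_\infty)\succeq(1-\epsilon)\mathcal A(\bfK_\infty)$. I would justify this by writing $(1-\epsilon)\bfK_\infty$ as a convex combination of $\bfK_\infty$ and a vanishing multiple of $\bfI$, then passing to the limit using the Woodbury form \eqref{eq:app_A_woodbury}. Hence $\mathcal T((1-\epsilon)\bfK_\infty)\succeq(1-\epsilon)\bfK_\infty$. So the iterates from $(1-\epsilon)\bfK_\infty$ are nondecreasing and bounded above by $\bfU$, and they converge to a fixed point as well.

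\emph{Uniqueness (main obstacle).} Let $\bfJ_a,\bfJ_b$ be two positive-definite fixed points, and set $\bfC=\bfF^T\bfQ_w^{-1}\bfF$ and $\bfB=\bfQ_w^{-1}\bfF$. From \eqref{eq:app_A_woodbury},
\begin{equation*}
\bfJ_a-\bfJ_b=\mathcal A(\bfJ_a)-\mathcal A(\bfJ_b)=\bfE_b(\bfJ_a-\bfJ_b)\bfE_a^{\herm},
\end{equation*}
where $\bfE_x=\bfB(\bfJ_x+\bfC)^{-1}$. A second Woodbury identity gives $\bfE_x=\mathcal A(\bfJ_x)\bfF\bfJ_x^{-1}$. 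Using the fixed-point relation $\mathcal A(\bfJ_x)=\bfJ_x-\bfM$, this becomes
\begin{equation*}
\bfE_x=\bfJ_x\bfX_x\bfJ_x^{-1},\qquad \bfX_x\triangleq(\bfI-\bfP_x\bfM)\bfF,\quad \bfP_x=\bfJ_x^{-1}.
\end{equation*}
So $\bfE_x$ is similar to the Kalman closed-loop matrix $\bfX_x$.

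To show $\rho(\bfX_x)<1$, factor $\bfM=\bfH^{\herm}\bfH$ and let $\boldsymbol\Sigma_x=\bfQ_w+\bfF\bfP_x\bfF^T$. The Joseph form of the fixed-point identity $\bfP_x=(\boldsymbol\Sigma_x^{-1}+\bfM)^{-1}$ reads
\begin{equation*}
\bfP_x=\bfX_x\bfP_x\bfX_x^{\herm}+(\bfI-\bfP_x\bfM)\bfQ_w(\bfI-\bfP_x\bfM)^{\herm}+\bfP_x\bfM\bfP_x.
\end{equation*}
The middle term is positive definite, because $\bfQ_w\succ\bf0$ and $\bfI-\bfP_x\bfM=\bfP_x\boldsymbol\Sigma_x^{-1}$ is invertible. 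The Lyapunov criterion then yields $\rho(\bfX_x)<1$. Vectorizing the difference equation, $\bfJ_a-\bfJ_b$ is a fixed vector of a Kronecker-product operator whose spectral radius is $\rho(\bfX_a)\rho(\bfX_b)<1$. Hence $\bfJ_a=\bfJ_b\triangleq\bfJ(\bfM)$. This algebraic step is where I expect most of the effort, particularly checking the identities with complex conjugate transposes.

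\emph{Convergence from arbitrary $\bfJ_0\succ\bf0$.} For $n\geq1$ we have $\bfJ_n\preceq\bfU$, so by monotonicity $\bfJ_{n+m}\preceq\mathcal T^{m}(\bfU)\to\bfJ(\bfM)$. For the lower side, $\bfJ_n\succeq\bfK_n$ by the induction in Lemma~\ref{lem:app_J_compact}. Since $\bfK_n\to\bfK_\infty\succ\bf0$, there is $n_\epsilon$ with $\bfK_{n_\epsilon}\succeq(1-\epsilon)\bfK_\infty$. Monotonicity then gives $\bfJ_{n_\epsilon+m}\succeq\mathcal T^{m}((1-\epsilon)\bfK_\infty)$, and the right-hand side tends to $\bfJ(\bfM)$ by the existence step and uniqueness. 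Both the upper and lower bounds converge to $\bfJ(\bfM)$, which yields \eqref{eq:app_J_convergence}.
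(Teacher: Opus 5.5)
Your proof is correct, but it takes a different route from the paper's. The paper reduces the lemma to a known result. With $\bfP_n=\bfJ_n^{-1}$ and $\bfM=\bfC^T\bfC$, the recursion becomes the filtered-covariance Riccati recursion of a time-invariant Kalman filter. The paper then quotes the standard discrete-time Riccati convergence theorem: $(\bfF,\bfQ_w^{1/2})$ is stabilizable because $\bfQ_w\succ\mathbf{0}$, and $(\bfF,\bfC)$ is detectable because $\bfF$ is Schur stable. This gives existence, uniqueness, and convergence from every positive-definite start at once.

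You instead reprove that theorem for this case using only tools already in the appendix. Monotone iteration of $\mathcal A(\cdot)+\bfM$ from the supersolution $\bfQ_w^{-1}+\bfM$ and from the subsolution $(1-\epsilon)\bfK_\infty$ gives existence and the two-sided squeeze. Uniqueness follows because every positive-definite fixed point has a Schur-stable closed-loop matrix $(\bfI-\bfP_x\bfM)\bfF$, by the Joseph decomposition (whose $\bfQ_w$ term is strictly positive) together with the Lyapunov criterion. Hence the difference of two fixed points solves a contractive Stein equation and must vanish.

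The identities you flagged as the main risk do hold. Everything here is real, so the conjugate transposes are just transposes. The subsolution inequality also has a more direct proof than your concavity-and-limit argument: for $c\in(0,1]$, $\mathcal A(c\bfK)^{-1}=\bfQ_w+c^{-1}\bfF\bfK^{-1}\bfF^T\preceq c^{-1}\mathcal A(\bfK)^{-1}$.

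As for what each route buys: the paper's is short and extends verbatim to unstable $\bfF$ whenever $(\bfF,\bfM^{1/2})$ is detectable, whereas your lower anchor $\bfK_\infty$ genuinely needs $\rho(\bfF)<1$. Yours is self-contained and makes explicit why the positive-definite fixed point is unique. It also mirrors the monotone-iteration device the paper itself uses in Step~C of Appendix~\ref{app:proof_stationary_suffices}.
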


\begin{IEEEproof}
Let $\bfP_n\triangleq\bfJ_n^{-1}$ and choose a matrix $\bfC$ satisfying
$\bfC^T\bfC=\bfM$, e.g., $\bfC=\bfM^{1/2}$. Then
\eqref{eq:app_constant_M_recursion} is equivalent to
\begin{equation}
    \bfP_{n+1}
    =
    \left[
        \left(
            \bfQ_w+\bfF\bfP_n\bfF^T
        \right)^{-1}
        +
        \bfC^T\bfC
    \right]^{-1}.
    \label{eq:app_Kalman_Riccati}
\end{equation}
This is the covariance Riccati recursion of a time-invariant Kalman filter
with state matrix $\bfF$, process-noise covariance $\bfQ_w$, measurement
matrix $\bfC$, and identity measurement-noise covariance.

Since $\bfQ_w\succ\bf0$, the pair
$(\bfF,\bfQ_w^{1/2})$ is stabilizable. Moreover, because $\bfF$ is Schur
stable, $(\bfF,\bfC)$ is detectable for every $\bfC$. The standard
discrete-time Riccati convergence theorem therefore implies that
\eqref{eq:app_Kalman_Riccati} converges, from every positive-definite initial
condition, to its unique stabilizing positive-definite solution
$\bfP(\bfM)$. Hence $\bfJ(\bfM)\triangleq\bfP(\bfM)^{-1}$ is the unique
positive-definite solution of
\begin{equation}
    \bfJ(\bfM)
    =
    \mathcal A(\bfJ(\bfM))
    +
    \bfM,
    \label{eq:app_J_fixed_point}
\end{equation}
and~\eqref{eq:app_J_convergence} follows.
\end{IEEEproof}

\section{Proof of Proposition~\ref{prop:stationary_suffices_conf}}
\label{app:proof_stationary_suffices}

We prove Proposition~\ref{prop:stationary_suffices_conf}. Fix an arbitrary
admissible causal policy $\pi$ and define the finite-horizon averages
\begin{align}
    R_N^\pi
    &\triangleq
    \frac{1}{N}
    \sum_{k=1}^{N}
    \E^\pi
    \left[
        r_{\rm G}(\bfZ_k,a_k)
    \right],
    \label{eq:app_RN}\\
    D_N^\pi
    &\triangleq
    \frac{1}{N}
    \sum_{k=1}^{N}
    d(\bfJ_k^\pi),
    \label{eq:app_DN}\\
    \bar{\bfM}_N^\pi
    &\triangleq
    \frac{1}{N}
    \sum_{k=1}^{N}
    \bfM_k^\pi.
    \label{eq:app_MbarN}
\end{align}
By~\eqref{eq:gaussian_average_reward} and
\eqref{eq:tracking_distortion_conf},
\begin{equation}
    R_{\rm G}(\pi)
    =
    \liminf_{N\to\infty}
    R_N^\pi,
    \qquad
    \overline{D}(\pi)
    =
    \limsup_{N\to\infty}
    D_N^\pi.
    \label{eq:app_limit_definitions}
\end{equation}

Let $\mu_{\pi,N}$ denote the expected empirical occupation measure in
\eqref{eq:app_empirical_occupation_measure}. Since
$\bfM_k^\pi
=
\E^\pi[\bfG(\bfZ_k,a_k)]$, we have
\begin{align}
    R_N^\pi
    &=
    \int
        r_{\rm G}(\bfz,a)
        \,\mu_{\pi,N}(d\bfz,da),
    \label{eq:app_R_as_occupation}\\
    \bar{\bfM}_N^\pi
    &=
    \int
        \bfG(\bfz,a)
        \,\mu_{\pi,N}(d\bfz,da).
    \label{eq:app_M_as_occupation}
\end{align}

\subsection*{A. Limiting Occupation Measure}

By the definition of the superior limit, there exists a sequence
$N_j\uparrow\infty$ such that
\begin{equation}
    D_{N_j}^\pi
    \longrightarrow
    \overline{D}(\pi).
    \label{eq:app_D_subsequence}
\end{equation}
By Assumption~\ref{ass:mdp_regular}, the sequence
$\{\mu_{\pi,N_j}\}$ is tight. Hence, by Prokhorov's theorem, there exists a
further subsequence, not relabeled, and a probability measure $\mu$ such that
\begin{equation}
    \mu_{\pi,N_j}
    \Rightarrow
    \mu.
    \label{eq:app_mu_convergence}
\end{equation}
The convergence in~\eqref{eq:app_D_subsequence} is preserved along this
further subsequence.

By Lemma~\ref{lem:app_occupation}, $\mu$ is an invariant occupation measure
and admits the disintegration
\begin{equation}
    \mu(d\bfz,da)
    =
    \nu(d\bfz)\,
    \varphi(da|\bfz),
    \label{eq:app_mu_disintegration_proof}
\end{equation}
where $\varphi$ is a stationary randomized Markov policy and $\nu$ is an
invariant state distribution.

By Assumption~\ref{ass:mdp_regular}, weak convergence in
\eqref{eq:app_mu_convergence} preserves the communication-reward and
data-information averages, yielding
\begin{align}
    R_{N_j}^\pi
    &\longrightarrow
    R_{\mu}
    \triangleq
    \int
        r_{\rm G}(\bfz,a)
        \,\mu(d\bfz,da),
    \label{eq:app_R_mu}\\
    \bar{\bfM}_{N_j}^\pi
    &\longrightarrow
    \bfM_{\mu}
    \triangleq
    \int
        \bfG(\bfz,a)
        \,\mu(d\bfz,da).
    \label{eq:app_M_mu}
\end{align}
Since every subsequential limit of $\{R_N^\pi\}$ is no smaller than its
inferior limit,
\begin{equation}
    R_{\mu}
    \geq
    R_{\rm G}(\pi).
    \label{eq:app_rate_dominance_intermediate}
\end{equation}

\subsection*{B. Averaging the Deterministic PCRB Recursion}

Define
\begin{align}
    \bar{\bfJ}_N^\pi
    &\triangleq
    \frac{1}{N}\sum_{k=1}^{N}\bfJ_k^\pi,
    &
    \widetilde{\bfJ}_N^\pi
    &\triangleq
    \frac{1}{N}\sum_{k=0}^{N-1}\bfJ_k^\pi.
\end{align}
Averaging~\eqref{eq:PCRB_recursion} and applying the matrix concavity of
$\mathcal A$ from Lemma~\ref{lem:app_A_properties} give
\begin{align}
    \bar{\bfJ}_N^\pi
    &=
    \frac{1}{N}
    \sum_{k=1}^{N}
    \mathcal A(\bfJ_{k-1}^\pi)
    +
    \bar{\bfM}_N^\pi
    \notag\\
    &\preceq
    \mathcal A\!\left(
        \widetilde{\bfJ}_N^\pi
    \right)
    +
    \bar{\bfM}_N^\pi.
    \label{eq:app_average_recursion_inequality}
\end{align}
Moreover,
\begin{equation}
    \widetilde{\bfJ}_N^\pi
    -
    \bar{\bfJ}_N^\pi
    =
    \frac{\bfJ_0-\bfJ_N^\pi}{N}.
    \label{eq:app_boundary_term}
\end{equation}
By Lemma~\ref{lem:app_J_compact}, the right-hand side converges to zero.

Again by Lemma~\ref{lem:app_J_compact}, after passing to a further
subsequence if necessary,
\begin{equation}
    \bar{\bfJ}_{N_j}^\pi
    \longrightarrow
    \widehat{\bfJ}
    \succ\bf0.
    \label{eq:app_Jhat}
\end{equation}
The boundary relation then also gives
$\widetilde{\bfJ}_{N_j}^\pi\to\widehat{\bfJ}$. Passing to the limit in
\eqref{eq:app_average_recursion_inequality}, using continuity of
$\mathcal A$ and~\eqref{eq:app_M_mu}, yields
\begin{equation}
    \widehat{\bfJ}
    \preceq
    \mathcal A(\widehat{\bfJ})
    +
    \bfM_\mu.
    \label{eq:app_J_subsolution}
\end{equation}

\subsection*{C. Comparison With the Constant-Information Fixed Point}

Define
$\mathcal T_\mu(\bfJ)\triangleq\mathcal A(\bfJ)+\bfM_\mu$.
By~\eqref{eq:app_J_subsolution},
$\widehat{\bfJ}\preceq\mathcal T_\mu(\widehat{\bfJ})$.
Since $\mathcal T_\mu$ is monotone by
Lemma~\ref{lem:app_A_properties},
\begin{equation}
    \widehat{\bfJ}
    \preceq
    \mathcal T_\mu(\widehat{\bfJ})
    \preceq
    \mathcal T_\mu^2(\widehat{\bfJ})
    \preceq
    \cdots.
    \label{eq:app_monotone_iteration}
\end{equation}
Moreover,
$\mathcal T_\mu(\bfJ)\preceq\bfQ_w^{-1}+\bfM_\mu$
for every $\bfJ\succ\bf0$ by~\eqref{eq:app_A_bound}. Hence the monotone
sequence in~\eqref{eq:app_monotone_iteration} is bounded above and converges
to a positive-definite fixed point of $\mathcal T_\mu$. By
Lemma~\ref{lem:app_riccati_fixed}, this fixed point is uniquely
$\bfJ(\bfM_\mu)$. Consequently,
\begin{equation}
    \widehat{\bfJ}
    \preceq
    \bfJ(\bfM_\mu).
    \label{eq:app_Jhat_fixed_comparison}
\end{equation}

\subsection*{D. Distortion Dominance}

By Lemma~\ref{lem:app_d_properties},
\begin{equation}
    D_N^\pi
    \geq
    d\!\left(
        \bar{\bfJ}_N^\pi
    \right).
    \label{eq:app_distortion_Jensen}
\end{equation}
Taking $N=N_j$ and using
\eqref{eq:app_D_subsequence} and~\eqref{eq:app_Jhat} gives
$\overline{D}(\pi)\geq d(\widehat{\bfJ})$. Since $d$ is decreasing in the Loewner
order,~\eqref{eq:app_Jhat_fixed_comparison} therefore yields
\begin{equation}
    \overline{D}(\pi)
    \geq
    d\!\left(
        \bfJ(\bfM_\mu)
    \right).
    \label{eq:app_D_lower_stationary}
\end{equation}

\subsection*{E. Stationary Policy Induced by the Limiting Occupation Measure}

Consider the stationary randomized Markov policy $\varphi$ obtained from
\eqref{eq:app_mu_disintegration_proof}, initialized according to its invariant
state distribution $\nu$. By Lemma~\ref{lem:app_occupation},
$(\bfZ_k,a_k)\sim\mu$ for every $k$. Hence
\begin{equation}
    \E^\varphi[
        r_{\rm G}(\bfZ_k,a_k)
    ]
    =
    R_\mu,
    \qquad
    \bfM_k^\varphi
    =
    \bfM_\mu,
    \qquad
    k\geq1.
\end{equation}
Therefore
$R_{\rm G}(\varphi)=R_\mu$, while its deterministic PCRB recursion is
\begin{equation}
    \bfJ_k^\varphi
    =
    \mathcal A(\bfJ_{k-1}^\varphi)
    +
    \bfM_\mu.
\end{equation}
Lemma~\ref{lem:app_riccati_fixed} gives
$\bfJ_k^\varphi\to\bfJ(\bfM_\mu)$, and continuity of $d$ together with
Ces\`aro convergence yields
\begin{equation}
    \overline{D}(\varphi)
    =
    d\!\left(
        \bfJ(\bfM_\mu)
    \right).
    \label{eq:app_D_phi}
\end{equation}
By Assumption~\ref{ass:mdp_regular}, these same long-run rate and
distortion values hold when $\varphi$ is initialized according to the
prescribed initial information-state distribution.

Combining this identity with
\eqref{eq:app_rate_dominance_intermediate} and
\eqref{eq:app_D_lower_stationary} gives
\begin{equation}
    R_{\rm G}(\varphi)\geq R_{\rm G}(\pi),
    \qquad
    \overline{D}(\varphi)\leq \overline{D}(\pi),
\end{equation}
which proves~\eqref{eq:stationary_dominance_rate} and
\eqref{eq:stationary_dominance_distortion}.

Since stationary randomized Markov policies are admissible causal policies,
one direction of~\eqref{eq:stationary_outer_problem} is immediate, while the
reverse direction follows from the dominance result above. Hence
\eqref{eq:stationary_outer_problem} holds.

\subsection*{F. Attainment}

It remains to establish that the stationary supremum is attained. Let
$\cK$ denote the weakly compact set of invariant state--action occupation
measures guaranteed by Assumption~\ref{ass:mdp_regular}. For
$\mu\in\cK$, define
\begin{equation}
    \bfM(\mu)
    \triangleq
    \int
        \bfG(\bfz,a)
        \,\mu(d\bfz,da)
    \label{eq:app_M_of_mu}
\end{equation}
and
\begin{equation}
    \mathcal L_\lambda(\mu)
    \triangleq
    \int
        r_{\rm G}(\bfz,a)
        \,\mu(d\bfz,da)
    -
    \lambda
    d
    \left(
        \bfJ(\bfM(\mu))
    \right).
    \label{eq:app_stationary_functional}
\end{equation}

We first establish continuity of the map
$\bfM\mapsto\bfJ(\bfM)$ on the set of average data-information matrices
induced by $\cK$. Let $\bfM_n\to\bfM$ and set
$\bfJ_n\triangleq\bfJ(\bfM_n)$. Since
\[
    \bfJ_n
    =
    \mathcal A(\bfJ_n)+\bfM_n
    \preceq
    \bfQ_w^{-1}+\bfM_n,
\]
and the matrices $\bfM_n$ are uniformly bounded by
Assumption~\ref{ass:mdp_regular}, the bounds of
Lemma~\ref{lem:app_J_compact} place $\{\bfJ_n\}$ in a compact subset of the
positive-definite cone.

For any convergent subsequence
$\bfJ_{n_j}\to\bfJ_\star$, continuity of $\mathcal A$ gives
\[
    \bfJ_\star
    =
    \mathcal A(\bfJ_\star)+\bfM.
\]
Uniqueness in Lemma~\ref{lem:app_riccati_fixed} therefore implies
$\bfJ_\star=\bfJ(\bfM)$. Since every convergent subsequence has the same
limit,
\begin{equation}
    \bfJ(\bfM_n)
    \longrightarrow
    \bfJ(\bfM).
    \label{eq:app_fixed_point_continuity}
\end{equation}

By Assumption~\ref{ass:mdp_regular},
\eqref{eq:app_fixed_point_continuity}, and continuity of $d$, the functional
$\mathcal L_\lambda$ is continuous on the weakly compact set $\cK$.
Weierstrass' theorem therefore gives
$\mu^\star\in\cK$ attaining its maximum. Disintegrate
\[
    \mu^\star(d\bfz,da)
    =
    \nu^\star(d\bfz)\varphi^\star(da|\bfz).
\]
The resulting stationary randomized Markov policy $\varphi^\star$,
initialized with $\nu^\star$, realizes $\mu^\star$ and satisfies
\[
    R_{\rm G}(\varphi^\star)
    -
    \lambda \overline{D}(\varphi^\star)
    =
    \mathcal L_\lambda(\mu^\star).
\]
By Assumption~\ref{ass:mdp_regular}, the same objective value is attained
under the prescribed initial information-state distribution.
Hence $\varphi^\star$ attains the stationary supremum in
\eqref{eq:stationary_outer_problem}, completing the proof.

\section{Proof of the Large-Block Gaussian Reduction}
\label{app:large_block_gaussian}

This appendix first proves~\eqref{eq:covariance_genie_bound} and then gives
the technical arguments underlying Propositions~\ref{prop:large_block_gaussian}
and~\ref{prop:gaussian_outer_bound_conf}.

\paragraph*{Proof of~\eqref{eq:covariance_genie_bound}}
Since
\[
    \bfQ_{1:N}
    \longrightarrow
    (\bfX_{1:N},\bfH_{c,1:N})
    \longrightarrow
    \bfY_{c,1:N},
\]
the conditional information-density chain rule gives
\begin{align}
    i(\bfX_{1:N};\bfY_{c,1:N}&\mid\bfH_{c,1:N})
    =
    i(\bfQ_{1:N};\bfY_{c,1:N}\mid\bfH_{c,1:N})
    \notag\\
    &\quad+
    i(\bfX_{1:N};\bfY_{c,1:N}
      \mid\bfH_{c,1:N},\bfQ_{1:N}).
\end{align}
The standard inequalities for limits in probability therefore imply
\begin{align}
    \underline R(\pi)
    &\leq
    \overline I_{Q,L}(\pi) \nonumber\\
    &+
    \rmp\text{-}\liminf_{N\to\infty}
    \frac{1}{NL}
    i(\bfX_{1:N};\bfY_{c,1:N}
      \mid\bfH_{c,1:N},\bfQ_{1:N}).
\end{align}
The standard lower-tail bound for information density
\cite{verdu_general,han_information_spectrum} gives
\begin{align}
    &\rmp\text{-}\liminf_{N\to\infty}
    \frac{1}{NL}
    i(\bfX_{1:N};\bfY_{c,1:N}
      \mid\bfH_{c,1:N},\bfQ_{1:N}) \nonumber\\
    &\leq
    \liminf_{N\to\infty}
    \frac{1}{NL}
    I(\bfX_{1:N};\bfY_{c,1:N}
      \mid\bfH_{c,1:N},\bfQ_{1:N}).
\end{align}
For the memoryless Gaussian communication channel, the tower property,
Gaussian maximum entropy, and concavity of $\log\det$ yield
\begin{align}
    &\frac{1}{NL}
    I(\bfX_{1:N};\bfY_{c,1:N}
      \mid\bfH_{c,1:N},\bfQ_{1:N})\nonumber\\
    &\leq
    \frac{1}{N}
    \sum_{k=1}^{N}
    \E^\pi
    \left[
        r_{\rm G}(\bfZ_k,a_k)
    \right].
\end{align}
Taking the limit inferior proves~\eqref{eq:covariance_genie_bound}.

The exact finite-block reduction of the sensing likelihood to the sufficient
statistics
$(\bfY_{s,k}\bfX_k^{\herm},\bfX_k\bfX_k^{\herm})$
was established in Lemma~\ref{lem:block_gram_sufficiency}. We now turn to the
large-block covariance limit.

\begin{lemma}[Large-block covariance sufficiency of the belief transition]
\label{lem:app_large_block_belief}
Consider an admissible family of within-block signaling laws satisfying
\eqref{eq:large_block_empirical_covariance} and the sensing normalization
\eqref{eq:large_block_sensing_normalization}. For any fixed predictive belief
$\Pi$ and selected covariance $\bfQ$, all such signaling laws induce the same
limiting Bayesian belief-transition kernel as $L\to\infty$.
\end{lemma}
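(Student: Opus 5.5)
The plan is to reduce the posterior update to a continuous function of the normalized sufficient statistics, and then show that the conditional law of those statistics has a limit that depends only on $(\Pi,\bfQ)$.

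By Lemma~\ref{lem:block_gram_sufficiency}, for fixed $\bfX_k$ the posterior is $\Pi'=\Phi_L(\Pi,\overline{\bfT}_{k,L},\widehat{\bfQ}_{k,L})$. Here $\Phi_L$ is the Bayes map with likelihood
\[
\exp\!\Big(\tfrac{L}{\sigma_{s,L}^2}\big[2\Re\,\mathrm{tr}(\bfH_s^{\herm}(\bfs)\overline{\bfT}_{k,L})-\mathrm{tr}(\bfH_s^{\herm}(\bfs)\bfH_s(\bfs)\widehat{\bfQ}_{k,L})\big]\Big).
\]
Under \eqref{eq:large_block_sensing_normalization}, the prefactor $L/\sigma_{s,L}^2$ tends to $1/\bar\sigma_s^2$. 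The maps $\Phi_L$ therefore converge, uniformly on compacts, to a limiting map $\Phi_\infty$ with prefactor $1/\bar\sigma_s^2$. By Assumption~\ref{ass:large_block_stationary}, $\Phi_\infty$ is continuous in $(\Pi,\overline{\bfT},\widehat{\bfQ})$ on the admissible set. The belief-transition kernel is the pushforward of the joint law of $(\bfs_k,\overline{\bfT}_{k,L},\widehat{\bfQ}_{k,L})$, with $\bfs_k\sim\Pi$, under $\Phi_L$. It therefore suffices to show that this joint law converges weakly to a limit depending only on $(\Pi,\bfQ)$. The extended continuous mapping theorem, together with uniform convergence $\Phi_L\to\Phi_\infty$, then gives convergence of the kernels.

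For the statistics, condition on the true state $\bfs^0\sim\Pi$, which is independent of $\bfX_k$ given $\cF_{k-1}$ by \eqref{eq:input_state_conditional_independence}, and on $\bfX_k$. Using \eqref{eq:normalized_cross_statistic}, each row of $\bfW_{k,L}$ is circular Gaussian with covariance $(\sigma_{s,L}^2/L)\,\widehat{\bfQ}_{k,L}^{T}$, or its conjugate depending on convention. The conditional characteristic function of $(\overline{\bfT}_{k,L},\widehat{\bfQ}_{k,L})$ is thus an explicit continuous function of $(\bfs^0,\widehat{\bfQ}_{k,L},\sigma_{s,L}^2/L)$. Since $\widehat{\bfQ}_{k,L}\to\bfQ$ in probability by \eqref{eq:large_block_empirical_covariance} and $\sigma_{s,L}^2/L\to\bar\sigma_s^2$, dominated convergence gives joint weak convergence
\[
(\bfs^0,\overline{\bfT}_{k,L},\widehat{\bfQ}_{k,L})\Rightarrow(\bfs^0,\ \bfH_s(\bfs^0)\bfQ+\bfW_\infty,\ \bfQ),
\]
where $\bfW_\infty$ has independent circular Gaussian rows with covariance $\bar\sigma_s^2\bfQ^{T}$. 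Bounded continuous functions of the Gaussian characteristic function suffice because the Gaussian family is continuous in its covariance, including at singular $\bfQ$, where the limit is degenerate in some directions but still well defined. This limit depends only on $\Pi$ and $\bfQ$, not on the particular signaling law, which gives the claim.

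The main obstacle is the continuity of the Bayes map on the boundary. When $\widehat{\bfQ}_{k,L}$ is near a singular $\bfQ$, or $\Pi$ has heavy tails, the normalizing constant must stay bounded away from zero so that the ratio defining $\Phi_L$ is continuous. I would handle this first by invoking the continuity hypothesis of Assumption~\ref{ass:large_block_stationary} directly. If a self-contained argument is wanted, I would instead bound the likelihood above and below on compact sets of $\bfs$, using boundedness of $\bfH_s$ for $d$ bounded away from zero, and apply tightness of $\Pi$. The uniformity over compact subsets of the state--covariance graph then follows from the uniform convergence in \eqref{eq:large_block_empirical_covariance} by the same argument.
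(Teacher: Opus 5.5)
Your proposal is correct and follows essentially the same route as the paper's proof. You reduce to the normalized statistics $(\widehat{\bfQ}_{k,L},\overline{\bfT}_{k,L})$ via Lemma~\ref{lem:block_gram_sufficiency}, use the conditional Gaussian structure of $\bfW_{k,L}$ with \eqref{eq:large_block_sensing_normalization} to get a limiting law determined only by $\bfQ$, and conclude with the continuity hypothesis of Assumption~\ref{ass:large_block_stationary} and the continuous mapping theorem. Your characteristic-function argument, extended continuous-mapping step for the $L$-dependent prefactor, and explicit averaging over $\bfs^0\sim\Pi$ are harmless refinements of steps the paper states tersely. The one step you leave implicit is the paper's final application of the fixed Gauss--Markov prediction kernel, which takes the limiting posterior to the next predictive belief; since this map is continuous and independent of the signaling law, it does not affect your conclusion.
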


\begin{IEEEproof}
Suppressing the block index $k$, let
$\bfT_L=\bfY_s\bfX^{\herm}$ and
$\bfG_L=\bfX\bfX^{\herm}$ denote the sufficient statistics identified in
Lemma~\ref{lem:block_gram_sufficiency}, and let
$\widehat{\bfQ}_L=L^{-1}\bfG_L$ and
$\overline{\bfT}_L=L^{-1}\bfT_L$ as in
\eqref{eq:normalized_sufficient_statistics}. Under the true target state
$\bfs_0$,
\begin{equation}
    \overline{\bfT}_L
    =
    \bfH_s(\bfs_0)\widehat{\bfQ}_L
    +
    \bfW_L,
    \qquad
    \bfW_L
    \triangleq
    \frac{1}{L}\bfZ_s\bfX^{\herm}.
\end{equation}
Conditional on $\bfX$, each row of $\bfW_L$ is circularly symmetric complex
Gaussian with covariance
\begin{equation}
    \frac{\sigma_{s,L}^{2}}{L}
    \widehat{\bfQ}_L
    \xrightarrow[L\to\infty]{\mathbb P}
    \bar{\sigma}_s^{2}\bfQ,
\end{equation}
where~\eqref{eq:large_block_empirical_covariance} and
\eqref{eq:large_block_sensing_normalization} were used. Together with
\eqref{eq:large_block_empirical_covariance}, this yields
\begin{equation}
    \left(
        \widehat{\bfQ}_L,
        \overline{\bfT}_L
    \right)
    \Rightarrow
    \left(
        \bfQ,
        \bfH_s(\bfs_0)\bfQ+\bfW_{\bfQ}
    \right),
    \label{eq:app_sufficient_statistic_limit}
\end{equation}
where the rows of $\bfW_{\bfQ}$ are independent circularly symmetric complex
Gaussian vectors with covariance $\bar{\sigma}_s^{2}\bfQ$.

Up to terms independent of $\bfs$, the block log-likelihood can be written as
\begin{align}
    \ell_L(\bfs)
    &=
    \frac{L}{\sigma_{s,L}^{2}}
    \Bigg[
        2\Re\!\left\{
            \tr{
                \bfH_s^{\herm}(\bfs)
                \overline{\bfT}_L
            }
        \right\}
        \notag\\
    &\hspace{2.2cm}
        -
        \tr{
            \bfH_s^{\herm}(\bfs)
            \bfH_s(\bfs)
            \widehat{\bfQ}_L
        }
    \Bigg].
\end{align}
By~\eqref{eq:large_block_sensing_normalization}, 
$L/\sigma_{s,L}^{2}\to1/\bar{\sigma}_s^{2}$. 
By Assumption~\ref{ass:large_block_stationary}, the posterior update is
continuous in the predictive belief and the normalized sufficient statistics.
The continuous-mapping theorem applied to
\eqref{eq:app_sufficient_statistic_limit} therefore shows that the posterior
law has a limit determined only by $(\Pi,\bfQ)$. Applying the common
Gauss--Markov prediction kernel to this posterior gives the same conclusion
for the next predictive belief. Hence all admissible signaling laws
implementing the same covariance $\bfQ$ induce the same limiting
belief-transition kernel.
\end{IEEEproof}

\begin{lemma}[Common stationary large-block limit]
\label{lem:app_common_invariant_large_L}
Fix a stationary randomized covariance policy. Let
$\mu_L^{A}$ and $\mu_L^{G}$ denote invariant occupation measures of the
information state and covariance action under, respectively, an arbitrary
admissible within-block signaling implementation and its conditionally
i.i.d. Gaussian implementation using the same covariance policy. Under
Assumption~\ref{ass:large_block_stationary},
\begin{equation}
    \mu_L^{A}
    \Rightarrow
    \mu_\infty,
    \qquad
    \mu_L^{G}
    \Rightarrow
    \mu_\infty,
    \label{eq:app_common_invariant_limit}
\end{equation}
where $\mu_\infty$ is the unique invariant occupation measure of the common
limiting covariance-controlled kernel.
\end{lemma}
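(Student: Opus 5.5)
The plan is a standard tightness-plus-identification argument for invariant measures of converging Markov kernels. I treat the two implementations $i\in\{A,G\}$ in parallel, since the argument uses only the properties listed in Assumption~\ref{ass:large_block_stationary} and Lemma~\ref{lem:app_large_block_belief}.

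\emph{Step 1: extract a limit.} Assumption~\ref{ass:large_block_stationary} states that the invariant state--covariance occupation measures $\{\mu_L^{i}\}_L$ are uniformly tight in $L$. By Prokhorov's theorem, every subsequence has a further subsequence $L_j$ with $\mu_{L_j}^{i}\Rightarrow\mu^{i}$ for some probability measure $\mu^{i}$.

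\emph{Step 2: show the limit is invariant for the limiting kernel.} Denote by $P_L^{i}(d\bfz'|\bfz,\bfQ)$ the controlled information-state kernel at block length $L$ under implementation $i$. Denote by $P_\infty(d\bfz'|\bfz,\bfQ)$ the common covariance-only kernel from Lemma~\ref{lem:app_large_block_belief}, composed with the independent channel law. Fix a bounded continuous $f$. Invariance at finite $L$ reads
\begin{equation}
\int f(\bfz)\,\mu_L^{i}(d\bfz,d\bfQ)=\int P_L^{i}f(\bfz,\bfQ)\,\mu_L^{i}(d\bfz,d\bfQ).
\end{equation}
I split the right-hand side into two pieces:
\begin{equation}
\int (P_L^{i}f-P_\infty f)\,d\mu_L^{i}+\int P_\infty f\,d\mu_L^{i}.
\end{equation}
For the first piece, tightness gives, for each $\varepsilon$, a compact set $K_\varepsilon$ of the state--covariance graph carrying all but $\varepsilon$ of the mass, uniformly in $L$. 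On $K_\varepsilon$, $P_L^{i}f\to P_\infty f$ uniformly by the uniform kernel convergence in Assumption~\ref{ass:large_block_stationary}. Off $K_\varepsilon$, the integrand is at most $2\|f\|_\infty$. Hence the first piece vanishes as $L\to\infty$.

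For the second piece, I need $P_\infty f$ to be bounded and continuous, i.e., that the limiting kernel is weakly continuous. This follows from the continuity of the Bayesian update in Assumption~\ref{ass:large_block_stationary}. The update is applied to the Gaussian limiting statistic in \eqref{eq:app_sufficient_statistic_limit}, whose law depends continuously on $(\Pi,\bfQ)$. Dominated convergence then gives continuity, and weak convergence gives $\int P_\infty f\,d\mu_{L_j}^{i}\to\int P_\infty f\,d\mu^{i}$. Passing to the limit, $\mu^{i}$ satisfies the flow constraint \eqref{eq:app_flow_constraint} for $P_\infty$.

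\emph{Step 3: show the limit uses the same policy.} The action marginal must be preserved: the limit has to be an occupation measure of the same stationary covariance policy $\varphi(d\bfQ|\bfz)$. If $\varphi$ is weakly continuous in $\bfz$, this follows by testing $\mu_L^{i}$ against $g(\bfz)h(\bfQ)$ with $g,h$ bounded continuous. Otherwise I would invoke the clause of Assumption~\ref{ass:large_block_stationary} covering stationary covariance policies ``arising as weak limits''. That clause guarantees that whatever policy the disintegration of $\mu^{i}$ produces still has a unique invariant measure under $P_\infty$.

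\emph{Step 4: identify the limit.} By the uniqueness clause of Assumption~\ref{ass:large_block_stationary}, $\mu^{A}=\mu^{G}=\mu_\infty$. Since every subsequence has a further subsequence converging to the same $\mu_\infty$, the full sequences converge, which gives \eqref{eq:app_common_invariant_limit}.

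I expect the main obstacle to be Step 2's interchange of limits. Kernel convergence is only uniform on compacts, and weak continuity of $P_\infty$ must be extracted from Lemma~\ref{lem:app_large_block_belief}. I would handle both with the $K_\varepsilon$ truncation above, taking care to work on the admissible graph, where Polishness from Assumption~\ref{ass:mdp_regular} justifies Prokhorov's theorem and the disintegration.
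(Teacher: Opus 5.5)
Your Steps~1, 2 and~4 follow the paper's proof. You extract a weak limit by uniform tightness, pass to the limit in $\int f\,d\mu_L^{i}=\int P_L^{i}f\,d\mu_L^{i}$ using compact-uniform kernel convergence plus tightness, and conclude with the uniqueness clause of Assumption~\ref{ass:large_block_stationary} and the subsequence principle. Your check that $P_\infty f$ is bounded and continuous supplies a step the paper leaves implicit.

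The gap is the ``otherwise'' branch of Step~3. The identity from Step~2 constrains only the state marginal. Writing $\mu^{i}=\nu^{i}\otimes\varphi^{i}$, it says that $\nu^{i}$ is invariant for $P_\infty$ under the disintegrated policy $\varphi^{i}$. The uniqueness clause, including its part about policies arising as weak limits, then only says that $\mu^{i}$ is the unique invariant occupation measure of $\varphi^{i}$. That coincides with $\mu_\infty$, the invariant measure of the \emph{fixed} policy $\varphi$, only if $\varphi^{i}=\varphi$ holds $\nu^{i}$-a.e. This is exactly what the branch was meant to prove, so Step~4 does not follow.

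The failure is not merely formal. Take state space $[0,1]$, actions $\{\bfQ_1,\bfQ_2\}$, and $\varphi(z)=\bfQ_1$ for $z>0$, $\varphi(0)=\bfQ_2$. Let $P_\infty(z,\bfQ_1)=\delta_{z/2}$, $P_\infty(z,\bfQ_2)=\delta_0$, and $P_L(z,\bfQ_1)=\delta_{z/2+1/(2L)}$, $P_L(z,\bfQ_2)=\delta_{1/L}$. The kernels converge uniformly and tightness is trivial. Every policy has a unique invariant measure under $P_\infty$, since the next state is always at most half the current one. Yet the unique invariant measure at block length $L$ is $\delta_{(1/L,\bfQ_1)}\Rightarrow\delta_{(0,\bfQ_1)}\neq\delta_{(0,\bfQ_2)}=\mu_\infty$.

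The paper's proof makes the same identification without comment (``$\mu$ is invariant for the limiting process \ldots\ uniqueness implies $\mu=\mu_\infty$''). What is actually needed is an extra hypothesis, such as weak continuity of $\bfz\mapsto\varphi(\cdot|\bfz)$. Continuity off a set that is null for every subsequential limit of the state marginals $\nu_L^{i}$ would also suffice. Under either, your product test functions $g(\bfz)h(\bfQ)$ give $\mu^{i}=\nu^{i}\otimes\varphi$, and Steps~2 and~4 then complete the argument.
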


\begin{IEEEproof}
Consider an arbitrary subsequence of $\{\mu_L^{A}\}$. Uniform tightness in
Assumption~\ref{ass:large_block_stationary} yields a further subsequence, not
relabeled, such that
$\mu_L^{A}\Rightarrow\mu$ for some probability measure $\mu$.

For every bounded continuous function $f$ of the information state,
invariance gives
\begin{equation}
    \int
        f(\bfz)
        \,\mu_L^{A}(d\bfz,d\bfQ)
    =
    \int
        P_L^{A}f(\bfz,\bfQ)
        \,\mu_L^{A}(d\bfz,d\bfQ),
    \label{eq:app_large_L_invariance}
\end{equation}
where $P_L^{A}$ denotes the controlled information-state kernel. By
Lemma~\ref{lem:app_large_block_belief} and
Assumption~\ref{ass:large_block_stationary}, these kernels converge uniformly
on compact subsets to the common covariance-controlled kernel $P_\infty$.
Uniform tightness therefore permits passage to the limit in
\eqref{eq:app_large_L_invariance}, giving
\begin{equation}
    \int
        f(\bfz)
        \,\mu(d\bfz,d\bfQ)
    =
    \int
        P_\infty f(\bfz,\bfQ)
        \,\mu(d\bfz,d\bfQ).
\end{equation}
Thus, $\mu$ is invariant for the limiting process. Uniqueness in
Assumption~\ref{ass:large_block_stationary} implies
$\mu=\mu_\infty$. Since every convergent subsequence has the same limit,
$\mu_L^{A}\Rightarrow\mu_\infty$. The identical argument applies to the
conditionally Gaussian implementation, proving
\eqref{eq:app_common_invariant_limit}.
\end{IEEEproof}

\begin{IEEEproof}[Proof of Proposition~\ref{prop:large_block_gaussian}]
Fix a stationary randomized covariance policy and compare an arbitrary
admissible signaling implementation $A$ with its conditionally i.i.d.
Gaussian implementation $G$. By
Lemma~\ref{lem:app_common_invariant_large_L}, both implementations converge
to the same invariant state--covariance occupation measure $\mu_\infty$.

Define
\begin{equation}
    \bfM_L^{i}
    \triangleq
    \int
        \bfJ_D(\Pi,\bfQ)
        \,\mu_L^{i}(d\bfz,d\bfQ),
    \qquad
    i\in\{A,G\}.
\end{equation}
By Assumption~\ref{ass:mdp_regular} and
\eqref{eq:app_common_invariant_limit},
\begin{equation}
    \left\|
        \bfM_L^{G}-\bfM_L^{A}
    \right\|
    \longrightarrow0.
\end{equation}
Continuity of the fixed-point map established in
\eqref{eq:app_fixed_point_continuity} then gives
\begin{equation}
    \left\|
        \bfJ(\bfM_L^{G})
        -
        \bfJ(\bfM_L^{A})
    \right\|
    \longrightarrow0.
\end{equation}
Since $d$ is continuous on the uniformly positive-definite set of
Lemma~\ref{lem:app_J_compact},
\begin{equation}
    D_{{\rm PCRB},L}^{G}
    -
    D_{{\rm PCRB},L}^{A}
    \longrightarrow0,
\end{equation}
which proves~\eqref{eq:large_block_pcrb_equality}.

It remains to compare communication performance. Condition on a
communication-channel realization $\bfH_c$ and covariance $\bfQ$. Let
\begin{equation}
    \bfQ_\ell
    \triangleq
    \E\!\left[
        \bfx_\ell\bfx_\ell^{\herm}
        \,\middle|\,
        \bfH_c,\bfQ
    \right],
    \qquad
    \frac{1}{L}
    \sum_{\ell=1}^{L}\bfQ_\ell
    =
    \bfQ.
\end{equation}
For the memoryless complex Gaussian communication channel,
\begin{align}
    \frac{1}{L}
    I(\bfX;\bfY_c\mid\bfH_c,\bfQ)
    &\leq
    \frac{1}{L}
    \sum_{\ell=1}^{L}
    I(
        \bfx_\ell;
        \bfy_{c,\ell}
        \mid
        \bfH_c,\bfQ
    )
    \notag\\
    &\leq
    \frac{1}{L}
    \sum_{\ell=1}^{L}
    \log_2\det\!\left(
        \bfI+
        \sigma_c^{-2}
        \bfH_c
        \bfQ_\ell
        \bfH_c^{\herm}
    \right)
    \notag\\
    &\leq
    \log_2\det\!\left(
        \bfI+
        \sigma_c^{-2}
        \bfH_c
        \bfQ
        \bfH_c^{\herm}
    \right).
    \label{eq:app_block_gaussian_bound}
\end{align}
The first inequality follows from the memoryless channel structure and
subadditivity of output differential entropy. The second follows from the
Gaussian maximum-entropy property, and the third follows from concavity of
$\log\det$. The right-hand side of
\eqref{eq:app_block_gaussian_bound} is
$r_{\rm G}(\bfz,\bfQ)$ and is attained by the conditionally i.i.d. Gaussian
implementation with covariance $\bfQ$.

Consequently,
\begin{align}
    R_{{\rm com},L}^{A}
    &\leq
    \int
        r_{\rm G}(\bfz,\bfQ)
        \,\mu_L^{A}(d\bfz,d\bfQ),
    \label{eq:app_rate_A_bound}\\
    R_{{\rm com},L}^{G}
    &=
    \int
        r_{\rm G}(\bfz,\bfQ)
        \,\mu_L^{G}(d\bfz,d\bfQ).
    \label{eq:app_rate_G_equality}
\end{align}
By Lemma~\ref{lem:app_common_invariant_large_L},
Assumption~\ref{ass:mdp_regular}, and continuity of $r_{\rm G}$,
\begin{equation}
    \int
        r_{\rm G}(\bfz,\bfQ)
        \,\mu_L^{G}(d\bfz,d\bfQ)
    -
    \int
        r_{\rm G}(\bfz,\bfQ)
        \,\mu_L^{A}(d\bfz,d\bfQ)
    \longrightarrow0.
\end{equation}
Combining this relation with
\eqref{eq:app_rate_A_bound}--\eqref{eq:app_rate_G_equality} yields
\begin{equation}
    \liminf_{L\to\infty}
    \left(
        R_{{\rm com},L}^{G}
        -
        R_{{\rm com},L}^{A}
    \right)
    \geq0,
\end{equation}
which proves~\eqref{eq:large_block_rate_dominance}.
\end{IEEEproof}

\begin{IEEEproof}[Proof of Proposition~\ref{prop:gaussian_outer_bound_conf}]
For each block length $L$, let $\cK_L^{A}$ denote the set of invariant
state--covariance occupation measures generated by admissible stationary
block-signaling policies, and let
$\cK_L^{G}\subseteq\cK_L^{A}$ denote the subset generated by conditionally
i.i.d. Gaussian signaling.

For an invariant occupation measure $\mu$, define
\begin{equation}
    \bfM(\mu)
    \triangleq
    \int
        \bfJ_D(\Pi,\bfQ)
        \,\mu(d\bfz,d\bfQ)
\end{equation}
and
\begin{align}
    \mathcal J_\lambda(\mu)
    &\triangleq
    \int
        r_{\rm G}(\bfz,\bfQ)
        \,\mu(d\bfz,d\bfQ)
    -
    \lambda
    d\!\left(
        \bfJ(\bfM(\mu))
    \right).
\end{align}
Then
\begin{equation}
    V_{\lambda,L}^{A}
    =
    \sup_{\mu\in\cK_L^{A}}
    \mathcal J_\lambda(\mu),
    \qquad
    V_{\lambda,L}^{G}
    =
    \sup_{\mu\in\cK_L^{G}}
    \mathcal J_\lambda(\mu),
\end{equation}
and
$V_{\lambda,L}^{G}\leq V_{\lambda,L}^{A}$.

Choose $L_n\uparrow\infty$ such that
$V_{\lambda,L_n}^{A}-V_{\lambda,L_n}^{G}$ converges to the superior limit of
$V_{\lambda,L}^{A}-V_{\lambda,L}^{G}$. For each $n$, select
$\mu_n^{A}\in\cK_{L_n}^{A}$ satisfying
\begin{equation}
    \mathcal J_\lambda(\mu_n^{A})
    \geq
    V_{\lambda,L_n}^{A}
    -
    \frac{1}{n}.
\end{equation}
By uniform tightness in
Assumption~\ref{ass:large_block_stationary}, after passing to a subsequence,
$\mu_n^{A}\Rightarrow\mu_\infty$. The invariance argument of
Lemma~\ref{lem:app_common_invariant_large_L} shows that $\mu_\infty$ is
invariant for the common limiting covariance-controlled kernel. Disintegrate
it as
\begin{equation}
    \mu_\infty(d\bfz,d\bfQ)
    =
    \nu_\infty(d\bfz)
    \kappa_\infty(d\bfQ|\bfz),
\end{equation}
where $\kappa_\infty$ is a stationary randomized covariance policy.

Under Assumptions~\ref{ass:mdp_regular} and
\ref{ass:large_block_stationary}, weak convergence preserves the averages of
$r_{\rm G}$ and $\bfJ_D$, while
\eqref{eq:app_fixed_point_continuity} establishes continuity of
$\bfM\mapsto\bfJ(\bfM)$ and $d$ is continuous on the uniformly
positive-definite set of Lemma~\ref{lem:app_J_compact}. Consequently,
\begin{equation}
    \mathcal J_\lambda(\mu_n^{A})
    \longrightarrow
    \mathcal J_\lambda(\mu_\infty).
\end{equation}

Now implement the limiting covariance policy $\kappa_\infty$ by conditionally
i.i.d. Gaussian signaling at each block length $L$, and let
$\mu_{\infty,L}^{G}\in\cK_L^{G}$ be an invariant occupation measure of the
resulting process. By
Lemma~\ref{lem:app_common_invariant_large_L} and the uniqueness condition in
Assumption~\ref{ass:large_block_stationary},
\begin{equation}
    \mu_{\infty,L}^{G}
    \Rightarrow
    \mu_\infty,
    \qquad
    L\to\infty.
\end{equation}
Hence,
\begin{equation}
    \mathcal J_\lambda
    \left(
        \mu_{\infty,L}^{G}
    \right)
    \longrightarrow
    \mathcal J_\lambda(\mu_\infty).
\end{equation}

By near-optimality of $\mu_n^{A}$ and the definition of
$V_{\lambda,L_n}^{G}$,
\begin{align}
    V_{\lambda,L_n}^{A}
    -
    V_{\lambda,L_n}^{G}
    &\leq
    \mathcal J_\lambda(\mu_n^{A})
    -
    \mathcal J_\lambda
    \left(
        \mu_{\infty,L_n}^{G}
    \right)
    +
    \frac{1}{n}.
\end{align}
The right-hand side converges to zero. Since the selected sequence realizes
the superior limit,
\eqref{eq:optimized_gaussian_limsup} follows.

Because
$V_{\lambda,L}^{G}\leq V_{\lambda,L}^{A}$
for every $L$,~\eqref{eq:optimized_gaussian_equality} follows as well.
Finally, applying the argument of
\eqref{eq:finite_block_outer_bound} to the same admissible large-block
signaling class and using
Proposition~\ref{prop:stationary_suffices_conf} gives
\begin{equation}
    J_{\lambda,L}^{\star,\rm LB}
    \leq
    V_{\lambda,L}^{A}
    +
    \epsilon_{Q,L}^{\rm LB}.
\end{equation}
Combining this inequality with
\eqref{eq:optimized_gaussian_equality} and
Assumption~\ref{ass:vanishing_covariance_information} yields
\eqref{eq:gaussian_outer_bound_conf}, completing the proof.
\end{IEEEproof}

\section{EKF Details}
\label{app:EKF}

This appendix summarizes the EKF used for the achievable frontier. The state is
$\bfs_k=[\theta_k,\dot\theta_k,d_k,\dot d_k]^T$, and the dynamics and sensing
model are those in Sections~\ref{sec:system_model} and~\ref{sec:simulations}.

Given $(\hat{\bfs}_{k-1|k-1},\bfP_{k-1|k-1})$, the EKF prediction is
\begin{align}
    \hat{\bfs}_{k|k-1} &= \bfF\hat{\bfs}_{k-1|k-1},\\
    \bfP_{k|k-1} &= \bfF\bfP_{k-1|k-1}\bfF^{T}+\bfQ_w .
\end{align}

For the block observation in~\eqref{eq:block_sensing_numerical}, let
$\bfB_k=\bfQ_k^{1/2}$ denote the Hermitian positive semidefinite square root
of $\bfQ_k$, and define
\begin{equation}
    \widetilde{\bfh}_k(\bfs)
    \triangleq
    \operatorname{vec}\!\left(\bfH_s(\bfs)\bfB_k\right),
\end{equation}
where $\operatorname{vec}(\cdot)$ stacks the columns of its matrix argument.
The derivatives with respect to $\theta_k$ and $d_k$ are
\begin{align}
    \frac{\partial \widetilde{\bfh}_k}{\partial \theta_k}
    &=
    \operatorname{vec}\!\left(
        \alpha(d_k)\bfM_\theta(\theta_k)\bfB_k
    \right), \\
    \frac{\partial \widetilde{\bfh}_k}{\partial d_k}
    &=
    \operatorname{vec}\!\left(
        \alpha(d_k)\bfM_d(\theta_k,d_k)\bfB_k
    \right),
\end{align}
where
\begin{align}
    \bfM_\theta(\theta)
    &\triangleq
    \dot{\bfa}(\theta)\bfa^{\herm}(\theta)
    +
    \bfa(\theta)\dot{\bfa}^{\herm}(\theta), \\
    [\bfM_\theta(\theta)]_{kl}
    &=
    j2\pi\frac{\Delta}{\lambda_c}(k-l)\cos(\theta)
    e^{j2\pi\frac{\Delta}{\lambda_c}(k-l)\sin(\theta)}, \\
    \bfM_d(\theta,d)
    &\triangleq
    \left(-\frac{2}{d}-j\frac{2\pi}{\lambda_c}\right)
    \bfa(\theta)\bfa^{\herm}(\theta).
\end{align}
The derivatives with respect to the velocity components are zero.

Since the observation is complex and the state is real, the EKF uses the
equivalent real-valued model. Let $\widetilde{\bfH}_{J,k}$ denote the complex
Jacobian of $\widetilde{\bfh}_k(\bfs)$ with respect to $\bfs$, evaluated at
$\hat{\bfs}_{k|k-1}$, with zero columns for the velocity components. Define
\begin{equation}
    \bfH_R
    =
    \begin{bmatrix}
        \Re\{\widetilde{\bfH}_{J,k}\}\\
        \Im\{\widetilde{\bfH}_{J,k}\}
    \end{bmatrix},
    \qquad
    \bfR
    =
    \frac{\sigma_{s,\mathrm{blk}}^2}{2}
    \bfI_{2N_{r,s}N_t}.
\end{equation}
Similarly,
\begin{equation}
    \bfy_R
    =
    \begin{bmatrix}
        \Re\{\operatorname{vec}(\widetilde{\bfY}_{s,k})\}\\
        \Im\{\operatorname{vec}(\widetilde{\bfY}_{s,k})\}
    \end{bmatrix},
    \qquad
    \bfh_R
    =
    \begin{bmatrix}
        \Re\{\widetilde{\bfh}_k(\hat{\bfs}_{k|k-1})\}\\
        \Im\{\widetilde{\bfh}_k(\hat{\bfs}_{k|k-1})\}
    \end{bmatrix}.
\end{equation}
The Kalman gain and state update are
\begin{align}
    \bfS_k &= \bfH_R\bfP_{k|k-1}\bfH_R^{T}+\bfR,\\
    \bfK_k &= \bfP_{k|k-1}\bfH_R^{T}\bfS_k^{-1},\\
    \hat{\bfs}_{k|k} &= \hat{\bfs}_{k|k-1}+\bfK_k(\bfy_R-\bfh_R).
\end{align}
The covariance is updated using the Joseph form
\begin{equation}
    \bfP_{k|k}
    =
    (\bfI-\bfK_k\bfH_R)\bfP_{k|k-1}(\bfI-\bfK_k\bfH_R)^{T}
    +
    \bfK_k\bfR\bfK_k^{T}.
\end{equation}
The achievable distortion reported in the simulations is not the EKF covariance;
it is the true Monte Carlo angle and range tracking error in
\eqref{eq:achievable_distortion_sim}.

\bibliographystyle{IEEEtran}
\bibliography{biblio}

\end{document}